\newtheorem{theorem}{Theorem}[section]
\newtheorem{corollary}[theorem]{Corollary}
\newtheorem{proposition}[theorem]{Proposition}
\newtheorem{remark}[theorem]{Remark}
\newtheorem{definition}[theorem]{Definition}
\newtheorem{example}[theorem]{Example}
\numberwithin{equation}{section}
\numberwithin{figure}{section}
\newcommand{\BM}{{\mathbb B}}
\newcommand{\CM}{{\mathbb C}}
\newcommand{\NM}{{\mathbb N}}
\newcommand{\RM}{{\mathbb R}}
\newcommand{\ZM}{{\mathbb Z}}
\newcommand{\FM}{{\mathbb F}}
\newcommand{\Aa}{{\mathcal A}}
\newcommand{\Bb}{{\mathcal B}}
\newcommand{\Ff}{{\mathcal F}}
\newcommand{\Gg}{{\mathcal G}}
\newcommand{\Ss}{{\mathcal S}}
\newcommand{\Tt}{{\mathcal T}}
\newcommand{\Rr}{{\mathcal R}}
\newcommand{\Cc}{{\mathcal C}}
\newcommand{\Hh}{{\mathcal H}}
\begin{document}
\title[Cayley-Crystals]{Spectral and Combinatorial Aspects of Cayley-Crystals}

\author{Fabian R. Lux}

\address{Department of Physics and
\\ Department of Mathematical Sciences 
\\Yeshiva University 
\\New York, NY 10016, USA \\
\href{mailto:fabian.lux@yu.edu}{fabian.lux@yu.edu}}

\author{Emil Prodan}

\address{Department of Physics and
\\ Department of Mathematical Sciences 
\\Yeshiva University 
\\New York, NY 10016, USA \\
\href{mailto:prodan@yu.edu}{prodan@yu.edu}}

\date{\today}

\begin{abstract} 
Owing to their interesting spectral properties, the synthetic crystals over lattices other than regular Euclidean lattices, such as hyperbolic and fractal ones, have attracted renewed attention, especially from materials and meta-materials research communities. They can be studied under the umbrella of quantum dynamics over Cayley graphs of finitely generated groups. In this work, we investigate numerical aspects related to the quantum dynamics over such Cayley graphs. Using an algebraic formulation of the ``periodic boundary condition'' due to L\"uck [Geom. Funct. Anal. {\bf 4}, 455–481 (1994)], we devise a practical and converging numerical method that resolves the true bulk spectrum of the Hamiltonians. Exact results on the matrix elements of the resolvent, derived from the combinatorics of the Cayley graphs, give us the means to validate our algorithms and also to obtain new combinatorial statements. Our results open the systematic research of quantum dynamics over Cayley graphs of a very large family of finitely generated groups, which includes the free and Fuchsian groups.
\end{abstract}

\thanks{This work was supported by the U.S. National Science Foundation through the grants DMR-1823800 and CMMI-2131760.}

\maketitle

{\scriptsize \tableofcontents}

\setcounter{tocdepth}{1}

\section{Introduction and Main Statements}
\label{Sec:Introduction}

Given a discrete group $G$ and a finite subset $S \subset G$, its associated Cayley digraph $\Cc(G,S)$ is the colored graph with vertex set $G$ and directed edges from $g$ to $sg$ for $g \in G$ and $s \in S$, with one distinct color for each edge produced by $s \in S$. If $S$ is the set of generators in a standard presentation of $G$, then we are talking about the standard Cayley diagraph of $G$, which we denote by $\Cc(G)$. The standard Cayley graph can be often represented as a geometric graph rendered in our physical space and, when this is possible, it encodes the entire algebraic information on $G$ in a geometric fashion. Quite often, difficult algebraic and analytic problems can be solved using the geometries of the Cayley graphs \cite{MeierBook}. For example, in this work, we examine the close relation between the spectral characteristics of the quantum dynamics over a Cayley graph and the combinatorics of the graph. In the same time, Cayley graphs are an abundant source of interesting lattices that can be used to systematically explore the world of discrete patterns \cite{KollarCMP2020,BoettcherPRB2022}. In materials science, this is interesting because the current technologies are at a point where fabrications of such lattices are feasible at many different length scales. As such, we can observe and learn from the synthetic quantum or classical dynamics set in play by the degrees of freedom of these new physical systems. In practice, this amounts to placing quantum resonators at the nodes or edges of the lattices and coupling these resonators in a fashion that respects the symmetries of the graphs. For example, in a natural crystal, the quantum resonators are the atoms themselves and the crystal's sites are determined by a decorated Cayley graph generated from an Euclidean space group. Synthetic hyperbolic crystals can and have been fabricated by rendering superconducting circuits over a Cayley graph generated from a Fuchsian group \cite{KollarNature2019}. Classical resonators, rendered and weakly coupled over a Cayley graph, generate a dynamics that is akin to a quantum dynamics (see \cite{LenggenhagerNatComm2022,ZhangNatComm2022,RuzzeneEM2021,ChengPRL2022} for examples relevant to the present context). It was also shown \cite{ProdanArxiv2022} that any quantum dynamics over a Cayley graph can be reproduced with stochastic dynamics over a decorated version of the same graph.

The Hilbert space for quantum dynamics over a Cayley graph of a discrete group $G$ is $\ell^2(G)$. It accepts the orthonormal basis $|g\rangle$, $g \in G$, and the group $G$ acts on $\ell^2(G)$ via either the right regular representation, $\pi_R(g) |g'\rangle = |g' g^{-1}\rangle$, or the left regular representation $\pi_L(g) |g'\rangle = |gg'\rangle$. The generic Hamiltonians take the form $H = \sum_{g,g' \in G} w_{g',g} \, |g' \rangle \langle g |$ and self-adjointness imposes the constraint $w_{g,g'} = w_{g',g}^\ast$. Furthermore, if the quantum dynamics is invariant against the right action of the group $G$, the coefficients of the  Hamiltonian must display the additional constraint $w_{g'h,gh} = w_{g',g}$ for any $g$, $g'$ and $h$ from $G$. As we shall see in section~\ref{Sec:QDynamics}, any such $H$ can be generated as the left regular representation of the element $\sum_{q \in G} w_{q,1} \cdot q$ from the group algebra $\CM G$. This is how the group algebra enters the picture. The case $G =\ZM$, the group of integers, supplies simple examples. Indeed, the Cayley graph of $\ZM$ is the regular 1-dimensional lattice and a translational invariant Hamiltonian takes the form 
\begin{equation}\label{Eq:HZ}
H = \sum\nolimits_{k \in \ZM} \sum\nolimits_{n \in \ZM} \big (t_k |n+k\rangle \langle n | + t^\ast_k |n\rangle \langle n+k|\big),
\end{equation}
where the sum over $k$ is finite. Using the shift operator $T = \sum_{n\in \ZM} |n+1\rangle \langle n|$, this Hamiltonian can be conveniently written as $H = \sum\nolimits_{k \in \ZM}\big ( t_k T^{k} + t^\ast_k T^{-k} \big )$ and, since $T|n\rangle =|n+1\rangle$, we see that $T$ is just the left representation of 1, the generator of $\ZM$. We will use this example below to demonstrate how periodic boundary conditions can be formulated in an algebraic manner.

In the era of topological quantum materials, the most interesting feature of a quantum dynamics is the structure of the gaps in its resonant spectrum, which will be referred to as bulk spectral gaps if the sample is infinite. Indeed, with or without synthetic gauge fields, these gaps can carry topological invariants, typically detected by the $K$-theory of the underlying algebra of physical observables \cite{Bellissard1986,Bellissard1995,KellendonkRMP95}. Nontrivial topological invariants set in action interesting physics in the presence of lattice defects \cite{ProdanJPA2021}. Therefore, one of the important tasks when studying the quantum dynamics over a graph is resolving the gaps in its bulk spectrum. In a laboratory or in a computer simulation, one can only deal with a finite graph and, without proper boundary conditions, the resonant spectrum is contaminated by boundary modes and the detection of bulk spectral gaps may become impossible. This is particularly acute for the free and Fuchsian groups, as well as for any other non-amenable group, for which the ratio between the sites located at the boundary and the total number of sites of a truncated Cayley graph does not decrease to zero with the size of the truncated graph (see the estimate in Remark~\ref{Re:Ratio}). In such cases, it is quite obvious that specialized boundary conditions are needed in order to converge the spectrum or the spectral density of a finite quantum model to its thermodynamic limit, and this is the subject of our paper.

\begin{figure}[t]
\center
\includegraphics[width=0.5\textwidth]{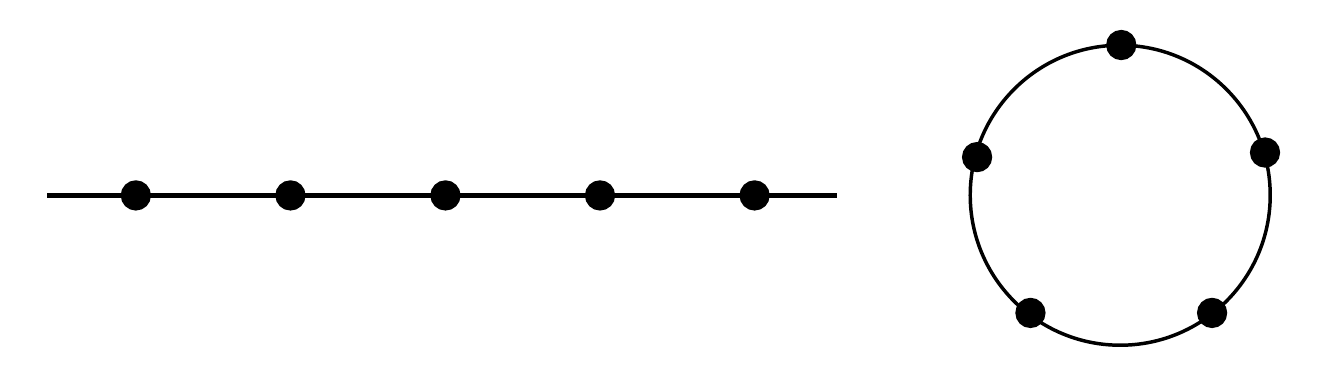}\\
  \caption{\small Left: After truncation of an infinite chain, dangling bonds are produced at the edges. Right: The dangling bonds are removed by forming a bond between the edges.
}
 \label{Fig:DB1}
\end{figure}

For a quick orientation, let us recall first the context of an infinite 1-dimensional chain. Here, a truncation produces dangling bonds, depicted in Fig.~\ref{Fig:DB1} as the broken connections at the ends of the finite chain. Typically, these dangling bonds cause a contamination of the spectrum, which disappears once they are removed by closing the finite chain in the geometry shown in the right panel of Fig.~\ref{Fig:DB1}. Furthermore, with such closed geometries, the thermodynamic limit of an intensive variables is achieved exponentially fast with the size of the truncation. This is a heuristic principle, well known to the physicists and chemists, but the textbooks rarely explain the real mechanism behind it. Now, let us try to remove the dangling bonds resulted from a truncation of a Cayley graph of a non amenable group, such as the free non-abelian group $\FM_2$ with two generators. A section of its Cayley graph is shown in Fig.~\ref{Fig:DB2}, where the reader can see that a truncation that includes all words of length up to three produces a large number of dangling bonds. This number increases exponentially with the size of the truncation. Now, as explained in section~\ref{Sec:CGraphs}, the actions of the two generators of $\FM_2$ produce the flows on the Cayley graph shown in Fig.~\ref{Fig:DB2} by the blue and green arrows. The orbits of the flows produce distinct oriented paths, as the ones highlighted in Fig.~\ref{Fig:DB2}, which can be closed as shown. Note that this procedure can be repeated for any discrete group and, in the case of $\ZM^d$, it generates the ordinary periodic boundary conditions. Now, in the case of $\FM_2$, while this procedure leaves no dangling bonds behind and it can be repeated for increasing sizes of truncations, a calculation of the spectral density function for the graph's adjacency operator, presented in section~\ref{Sec:Numerics}, shows that it does not converge to the correct result with the size of the truncation. In contradistinction, with the specialized periodic boundary conditions derived in subsection~\ref{Sec:ConvF2}, the spectral density function rapidly converges to the correct result. The removal of the dangling bonds by this specialized periodic boundary condition is very complex and would be difficult to guess without the tools developed in this work. Despite the mentioned complexity, we want to assure the reader that our procedure is algorithmic and it can be efficiently implemented on a computer. Such computer codes are supplied in \cite{FabianGH}, together with proper documentation.

\begin{figure}[t]
\center
\includegraphics[width=0.3\textwidth]{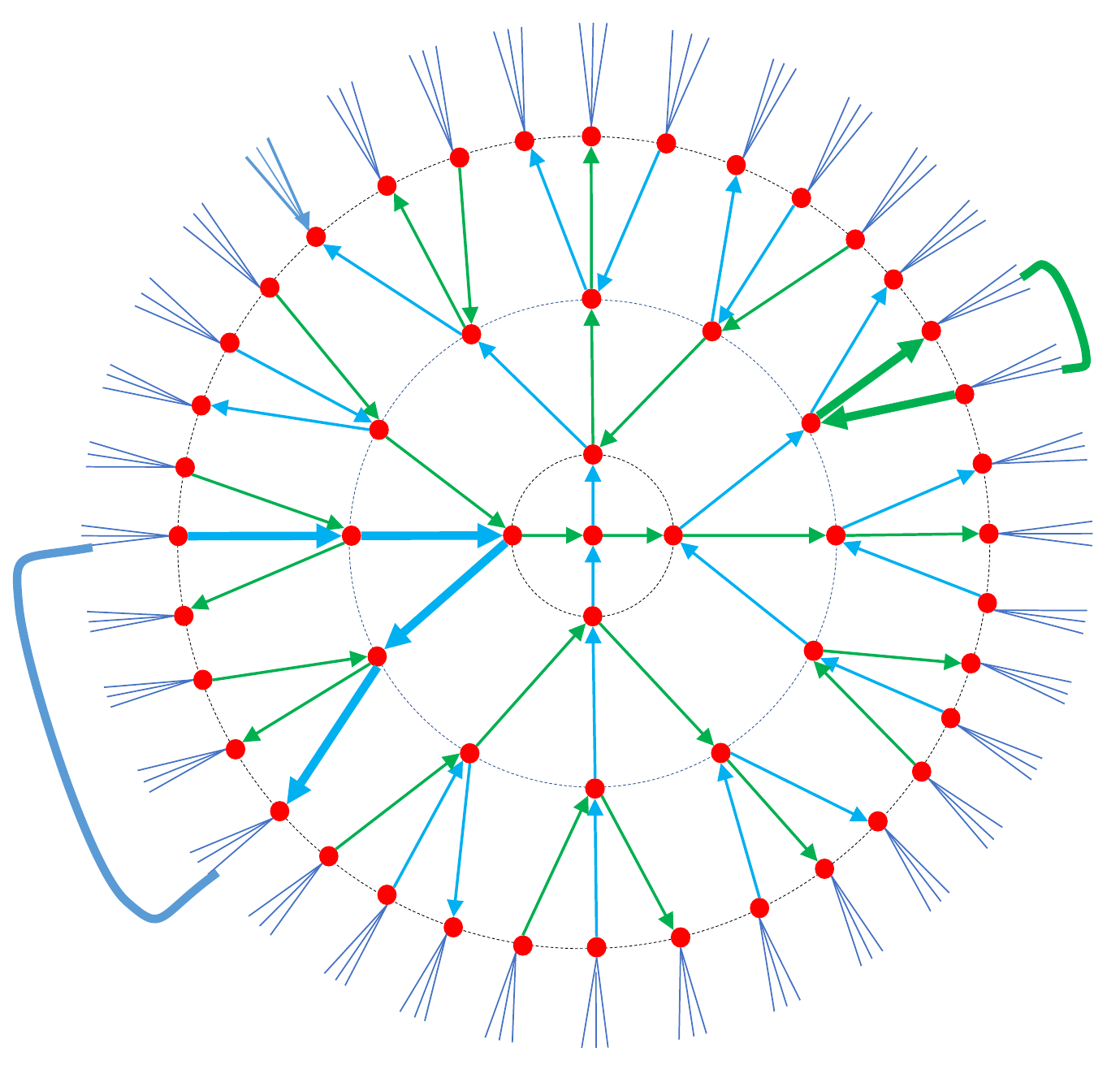}\\
  \caption{\small Truncation of the Cayley graph of $\FM_2$ produces a large number of dangling bonds. They can be removed by closing the orbits generated by the actions of the generators on the graph.
}
 \label{Fig:DB2}
\end{figure}

Even before approaching this problem of convergence, there is the fundamental question about the nature of the bulk spectrum. Indeed, while the generators of the quantum dynamics over a Cayley graph are drawn from the group algebra $\CM G$, in order to do spectral theory and functional calculus, this algebra needs to be completed to a $C^\ast$-algebra. For non-amenable groups, there are two distinct options, the full and the reduced group $C^\ast$-algebras \cite[Sec.~VII2]{DavidsonBook}. We recall that, for a non-amenable group, a functorial relation exists only between the group and its full $C^\ast$-algebra. For example, the irreducible representations of the group are in one-to-one relation with the representations of the full group $C^\ast$-algebra \cite[p.~184]{DavidsonBook}, while the representations of the reduced group $C^\ast$-algebra are generally very poor. As such, the spectrum of an element from $\CM G$ depends on which of these two algebras is used (see Example~\ref{Ex:SpurSpec}). The dynamics on $\ell^2(G)$ produced by the Hamiltonians mentioned in second paragraph, and observed in the typical experiments, takes place in the reduced group $C^\ast$-algebra, as canonically  embedded as a sub-algebra of $\BM(\ell^2(G))$, the algebra of bounded operators on $\ell^2(G)$ (see subsection~\ref{Sec:GCAlg}). Indeed, any operator produced by a continuous functional calculus with $H$ stays in the reduced group $C^\ast$-algebra. In particular, this is the case for the resolvent $(H-z)^{-1}$ which determines the spectrum of $H$. Hence, there is an additional constraint on the boundary conditions, in that they not only need to  obstruct the boundary modes from appearing, but they also have to ensure that the finite models converge to infinite models drawn from the reduced $C^\ast$-algebra of the group. In our section~\ref{Sec:Numerics}, the reader can find numerical examples where different boundary conditions drive the calculations into both mentioned $C^\ast$-algebras. 

The above issues can be attacked from the combinatorial side and, for the particular case of free groups, McKay has provided in \cite{McKayLG1981} an explicit criterion that ensures the correct spectral convergence for a sequence of finite graph approximations. The criterion is actually very simple: The finite graph approximations have to be regular and the ratio between their number of closed loops and their number of vertices needs to converge to zero. The effectiveness of this criterion was put to the test in \cite{SchumacherAHP2015} and the numerical evidence there is promising. Unfortunately, to our knowledge, this criterion does not afford a generalization to the case when group relations are present and finite closed loops appear in the graph. For these cases, we turned to the algebraic solution introduced by L\"uck in \cite{LuekGFA1994} (see also \cite{LuekBook2002} for a fascinating account of applications and \cite{DodziukJFA1998,FarberMA1998,SchickMA2000,SchickTAMS2001,DodziukCPAM2003} for additional approximation results). There, it is shown that, if a group $G$ accepts a coherent sequence of finite index normal subgroups $G =G_0 \triangleright G_1 \triangleright G_2 \cdots$ such that $\bigcap G_n =\{e\}$, the trivial group, then the diagonal matrix elements of the Borel functional calculus in the group von Neumann algebra of $G$ for a self-adjoint element $h \in \CM G$ can be retrieved from Borel functional calculi in the finite von Neumann algebras associated to the quotients $G/G_n$. In Corollary~\ref{Cor:FuncCalc}, we produce a statement about the off-diagonal matrix elements, which in plain words can be stated as follows: Using the standard projections $\phi_n : G \to G/G_n$, we can generate a sequence of approximate Hamiltonians $\phi_n(h) \in \CM G/G_n$, where $h \in  \CM G$ is the element that gives the Hamiltonian $H$ via the left regular representation. Let $H_n$ be the left regular representation of $\phi_n(h)$ on the {\it finite} Hilbert space $\ell^2(G/G_n)$. Then 
\begin{equation}\label{Eq:S0}
 \langle g | (H-z)^{-1} |g'\rangle = \lim_{n \to \infty} \langle \phi_n(g) |(H_n - z)^{-1}|\phi_n(g') \rangle, \quad Im(z) \neq 0.
\end{equation}

A group $G$ for which such coherent sequences of normal subgroups exist is called a residually finite group. Any finitely generated group possessing a faithful representation into $GL(n, F)$ for $F$ a field is residually finite \cite[Th.~4.2]{WehrfritzBook}. In particular, the free groups and Fuchsian groups are residually finite. After studying the arguments leading to this conclusion in \cite{WehrfritzBook}, we found a simple mechanism to systematically generate coherent sequences of normal subgroups: If $R$ is the finitely generated sub-ring of $F$ produced by the entries in the matrices of the generators of the group, then a coherent sequence of finite index ideals of $R$ will automatically deliver the sought coherent sequence of normal subgroups (see section~\ref{Sec:Numerics}). For example, all free groups $\FM_n$, $n \geq 2$, accept faithful representations as $2 \times 2$ matrices with integer coefficients, hence as matrices from ${\rm GL}(2,\ZM)$. Passing from the ring $\ZM$ to its ideal $p \, \ZM$, with $p$ a natural number greater than one, automatically generates the finite index normal subgroup $\FM_n \cap \widetilde{\rm GL}(2,p\ZM)$, where the latter is the subgroup of ${\rm GL}(2,\ZM)$ with elements
\begin{equation}
\begin{pmatrix} 1 + a & b \\ c & 1+ d \end{pmatrix}, \quad a,b,c,d \in p\ZM.
\end{equation}
 Likewise, $p^2 \, \ZM \subset p \, \ZM$ will produce a subgroup of finite index too, which is a subgroup of the first one. And so on. The curious reader can check Fig.~\ref{Fig:IDS1} for a numerical validation of the computer algorithm stemming from these ideas. Let us acknowledge that \cite{SchumacherAHP2015} also tested this type of finite approximations, but these tests have been presented there under the umbrella of McKay's strategy, confined to the free groups. By switching to L\"uck's framework (see section~\ref{Sec:Lueck}), however, we can be sure that same strategy works for any residually finite group and, in section~\ref{Sec:ConvFf2}, we numerically demonstrate that this is the case for Fuchsian groups. For example, our results there pass the tests of the known rigorous bounds on the spectra \cite{Bartholdi2004,GouezelCPC2015,KestenTAMS1959,KestenMS1959,
PaschkeMZ1993,ZukCM1997,BartholdiCM1997}, and they are converged enough to use them to derive simple combinatorial statements about the underling hyperbolic lattice.

Addressing now to the community of computational physicists, we briefly explain why identifying a coherent sequence of normal subgroups is equivalent to imposing periodic boundary conditions, {\it e.g.}, as it is routinely done on the familiar regular Euclidean lattices (see also \cite{MaciejkoPNAS2022} for a nice discussion of the issue), and why their convergence to the thermodynamic limit is important. First, if $N$ is a finite-index normal subgroup of an amenable group $G$,\footnote{The technical points related to non-amenability will be discussed in subsection.~\ref{Sec:GCAlg}.} then the quotient group $G/N$ comes with a canonical homomorphism that can be lifted to a morphism $\phi$ between the group $C^\ast$-algebra $C^\ast(G)$, where the exact Hamiltonian $H$ and its Green function $(z-H)^{-1}$ live, and the finite group $C^\ast$-algebra $C^\ast(G/N)$. This supplies a precious way to canonically approximate the exact Hamiltonian and its functional calculus. Indeed, notice that $\phi(f(H)) = f(\phi(H))$ for any continuous function on the real axis. Furthermore, by default, the spectrum of $\phi(H)$ is a subset of the spectrum of the exact Hamiltonian. Thus, if $H$ has a spectral gap, $\phi(H)$ will display this gap as well or, in other words, the bulk spectral gaps of $H$ are not contaminated by the finite approximation, and this is exactly what ``periodic boundary conditions'' are ought to deliver. Of course, the difficult part is to prove the convergence of the finite approximations as we climb the coherent sequence of subgroups, and this is what L\"uck delivered in  \cite{LuekGFA1994} and it is being numerically confirmed in this present work for the class of residually finite groups. In the familiar case $G=\ZM$, all these work as follows: One can choose $G_n =p^n \ZM$, $p \geq 2$, as the coherent sequence of normal subgroups. Then the Cayley graph of $\ZM/p^n \ZM = \ZM_{p^n}$ is just the closed graph shown in Fig.~\ref{Fig:DB1} and the shift operator $T$ is mapped by $\phi_n$ into the circular shift operator on this graph. Thus the mapping $H \to \phi_n(H)$ amounts to replacing the shift operator by the circular shift operator on the finite closed graph. Lastly, Eq.~\eqref{Eq:S0} assures us that $\langle k\,{\rm mod}\,p^n |(\phi_n(H)-z)^{-1}|l\,{\rm mod}\,p^n \rangle$ converges to desired limit $\langle k | (H-z)^{-1} |l\rangle$. With this, we have explained the working mechanism behind the heuristic dangling bond removal procedure.

This has to be compared with the recent work \cite{ChengPRL2022} on the same subject, where only bands in one and two dimensional representations were resolved (for hyperbolic crystals). In the same context, normal subgroups of lower index were produced in \cite{MaciejkoPNAS2022} with an existing software \cite{RoberLINS} and few projected spectra were mapped out. With a strategy based on \cite{RoberLINS}, however, it is practically impossible to generate coherent sequences of subgroups (see subsection~\ref{Sec:ConvF2}) and, perhaps for this reason, no convergence results have been presented in \cite{MaciejkoPNAS2022}. In fact, the convergence problem is not even mentioned in the physics literature on the subject. Furthermore, some of the works advertise the finite dimensional representations as genuine, but this is a sensible point because the reduced group $C^\ast$-algebras, where the Hamiltonians and their Green's functions live, of a large class of finitely generated groups, which includes the free and Fuchsian groups, are simple \cite{AkemannIUM}, hence they do not accept finite dimensional representations (see subsection~\ref{Sec:GCAlg}). By comparing with exact results, which are quite abundant in the existing literature (see section~\ref{Sec:Numerics}), we demonstrate that the finite dimensional representations produce spurious spectra (see Example~\ref{Ex:SpurSpec}) and that the genuine spectral characteristics of Hamiltonians can only be inferred from a limiting procedure on coherent sequences of finite dimensional representations. This is important not only for getting the spectral properties right, but also for getting the partition functions of the statistical physics of the crystals and for computing response functions to external stimuli. Lastly, let us mention that a universal solution when it comes to the spectral properties of aperiodic systems relies on the evaluate the local density of states at or near the center of the finite-size crystal with open boundary conditions \cite{YuPRL2020,StegmaierPRL2022}. This method, however, converges only as an inverse power with the crystal size \cite{MassattMMS2017} and its reliability when it comes to computing thermodynamic coefficients is yet to be demonstrated.

Let us mention that there is also a strong interest in aperiodic systems and their approximations. Therefore, it will be extremely interesting to investigate if and how L\"uck's formalism can be extended to groupoids and their $C^\ast$-algebras \cite{RenaultBook}.  An algebraic strategy akin to that of L\"uck \cite{LuekGFA1994,LuekBook2002} was applied to crossed product algebras by free abelian groups in \cite{ProdanARMA2013,ProdanSpringer2017}, which, together with an approximation of a non-commutative differential calculus, enabled accurate mappings of cyclic cocycle pairings with $K$-theoretic elements, in the context of disordered systems. A long term goal of ours is to extend this program to crossed products by non-commutative groups, and the results communicated in this work gives us hope that this is indeed achievable. For example, the hyperbolic lattice systems are known to posses interesting topological dynamics \cite{ComtetAP1987,CareyCMP1998,MarcolliCCM1999,MarcolliCMP2001,MathaiATMP2019} robust against disorder. Let us mention that interesting strategies involving analytic and algebraic arguments for aperiodic amenable lattices have been devised in  \cite{ElekJFA2008}. See also \cite{BeckusJFA2018,BeckusAHP2019} for additional results that mostly engage analytic arguments.

Lastly, we briefly describe the organization of our paper.  In section~\ref{Sec:CGraphs}, we first introduce the free groups and the groups that can be presented using generators and relations, with a predominant focus on how to handle them on a computer. Then we discuss the associated Cayley graphs and supply relevant examples. In section~\ref{Sec:QDynamics}, we discuss the quantum dynamics over Cayley graphs and pinpoint key characteristics of the generators of the dynamics. Then we continue with a presentation of various operator algebras associated with a group and of their relations with the actual dynamics observed over Cayley graphs. We end this section with some exact results relating the combinatorics of graphs and spectral properties of a class of operators. In section~\ref{Sec:Lueck}, we discuss the converging approximations of L\"uck \cite{LuekGFA1994,LuekBook2002} and present our point of view on the subject. We conclude with computer simulations in section~\ref{Sec:Numerics}.

\section{Finitely generated groups and their Cayley graphs}\label{Sec:CGraphs}

Finitely generated groups are groups that can be presented in terms of finite sets of generators and relations, that is, as quotients of free groups by normal subgroups (the normal closures of the relations). Thus, the point of departure for this class of groups is the class of free groups, which we consider first into some detail. A pedagogical introduction to the subject can be found in \cite{LohBook}, which we use below to state basic facts and to fix the terminology and notation.

\subsection{The free group} Up to group isomorphisms, the free group $\FM_n$ with $n$ generators $\{X_i\}_{i \in \{1,\ldots,n\}}$ is defined by the following universal property: If $f:\{1,\ldots,n\} \to G$ is any function to a group $G$, then there exists a unique group homomorphism $\varphi$ making the following diagram commute:
\begin{equation}\label{Eq:Universal1}
\begin{tikzcd}
	\ & \FM_n \\
	\{1,\ldots,n\} \ \ \  & \ \\
	\ & G
	\arrow["\alpha", from=2-1, to=1-2]
	\arrow["f", from=2-1, to=3-2]
	\arrow["\varphi", from=1-2, to=3-2]
\end{tikzcd}
\end{equation}
Here, $\alpha$ is the map $\alpha(i) = X_i$ into the generators of $\FM_n$. Among many other things, this universal property shows how easy it is to construct group homomorphisms from $\FM_n$ to any other group.\footnote{Note that there is no restriction of any kind on the map $f$.} This can and will be exploited in our numerical investigations.

More concretely, $\FM_n$ is the group with $n$ generators and no relations,
\begin{equation}
\FM_n =\FM(X_1,\ldots X_n) = \langle X_1,\ldots ,X_n\rangle.
\end{equation}
If 
\begin{equation}\label{Eq:GenSet1}
\Ss = \{X_1\ldots, X_{n-1},X_{n}\} \cup \{X_1\ldots, X_{n-1},X_{n}\}^{-1}
\end{equation}
denotes the symmetric set of generators, then, for each $k \in \NM^\times$ and map $w:\{1,\ldots,k\} \to \Ss$, there is a canonically associated element of $\FM_n$
\begin{equation}
X_w = X_{w(1)} \ldots X_{w(k)}.
\end{equation} 
One refers to both $w$ and $X_w$ as words built from the alphabet $\Ss$. They sample the entire free group $\FM_n$.  Nevertheless, to the set of these words, we add the word $w_\emptyset : = e$, the neutral element of the group.

An element from $\FM_n$ can have many different presentations in terms of the generators. Indeed, a word $X_w$ can be reduced via the substitutions 
\begin{equation}\label{Eq:Reduction}
X_{\pm i} X_{\pm j} X_{\mp j} \mapsto X_{\pm i}, \quad X_{\pm j} X_{\mp j}X_{\pm i} \mapsto X_{\pm i}, \quad X_{\pm j} X_{\mp j} \mapsto 1,
\end{equation}
or amplified via the substitutions,
\begin{equation}\label{Eq:Amplification}
X_{\pm i} \mapsto X_{\pm i} X_{\pm j} X_{\mp j}, \quad X_{\pm i} \mapsto \quad X_{\pm j} X_{\mp j}X_{\pm i}, \quad 1 \mapsto X_{\pm j} X_{\mp j},
\end{equation}
without changing the identity of the element in $\FM_n$. Here and in the following, $X_{-j}$ stands for $X_j^{-1}$. Two words are declared equivalent if they are equal in $\FM_n$ or, equivalently, if one can be generated from the other via the reductions and amplifications we just mentioned. This is a true equivalence relation \cite{LohBook}[Sec.~3.3.1] and there is a one-to-one correspondence $[w] \mapsto X_w$ between the classes of words and the elements of the free group.

\begin{definition} Given a word $w$ with letters from $\Ss$, we define the length of the word $\bm l(w)$ to be the number of its letters. In addition, we set $\bm l(w_\emptyset)=0$.
\end{definition}

Note that all elementary substitutions~\eqref{Eq:Reduction} reduce the length of a word by exactly 2 units, while the elementary substitutions~\eqref{Eq:Amplification} increase the length of a word by exactly 2 units. This results in a special property of the free groups:

\begin{proposition}[\cite{LohBook}] Each class $[w]$ has a unique word with least length.
\end{proposition}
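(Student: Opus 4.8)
The plan is to separate the easy existence claim from the substantive uniqueness claim, and to establish the latter by exhibiting the reductions in \eqref{Eq:Reduction} as a terminating and confluent rewriting system. For existence, note that among all words equivalent to a given $w$ the set of lengths is a nonempty subset of $\NM$, so it attains a minimum; any minimizer must be \emph{reduced}, meaning it contains no adjacent pair of the form $X_{\pm j} X_{\mp j}$, since otherwise an application of \eqref{Eq:Reduction} would produce a shorter equivalent word, contradicting minimality. Thus a least-length word is exactly a reduced word, and what remains is to show that each class $[w]$ contains exactly one reduced word.

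First I would record that the equivalence relation is generated by the reductions alone: each amplification in \eqref{Eq:Amplification} is literally the inverse of a reduction in \eqref{Eq:Reduction}, so two words are equivalent precisely when one can be transformed into the other by a finite alternating sequence of reductions and their inverses. Writing $u \to v$ when $v$ is obtained from $u$ by a single reduction, the equivalence relation is then the one generated by $\to$. The observation preceding the statement — that every reduction lowers $\bm l$ by exactly two — shows that $\to$ is \emph{terminating}: no infinite chain of reductions exists, and every word reduces in finitely many steps to some reduced word (its normal form).

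The crux is \emph{confluence}: that the normal form is independent of the order of reductions. By Newman's lemma, for a terminating system it suffices to check \emph{local} confluence, i.e. that whenever $u \to v_1$ and $u \to v_2$ the words $v_1$ and $v_2$ admit a common reduct. This is a short case analysis on the two cancelled pairs inside $u$. If the two pairs occupy disjoint positions the reductions commute, and $v_1, v_2$ reduce to the same word by performing the other cancellation. The only overlapping case is a block $X_{\pm j} X_{\mp j} X_{\pm j}$, where cancelling the left pair and cancelling the right pair both yield the single letter $X_{\pm j}$; hence $v_1 = v_2$ already. Local confluence therefore holds, Newman's lemma upgrades it to global confluence, and a terminating confluent system has unique normal forms. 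Consequently any two equivalent words share a normal form, so each class $[w]$ has exactly one reduced word, which is its unique least-length representative.

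The step I expect to require the most care is the passage from the two-sided equivalence relation to the one-directional reduction system: a priori the words in a class are linked by reductions \emph{and} amplifications interleaved arbitrarily, and it is confluence that collapses such zig-zags to a common normal form. An alternative that sidesteps the rewriting machinery is van der Waerden's trick: let each generator act on the set $R$ of reduced words by prepending and cancelling, check that these actions are mutually inverse bijections, and invoke the universal property \eqref{Eq:Universal1} to obtain a homomorphism $\FM_n \to \mathrm{Sym}(R)$; evaluating the image of $X_w$ at the empty word returns the reduced form of $w$, which forces uniqueness.
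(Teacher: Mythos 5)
Your proof is correct. Note that the paper itself does not prove this proposition at all---it is quoted from L\"oh's book \cite{LohBook}---so there is no in-paper argument to compare against; what you have supplied is a complete, self-contained proof of a fact the authors take as imported background. Your route is the standard rewriting-system one: existence of a least-length representative is immediate from well-ordering of $\NM$, a minimizer is necessarily reduced, and uniqueness follows because single-pair cancellation is a terminating rewriting system (each step drops the length by $2$) that is locally confluent---the only nontrivial overlap being a block $X_{\pm j}X_{\mp j}X_{\pm j}$, where both cancellations give the same word---so Newman's lemma yields global confluence and hence unique normal forms; the Church--Rosser property then correctly collapses arbitrary zig-zags of reductions and amplifications (the latter being exact inverses of the former, as you observe) to a common reduct, which is the step many informal accounts gloss over. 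Your case analysis is complete: two redexes either occupy disjoint positions, where the reductions commute, or share exactly one position, which forces the $X_{\pm j}X_{\mp j}X_{\pm j}$ pattern; sharing two positions means they coincide. The van der Waerden alternative you sketch is also legitimate within the paper's framework, since the paper separately records (again from \cite{LohBook}) that concatenation is well defined and associative on classes, which is what one needs before invoking the universal property \eqref{Eq:Universal1} to map $\FM_n$ into the symmetric group of the set of reduced words. Either argument would serve as a valid proof of the cited proposition.
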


\begin{remark}{\rm We call the word identified above the maximally reduced word of the class $[w]$ and we denoted it by $\bar w$. According to the above statement, there is a bijection between the maximally reduced words and the elements of $\FM_n$, which enables an efficient encoding of the group elements on a computer. Unfortunately, this is no longer the case when relations are present.
}$\Diamond$
\end{remark}

Given two words, $w_j : \{1,\ldots,k_j\} \to \Ss$, $j=1,2$, one considers the concatenation
\begin{equation}
w_1 \vee w_2 : \{1,\ldots, k_1+k_2\} \to \Ss, \quad (w_1 \vee w_2)(i) = \left \{
\begin{array}{ll}
 w_1(i) \ & {\rm if} \ i \leq k_1, \\
 w_2(i-k_1) \ & {\rm if} \ i > k_1.
\end{array}
\right .
\end{equation}
Then the product of the associated elements in $\FM_n$ works as
\begin{equation}
X_{[w_1]} \cdot X_{[w_2]} = X_{[w_1 \vee w_2]}.
\end{equation}
One can verify that, if $w_1 \sim w'_1$ and $w_2 \sim w'_2$ are equivalent pairs of words, then $[{w'_1 \vee w'_2}]= [{w_1 \vee w_2}]$, hence the multiplication rule written above is well defined. Also, this multiplication is associative (see \cite{LohBook}[Prop.~3.3.5]).

Given a word $w$ of length $\bm l(w)$, we can create a new word of length $\bm l(w)+1$ by concatenating letters from $\Ss$ to the front ({\it i.e.} to the left). More precisely:

\begin{proposition} If $w$ is a maximally reduced word of length $k$, then there are $2n-1$ maximally reduced words of length $k+1$ that can be produced from $w$ by concatenating one letter in front. As a consequence, there are precisely $2n(2n-1)^{k-1}$ maximally reduced words of length $k$ for $k \geq 1$ and $\frac{n}{n-1}((2n-1)^k-1)+1$ maximally reduced words of length $k$ or smaller.
\end{proposition}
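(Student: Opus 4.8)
The plan is to prove the single-letter claim first and then bootstrap it to both counting formulas by induction and a geometric summation. Throughout I use that, by the elementary substitutions~\eqref{Eq:Reduction}, a word $s_1 \cdots s_k$ with letters $s_i \in \Ss$ is maximally reduced exactly when no adjacent pair cancels, i.e.\ $s_{i+1} \neq s_i^{-1}$ for every $i$, and I use that $\lvert \Ss \rvert = 2n$.

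For the first assertion, let $w$ be maximally reduced of length $k \geq 1$ with leading letter $s_1$. Prepending $t \in \Ss$ produces $t\,s_1 \cdots s_k$; since the suffix $s_1 \cdots s_k$ already contains no adjacent inverse pair, the only place a new cancellation could occur is between $t$ and $s_1$. Hence the prepended word is maximally reduced if and only if $t \neq s_1^{-1}$, which excludes exactly one of the $2n$ letters of $\Ss$ and leaves $2n-1$ admissible choices. Distinct admissible letters give distinct words, as they differ in their leading letter, so there are precisely $2n-1$ maximally reduced words of length $k+1$ obtained this way.

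Next I would count the maximally reduced words of length $k$ by induction on $k$. The base case $k=1$ is immediate: the $2n$ single letters of $\Ss$ are each reduced, matching $2n(2n-1)^{0}$. For the inductive step, deleting the leading letter sets up a bijection between maximally reduced words $u$ of length $k+1$ and pairs $(v,t)$, where $v$ is a maximally reduced word of length $k$ and $t \in \Ss$ is an admissible leading letter with $t \neq v(1)^{-1}$. Indeed, given such $u = t\,v$, the suffix $v$ inherits reducedness and $t \neq v(1)^{-1}$; conversely each admissible pair reconstructs a unique such $u$, which is maximally reduced by the first assertion. Therefore the number of length-$(k+1)$ words equals $(2n-1)$ times the number of length-$k$ words, and the closed form $2n(2n-1)^{k-1}$ follows.

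Finally, I would sum over all lengths from $0$ to $k$. There is a single word of length $0$, namely $w_\emptyset = e$, and for $1 \leq j \leq k$ the previous count applies, so the total is
\begin{equation*}
1 + \sum_{j=1}^{k} 2n(2n-1)^{j-1} = 1 + 2n\,\frac{(2n-1)^k - 1}{2n-2} = \frac{n}{n-1}\big((2n-1)^k - 1\big) + 1,
\end{equation*}
as claimed. The only delicate point is justifying that the prepend/delete correspondence is a genuine bijection — that distinct prepends never collide and that every length-$(k+1)$ reduced word is produced exactly once; this rests on the uniqueness of the maximally reduced representative established in the preceding Proposition, which guarantees that counting maximally reduced words faithfully counts the elements of $\FM_n$.
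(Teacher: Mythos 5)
Your proof is correct. The paper actually states this proposition without any proof (it is treated as an immediate consequence of the unique-reduced-word property cited from L\"oh's book), and your argument --- prepending letters with the single forbidden choice $t = s_1^{-1}$, the resulting $(2n-1)$-to-one recursion via the delete-leading-letter bijection, and the geometric sum --- is precisely the standard reasoning the paper implicitly relies on. You also correctly flag the one point that needs the preceding Proposition, namely that ``no adjacent inverse pair'' is equivalent to ``least length in its class,'' so nothing is missing.
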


\begin{figure}[t]
\center
\includegraphics[width=\textwidth]{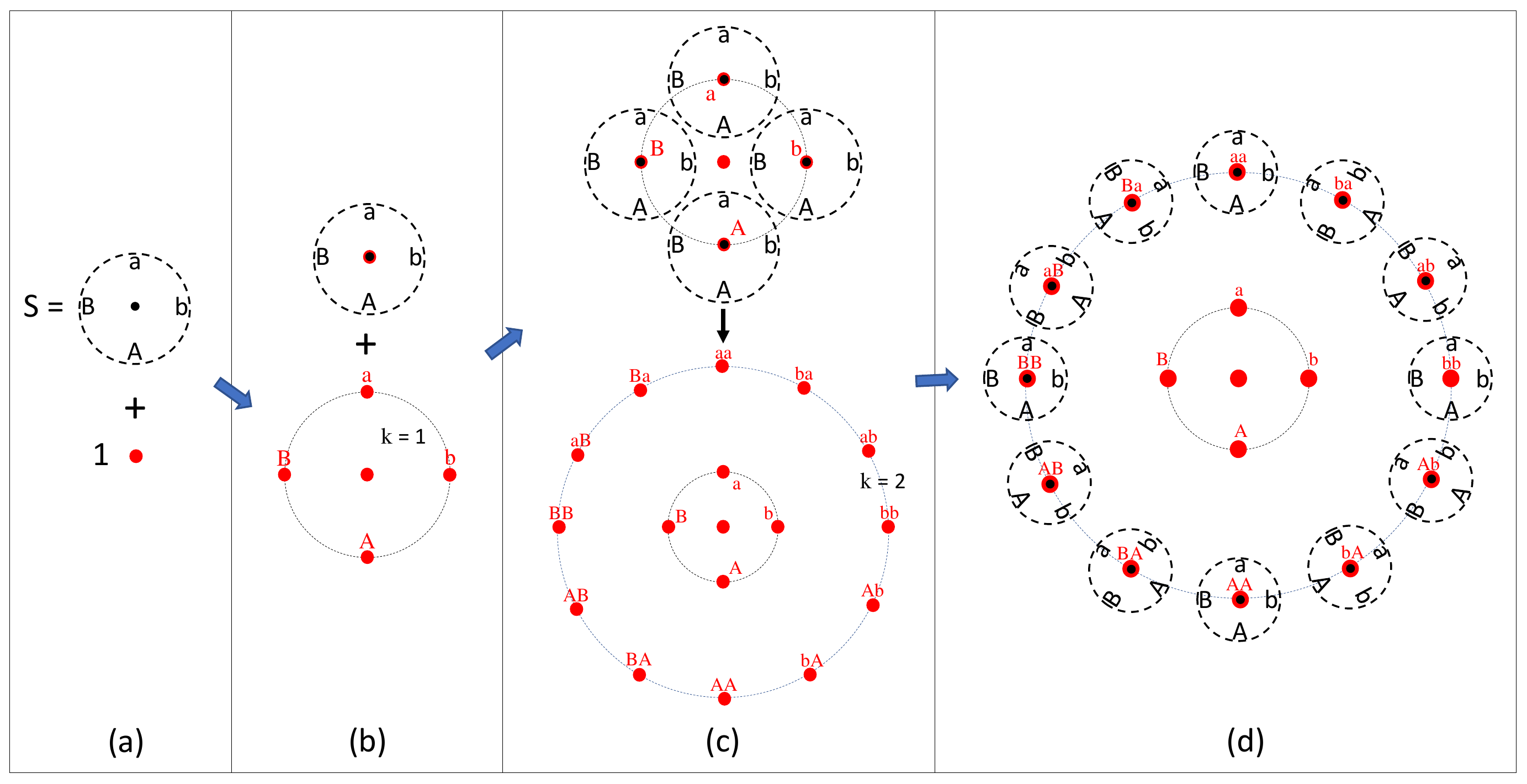}\\
  \caption{\small Algorithm for generating and storing the elements of the free group $\FM_2$. For convenience, the symmetric set $\{X_1,X_2,X_{-1},X_{-2}\}$ has been relabeled as $\{a,b,A,B\}$. The arrows indicate the flow of the algorithm.
}
 \label{Fig:GenF2}
\end{figure}

\begin{remark}\label{Re:Ratio}{\rm According to the above statement, the ratio between the number of maximally reduced words of length $k$ and that of maximally reduced words of length $k$ or smaller tends to $(2n-2)/(2n-1)$ in the limit $k\to \infty$. This confirms one of our statements saying that the ratio between the number of boundary sites and the number of total sites of a truncated lattice does not decrease to zero as the size of the truncation is increased.
}$\Diamond$
\end{remark} 

Fig.~\ref{Fig:GenF2} illustrates a practical way to systematically cycle through the set of elements of lower length and concatenate letters without producing duplicates. The key is to look at the set $\Ss$ in a circular way, {\it e.g.} as shown in Fig.~\ref{Fig:GenF2}(a) for the group $\FM_2$. There, $X_{1}$, $X_{2}$, $X_{-1}$ and $X_{-2}$ are rendered on a circle, in this sequence, while $X_0=e$ sits at the center of this circle. Then the starting point in enumerating and storing the elements of $\FM_2$ is the set $\Ss$, so presented, and the neutral element e. In the first step, shown in Fig.~\ref{Fig:GenF2}(b), one centers $\Ss$ on the neutral element and produces the elements $X_{1}$, $X_{2}$, $X_{-1}$ and $X_{-2}$, which are arranged again on a circle. At the second step, shown in Fig.~\ref{Fig:GenF2}(c), the imaginary circle hosting the set $\Ss$ is centered on a word $w_1$ of length one and then $\Ss$ is rotated such that $w_1^{-1}$ resides inside the circle hosting the words of length one. Then concatenation of the letters residing outside this circle produces maximally reduced words of length 2. By applying this protocol to the words $X_{1}$, $X_{2}$, $X_{-1}$, $X_{-2}$ of length 1, in this order, we produce all 12 maximally reduced words of length 2, in the well-defined order
$$
\begin{aligned}
& 
X_{-2}X_{1},X_1X_1,X_2X_1, 
X_1 X_2,X_2X_2,X_{-1}X_2, \\
& \qquad \qquad \qquad \qquad X_2X_{-1},X_{-1}X_{-1},X_{-2}X_{-1},
X_{-1}X_{-2},X_{-2}X_{-2},X_1X_{-2}.
\end{aligned}
$$
By applying the protocol again to the words listed above, we generate all 36 maximally reduced words of length 3, again, in a well defined order that can be read off from the diagram in Fig.~\ref{Fig:GenF2}(d). Thus, the process illustrated in Fig.~\ref{Fig:GenF2} generates all maximally reduced words of length less or equal than 3.

As always, the starting point of a simulation involving a group $C^\ast$-algebra is the encoding of the multiplication table of the group. If $\FM_n$ has been stored already as a set, {\it e.g.} using the above algorithm, then generating its multiplication table amounts to reducing the concatenations produced by pairs of words and to identifying within the set of elements the result of these reductions. Of course, the free groups are not finite, hence we can only store a finite section of the groups. In practice, one will typically generate all reduced words of length smaller or equal than a threshold, fixed by the available computational resources. Likewise, we can only store a finite section of the multiplication table of the group. In our work, we will not deal with the full group directly, but we will rather engage its finite approximation. Still, we use the full group and the algorithm described above to generate these finite approximations and to exemplify various boundary conditions in subsection~\ref{Sec:Misleading}

Another strategy for dealing with $\FM_n$ on a computer is to use a faithful representation in ${\rm GL}(n,\RM)$. For example, a useful presentation of this type can be found in \cite[Lemma 2.3.2.]{CeccheriniSpringer2010} for $\FM_2$ (see also \cite{NewmanAJM1964}):
\begin{equation}\label{Eq:F2Mat}
X_1 \mapsto {\small \begin{pmatrix} 1 & 2 \\ 0 & 1 \end{pmatrix}}, \quad X_2 \mapsto {\small \begin{pmatrix} 1 & 0 \\ 2 & 1 \end{pmatrix}}.
\end{equation}
Using the algorithm described above, one can easily produce the matrix representations of all maximally reduced words of length $k$ or smaller. Now, note that the entries of the above matrices and their inverses are integer, hence the representation actually lands in ${\rm GL}(2,\ZM)$. The ideals of the ring $\ZM$ are all of the form $p\, \ZM$, for some natural number $p$, hence it is straightforward to produce normal subgroups \cite{SchumacherAHP2015}, as already advertised in our opening remarks. The quotient groups $G_N$ can be obtained by applying ${\rm mod}\, p^N$ on the entries in the $\FM_2$ matrices, of which we only need a finite number to produce the whole $G_N$. Indeed, when the words in $\FM_2$ exceed a certain length, applying ${\rm mod}\, p^N$ will only produce duplicates.

Any other free group can be identified with a subgroup of $\FM_2$. More precisely:
\begin{proposition}[\cite{CeccheriniSpringer2010},~Corollary D.5.3.] The subgroup of $\FM_2$ generated by the subset
\begin{equation}
X = \{X_2^{i} X_1 X_{-2}^{i}, \ 0 \leq i \leq n-1\}
\end{equation}
is isomorphic with $\FM_n$.
\end{proposition}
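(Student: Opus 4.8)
The plan is to exhibit an explicit isomorphism using the universal property encoded in Eq.~\eqref{Eq:Universal1}. Write $\FM_n=\langle Y_0,\ldots,Y_{n-1}\rangle$ and set $g_i:=X_2^{\,i}X_1X_{-2}^{\,i}\in\FM_2$ for $0\le i\le n-1$. Applying the universal property to the assignment $Y_i\mapsto g_i$ yields a unique homomorphism $\varphi:\FM_n\to\FM_2$ whose image is generated by $\varphi(Y_i)=g_i$, i.e. is precisely the subgroup $\langle X\rangle$. Thus $\varphi$ is automatically surjective onto $\langle X\rangle$, and the entire statement reduces to proving that $\varphi$ is \emph{injective}, that is, that $g_0,\ldots,g_{n-1}$ satisfy no nontrivial relation in $\FM_2$.

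To establish injectivity I would take an arbitrary nontrivial maximally reduced word $w=Y_{i_1}^{\epsilon_1}\cdots Y_{i_k}^{\epsilon_k}$ in $\FM_n$ (so $k\ge 1$, each $\epsilon_j=\pm 1$, and $i_j=i_{j+1}$ forces $\epsilon_j=\epsilon_{j+1}$) and compute its image. Since $(X_2^{\,i}X_1X_{-2}^{\,i})^{\epsilon}=X_2^{\,i}X_1^{\epsilon}X_2^{-i}$, substituting and telescoping the inner powers of $X_2$ gives
\begin{equation}
\varphi(w)=X_2^{\,i_1}\,X_1^{\epsilon_1}\,X_2^{\,i_2-i_1}\,X_1^{\epsilon_2}\,X_2^{\,i_3-i_2}\cdots X_1^{\epsilon_k}\,X_2^{-i_k}.
\end{equation}
The goal is to show this reduces to a nonempty word. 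The only places where cancellation could occur are the internal factors $X_2^{\,i_{j+1}-i_j}$: such a factor is trivial exactly when $i_j=i_{j+1}$, but then reducedness of $w$ forces $\epsilon_j=\epsilon_{j+1}$, so the two neighboring $X_1$-powers merge into $X_1^{\pm 2}$ rather than annihilate.

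Carrying out this merging, I would group the index sequence $i_1,\ldots,i_k$ into its maximal runs of equal consecutive values $v_1,\ldots,v_m$ (so $v_l\neq v_{l+1}$) and let $c_l\neq 0$ be the signed $X_1$-exponent accumulated over run $l$. Then $\varphi(w)$ becomes the alternating word $X_2^{\,v_1}X_1^{c_1}X_2^{\,v_2-v_1}X_1^{c_2}\cdots X_1^{c_m}X_2^{-v_m}$, in which every $X_1$-syllable and every internal $X_2$-syllable is a nonzero power. This is already in reduced normal form, and it contains at least the syllable $X_1^{c_1}$; hence $\varphi(w)\neq e$. This proves injectivity, and therefore $\langle X\rangle\cong\FM_n$.

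I expect the bookkeeping around the vanishing separators $X_2^{\,i_{j+1}-i_j}$ to be the only delicate point: one must argue that reducedness of $w$ prevents the successive $X_1$-powers from cancelling, and check that the leading and trailing factors $X_2^{\,v_1}$ and $X_2^{-v_m}$, which may themselves be trivial when $v_1=0$ or $v_m=0$, do not affect the conclusion. An alternative route would invoke the Nielsen--Schreier theorem to conclude immediately that $\langle X\rangle$ is free and then compute its rank, but the direct normal-form argument above is self-contained and exhibits the rank $n$ manifestly.
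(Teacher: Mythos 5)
Your proof is correct, but there is nothing in the paper to compare it against: the paper imports this proposition verbatim from \cite{CeccheriniSpringer2010} (Corollary D.5.3) and reproduces no argument, so what you have written is a self-contained substitute for an outsourced citation. Your route is the classical normal-form one: the universal property \eqref{Eq:Universal1} produces the homomorphism $\varphi:\FM_n\to\FM_2$ with image exactly $\langle X\rangle$, and injectivity holds because a reduced word $Y_{i_1}^{\epsilon_1}\cdots Y_{i_k}^{\epsilon_k}$ maps to the alternating word $X_2^{v_1}X_1^{c_1}X_2^{v_2-v_1}X_1^{c_2}\cdots X_1^{c_m}X_2^{-v_m}$, in which every $X_1$-syllable is a nonzero power (within a maximal run of equal indices, reducedness forces all exponents to share one sign, so they accumulate rather than cancel) and every internal $X_2$-syllable is a nonzero power (adjacent runs carry distinct indices); a nonempty reduced word in $\FM_2$ is nontrivial, and the possibly empty outer syllables $X_2^{v_1}$, $X_2^{-v_m}$ are harmless because the word still contains $X_1^{c_1}$. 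The two delicate points you flag are resolved by exactly these observations, so the argument is complete. As for the alternatives you mention: the standard competing proof is the ping-pong lemma, whose advantage is that it needs no normal form and therefore also works for subgroups of matrix groups --- it is the tool by which representations such as \eqref{Eq:F2Mat} are certified to be faithful --- whereas your syllable computation is more elementary and exhibits the free basis of $\langle X\rangle$ explicitly. The Nielsen--Schreier shortcut is weaker than you suggest: it gives freeness of $\langle X\rangle$ instantly, but pinning the rank at exactly $n$ is not automatic, since $\langle X\rangle$ has infinite index in $\FM_2$ (the $X_2$-exponent sum vanishes on it), so the Schreier index formula is unavailable and proving that the $n$ generators are non-redundant essentially re-runs your normal-form argument; the direct computation you chose is the right one.
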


The above statement not only assures us that all $\FM_n$ can be realized as subgroups of $GL(2,\ZM)$, but it also gives us the means to write concrete representations for the generators. For example, for $\FM_3$, we find
\begin{equation}\label{Eq:F3Mat}
X_1 \mapsto {\small \begin{pmatrix} 1 & 2 \\ 0 & 1 \end{pmatrix}}, \quad X_2 \mapsto {\small \begin{pmatrix} -3 & 2 \\ -8 & 5 \end{pmatrix}}, \quad X_3 \mapsto {\small \begin{pmatrix} -7 & 2 \\ -32 & 9 \end{pmatrix}}.
\end{equation}
Other possibilities can be found in \cite[Sec.~3.2]{MeierBook} and \cite{ZubkovMZ1998}. Therefore, coherent sequences of subgroups of $\FM_3$ or any other $\FM_n$ groups can be generated using the same principles. These aspects will be discussed in detail in section~\ref{Sec:ConvF2}.

\begin{remark}\label{Re:NonUnitGen}{\rm Note that the matrices~\eqref{Eq:F2Mat} are not unitary. This is not a concern at all, because these matrices are only used to generate and label the Cayley graphs. The translation operators over these Cayley graphs will be, by default, unitary.
}$\Diamond$
\end{remark} 

\subsection{Finitely generated groups} Given any finite set $\Gg=\{g_1,\ldots,g_n\}$, $n = |\Gg|$, we can canonically associate to it the free group $\FM(\Gg) \simeq \FM_n$. 

\begin{definition} The group generated by the finite set $\Gg$ and finite set of relations $\Rr \subset \FM(\Gg)$ is the quotient group $\FM(\Gg)/\widehat \Rr$, where $\widehat \Rr$ is the smallest normal subgroup of $\FM(\Gg)$ containing the set $\Rr$. This data is usually encoded in the notation $\langle \Gg | \Rr \rangle$.
\end{definition}

\begin{remark}{\rm We will always assume that the generating set $\Gg$ has the smallest cardinality possible. Hence, the ranks of the finitely generated groups will always coincide with the cardinal $|\Gg|$ of the generating set.
}$\Diamond$
\end{remark}

\begin{example}{\rm The hyperbolic crystals are generated from Fuchsian groups, which are discrete subgroups of the full isometry group of the hyperbolic disk or of any other equivalent model. Thus, they are the equivalent of the space groups in the context of hyperbolic geometry. Fuchsian groups with compact fundamental domain are classified~\cite{Katok1992,FordBook,MagnusBook1974} by their signature $(g,\nu_1,\ldots,\nu_r)$ and can be presented in terms of generators and relations as follows
{\small
$$
\begin{aligned}
\Ff_{g,\bm \nu} =  \big \langle a_1,b_1,\ldots,a_g,b_g,x_1,\ldots,x_r \ |  x_1^{\nu_1}, \ldots, x_r^{\nu_r},  x_1 \cdots x_r [a_1,b_1] \cdots [a_r, b_r] \big \rangle,
\end{aligned}
$$}
where $[a,b] : =ab a^{-1} b^{-1}$ denotes the commutator of two elements.
}$\Diamond$
\end{example}

\begin{remark}\label{Re:Fuch0}{\rm The Fuchsian groups $\Ff_g$ with $\bm \nu =0$ are particularly interesting because the fundamental domains of the corresponding hyperbolic crystals are surfaces $\Sigma_g$ of genus $g$ and the groups are isomorphic to the fundamental groups $\pi_1(\Sigma_g)$ of these domains. The Cayley graphs of these particular Fuchsian groups are the equivalent of Bravais lattices. Specifically, any general Fuchsian group $\Ff_{g,\bm \nu}$ has a non-unique $\Ff_{g'}$ as a normal subgroup, generating an exact sequence of groups \cite{MathaiATMP2019}
\begin{equation}
e \rightarrow \Ff_{g'} \rightarrow \Ff_{g,{\bm \nu}} \rightarrow P=\Ff_{g,{\bm \nu}}/\Ff_{g'} \rightarrow e,
\end{equation}
where $P$ is a finite group, which can be viewed as the point group of the hyperbolic crystal \cite{BoettcherPRB2022}. There is a relation between $g'$ and the index of $P$, namely, $g'=1+|P|(2(g-1)+(r-\nu))/2$ with $\nu=\sum 1/\nu_j$.
}$\Diamond$
\end{remark}

The finitely generated groups also enjoy a universal property that fixes such a group up to isomorphisms. As for the free group, this universal property supplies the means to generate group morphisms from finitely presented groups, which will be again the key to handling such groups on a computer.

\begin{proposition}[\cite{LohBook},~Prop.~2.2.18] A finitely generated group $\langle \Gg | \Rr \rangle$ and the canonical map $\pi : \Gg \to \langle \Gg | \Rr \rangle$ enjoy the following universal property: For any group $G$ and every map $f : \Gg \to G$ such that $\varphi(r) = e$ for all $r \in \Rr$, where $\varphi$ is the group morphism from Eq.~\eqref{Eq:Universal1}, there exists precisely one group homomorphism $\bar \varphi : \langle \Gg|\Rr \rangle \to G$ such that $\bar \varphi \circ \pi = f$.
\end{proposition}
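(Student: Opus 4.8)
The plan is to reduce the statement to two facts already in hand: the universal property of the free group recorded in diagram~\eqref{Eq:Universal1}, and the standard factorization of a homomorphism through a quotient by a subgroup contained in its kernel. Throughout, write $\alpha : \Gg \to \FM(\Gg)$ for the inclusion of $\Gg$ into the generators and $q : \FM(\Gg) \to \FM(\Gg)/\widehat \Rr = \langle \Gg | \Rr \rangle$ for the canonical projection, so that the canonical map factors as $\pi = q \circ \alpha$. The whole argument is then a diagram chase organized around the single composite $\FM(\Gg) \xrightarrow{q} \langle \Gg | \Rr \rangle \xrightarrow{\bar\varphi} G$.

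First I would construct $\bar\varphi$. Given a map $f : \Gg \to G$ subject to the hypothesis, the universal property of the free group produces the unique homomorphism $\varphi : \FM(\Gg) \to G$ with $\varphi \circ \alpha = f$; this is exactly the $\varphi$ that already appears in the statement. The hypothesis $\varphi(r) = e$ for all $r \in \Rr$ says precisely that $\Rr \subseteq \Ker \varphi$. Since $\Ker \varphi$ is a normal subgroup of $\FM(\Gg)$ containing $\Rr$, and $\widehat \Rr$ is by definition the smallest normal subgroup containing $\Rr$, we conclude $\widehat \Rr \subseteq \Ker \varphi$. The universal property of the quotient then yields a unique homomorphism $\bar\varphi : \langle \Gg | \Rr \rangle \to G$ with $\bar\varphi \circ q = \varphi$. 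The required identity is immediate: $\bar\varphi \circ \pi = \bar\varphi \circ q \circ \alpha = \varphi \circ \alpha = f$.

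For uniqueness, the key observation is that $\langle \Gg | \Rr \rangle$ is generated as a group by $\pi(\Gg)$, because $\FM(\Gg)$ is generated by $\alpha(\Gg)$ and $q$ is surjective. Hence any homomorphism $\psi : \langle \Gg | \Rr \rangle \to G$ satisfying $\psi \circ \pi = f$ agrees with $\bar\varphi$ on the generating set $\pi(\Gg)$, and two homomorphisms that agree on a generating set coincide. This forces $\psi = \bar\varphi$ and completes the proof.

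The only genuinely substantive step—everything else being formal—is the containment $\widehat \Rr \subseteq \Ker \varphi$, which is where the minimality built into the normal closure $\widehat \Rr$ is used to upgrade the hypothesis that $\varphi$ kills the chosen relations $\Rr$ to the conclusion that $\varphi$ kills the entire normal closure. I expect this to be the main (and essentially the only) point requiring care; once it is in place, the factorization through $q$ and the uniqueness argument are routine.
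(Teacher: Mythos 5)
Your proof is correct. The paper does not actually prove this proposition---it is quoted verbatim from L\"oh's book (Prop.~2.2.18)---and your argument (lift $f$ to $\varphi$ on $\FM(\Gg)$ via the free universal property, use minimality of the normal closure to get $\widehat\Rr \subseteq \Ker\varphi$, factor through the quotient, and settle uniqueness because $\pi(\Gg)$ generates $\langle \Gg | \Rr\rangle$) is exactly the standard argument given in that reference, with the substantive step correctly identified as the containment $\widehat\Rr \subseteq \Ker\varphi$.
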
 

The word problems for finitely generated groups with non-trivial relations are very difficult, in general. One such problem is determining if two words from $\FM(\Gg)$ are equal in $\langle \Gg|\Rr \rangle$. Another problem is how to associate unique words to the elements of $\langle \Gg |\Rr \rangle$. For the Fuchsian groups mentioned in Remark~\ref{Re:Fuch0}, the first problem has an algorithmic solution due to Dehn \cite{DehnCollection}, while a solution for the second problem can be found in \cite{SeriesETDS1981}. Both solutions are highly relevant for the numerical simulation of the hyperbolic crystals.

In this work, however, we rely on explicit mappings of $\Ff_2$ into ${\rm GL}(2,\RM)$, which can be found in \cite{DupuyJMP} and \cite{MaskitPAMS1999}. These works deliver the matrix presentations and also the hyperbolic disk transformations. We will use the parametrization from \cite{MaskitPAMS1999}, which is simpler:
\begin{equation}\label{Eq:Ff2Mat}
\begin{aligned}
& a_1 \mapsto {\small \begin{pmatrix}
	 2 \sqrt{3}+2 & -3 \\
	 3 & 2-2 \sqrt{3} \\
	\end{pmatrix}}, & b_1 \mapsto {\small 
		\begin{pmatrix}
		 2 & -\sqrt{3} \\
		 -\sqrt{3} & 2 \\
		\end{pmatrix}}, \\
& a_2 \mapsto {\small
	\begin{pmatrix}
	 2 & -2 \sqrt{3}-3 \\
	 3-2 \sqrt{3} & 2 \\
	\end{pmatrix}},  & b_2 \mapsto {\small 
		\begin{pmatrix}
		 2 \sqrt{3}-2 & -3 \sqrt{3}-6 \\
		 6-3 \sqrt{3} & -2 \sqrt{3}-2 \\
		\end{pmatrix}}.
\end{aligned}
\end{equation}
Given the particular entries in these matrices, they actually produce a subgroup of ${\rm GL}(2,\ZM + \sqrt{3} \, \ZM)$. The ring $\ZM + \sqrt{3} \, \ZM$ has $p \ZM + \sqrt{3} \, p \ZM$ as ideals, hence there is again a straightforward way to apply the strategy outlined in our introductory remarks (see subsection~\ref{Sec:ConvFf2}).

\begin{figure}[t]
\center
\includegraphics[width=\textwidth]{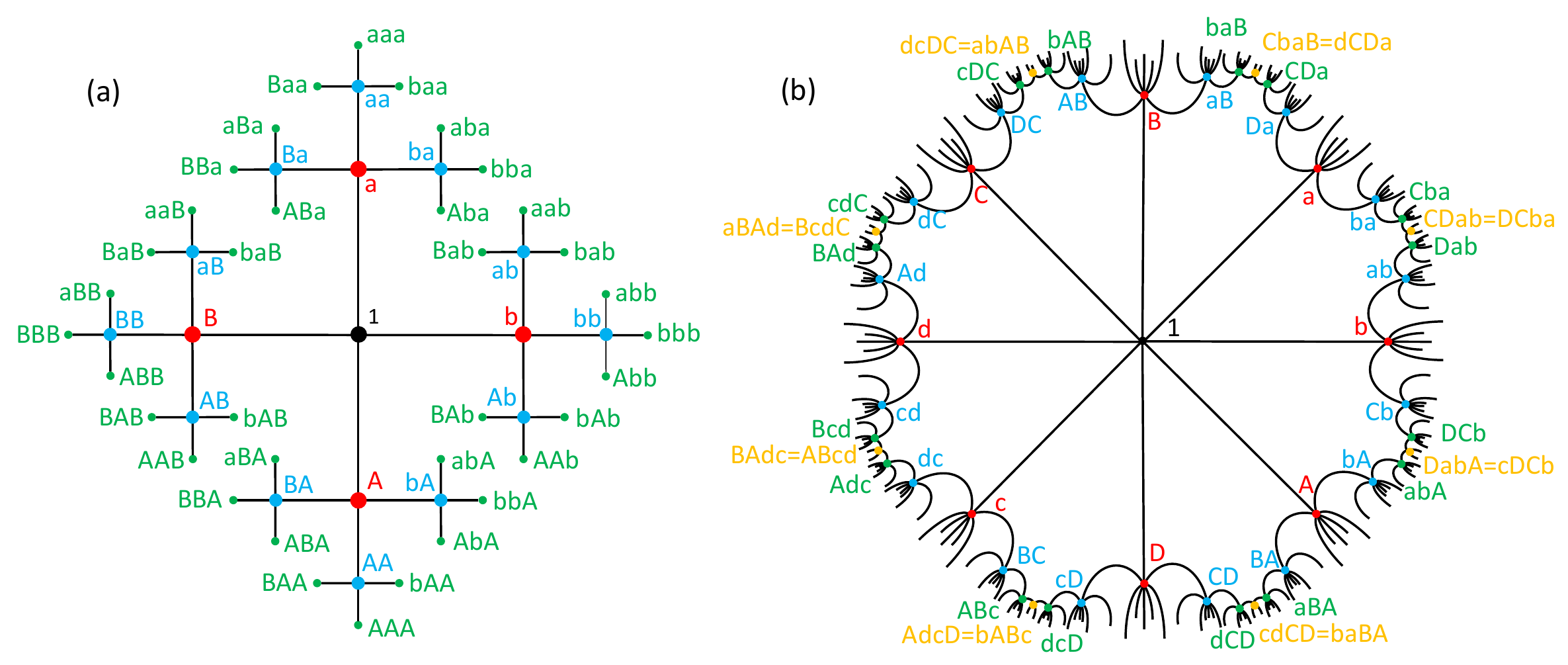}\\
  \caption{\small (a) A section of the standard Cayley graph of $\FM_2=\langle a,b\rangle$, containing all elements of length up to 3. (b) Section of the standard Cayley graph $\Cc(\Ff_2)=\langle a,b,c,d|[a,b][c,d]\rangle$ and of the equivalent surface group $\pi(\Sigma_2)$. In both graphs, the capital letters denote the inverse of the corresponding generators.
}
 \label{Fig:CayleyGraphs}
\end{figure}

\subsection{Cayley graphs}\label{Sec:CayleyGraphs}

Cayley graphs encode the data of a group in a geometric fashion \cite{MeierBook}. For example, word problems and other theoretical problems in group theory can be  solved by inspecting these geometric objects \cite[Ch.~5]{MeierBook}. On the applied side, Cayley graphs supply systematic generalizations of the crystal lattices investigated in materials science. Hence they can be an abundant source of new dynamical effects, which is our main motivation for studying them.

\begin{definition} Given a discrete group $G$ and a finite subset $S$ of $G$, the Cayley graph $\Cc(G,S)$ is the un-directed graph with vertex set $G$ and edge set containing an
edge between $g$ and $sg$ whenever $g \in G$ and $s \in S$.
\end{definition}

One should be aware that $S$ can be any finite subset of $G$ and that the geometry of the Cayley graph depends quite strongly on the choice of $S$. When the group has a standard presentation in terms of generators and relations, there is special graph $\Cc(G,\Gg)$ which we call here the standard Cayley graph and denote it simply by $\Cc(G)$. As we shall see, however, for a given model Hamiltonian, the useful Cayley graph is the one constructed from the group elements present in the expression of the Hamiltonian.

\begin{example}{\rm The standard Cayley  graph of $\FM_n$ is a regular tree with coordination $n$. Such trees are referred to in the physics literature as Bethe lattices. The standard Cayley graph of $\FM_2$ is displayed in Fig.~\ref{Fig:CayleyGraphs}(a). 
}$\Diamond$
\end{example}

\begin{example}{\rm A small section of the standard Cayley graph of the Fuchsian group $\Ff_2$ is shown in Fig.~\ref{Fig:CayleyGraphs}(b). In this case, the graph displays closed cycles, which are a reflection of the non-trivial set of relations.
}$\Diamond$
\end{example}

\begin{figure}[t]
\center
\includegraphics[width=0.7\textwidth]{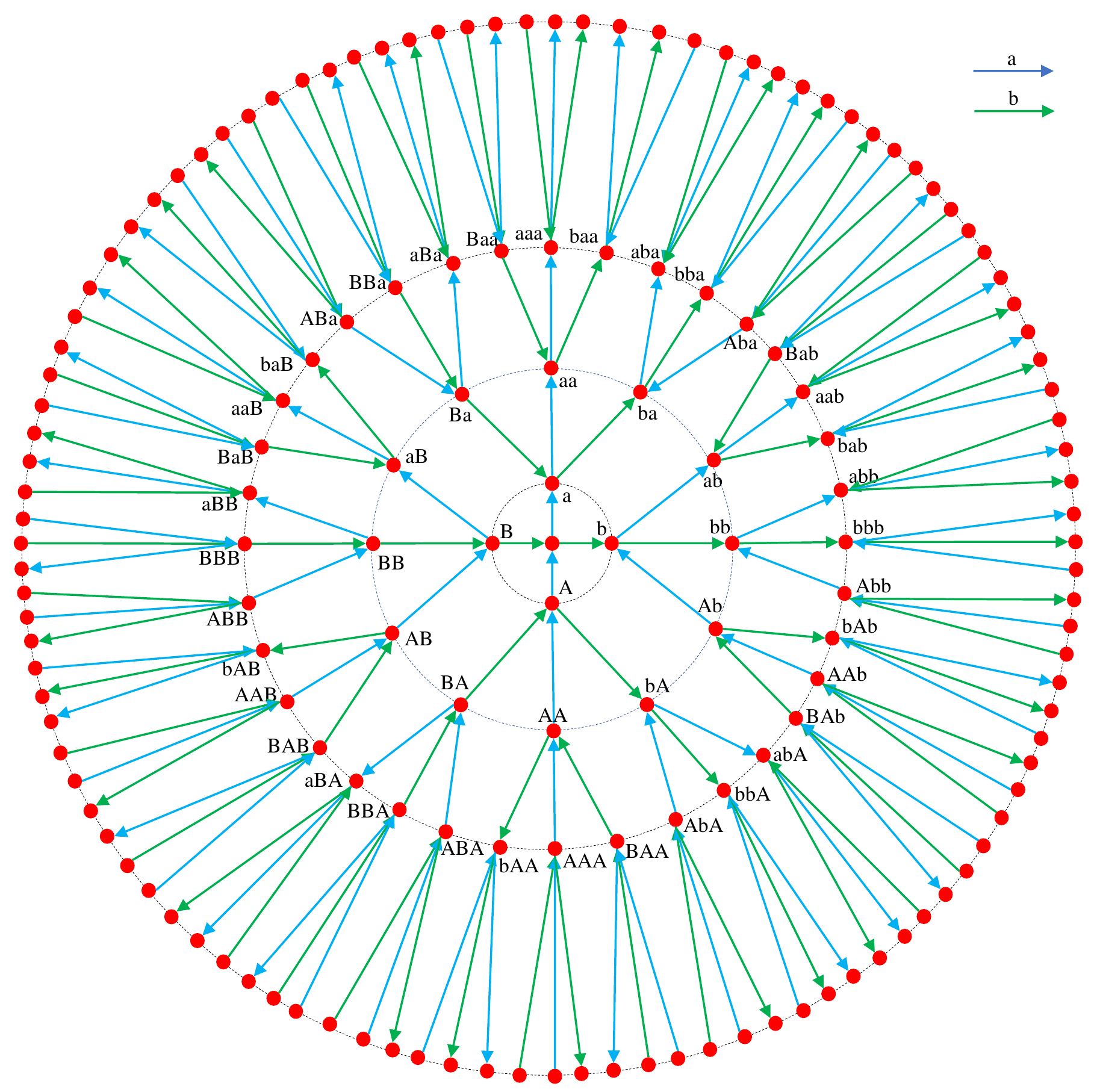}\\
  \caption{\small Section of the standard Cayley diagraph of $\FM^2(a,b)$, where $A:=a^{-1}$ and $B:=b^{-1}$. 
}
 \label{Fig:CayleyDiGraph}
\end{figure}

A more refined geometric object is the Cayley diagraph:

\begin{definition} Given a discrete group $G$ and a subset $S$ of $G$, let $c: S \to {\rm Color}$ assign a distinct color to each $s \in S$. Then the Cayley digraph $\vec \Cc(G,S,c)$ is the colored graph with vertex set $G$ and directed edges from $g$ to $sg$ for $g \in G$ and $s \in S$. All directed edges produced by $s \in S$ are assigned the color $c(s)$. 
\end{definition}

We will denote the standard diagraph of a finitely generated group by $\vec \Cc(G)$.

\begin{example}{\rm The standard Cayley diagraph of $\FM_2$ is shown in Fig.~\ref{Fig:CayleyDiGraph}.
}$\Diamond$
\end{example}

\begin{remark}\label{Re:Uniform}{\rm The standard Cayley graphs and diagraphs displayed above are distorted by our insistence to embed them in the 2-dimensional Euclidean plane. However, in the true world of these graphs, where only the connectivities count, they will appear entirely the same when observed from any of its vertices. Unfortunately, this is not always apparent from the geometric rendering of the graphs, yet it is a fact that can help us understanding the structure of these complex graphs. Note also that the standard graph of a finitely generated group is necessarily a regular graph whose degree equals the rank of the group.
}$\Diamond$
\end{remark}

The standard Cayley diagraphs reveal the flow of the vertices in response to the left or right action of the groups. For example, in Fig.~\ref{Fig:CayleyDiGraph}, one can quickly identify the orbits associated to the subgroups generated by either $a$ or $b$. Indeed, for example, the orbits of the former/latter appear as disjoint continuous paths of blue/green color. It is also clear from  Fig.~\ref{Fig:CayleyDiGraph} that there are no subsets that are invariant to the action of the full group, except for the empty set and whole graph. In other words, any finitely generated group acts ergodically on its standard Cayley graph, a fact that plays an important role in the characterization of the quantum dynamics (see Proposition~\ref{Pro:SpecChar}-ii).

\section{Quantum dynamics over Cayley graphs}\label{Sec:QDynamics}

\subsection{Patterns of quantum resonators} In our physical world, we can generate a quantum dynamics over a Cayley graph $\Cc(G)$ by simply placing quantum resonators at either the vertices or the edges of the graph. For example, in an ordinary solid state crystal,  the atoms can be viewed as quantum resonators sitting at the vertices of a graph generated from the appropriate space symmetry group. Synthetic quantum resonators, such as those in circuit quantum electrodynamics \cite{BlaisRMP2021}, can be and have been rendered and coupled in other graph configurations \cite{KollarNature2019}. 

We assume now that identical single-state quantum resonators have been placed at the vertices of a generic Cayley graph labeled by the elements of a finitely generated group $G$. In the one particle sector, the collective dynamics of the resonators takes place inside the Hilbert space $\ell^2(G)$ of square summable sequences over the graph, spanned by the vectors $|g\rangle$, $g \in G$. The dynamics is generated by a bounded and self-adjoint operator over this Hilbert space, which can always be written as a strongly convergent expansion
\begin{equation}\label{Eq:GenD}
H = \sum_{g,g' \in G} w_{g',g} \, |g' \rangle \langle g |, \quad w_{g,g'} = w_{g',g}^\ast.
\end{equation}
The parameters $w_{g',g} \in \CM$ are often called the coupling coefficients. 

\begin{remark}\label{Re:FiniteR}{\rm The physical reality always shows that the coupling coefficients become increasingly small with the graph distance between $g$ and $g'$, to a point where they cannot be resolved experimentally anymore. We recall that the graph distance between two vertices is the number of edges in the shortest path or paths joining the two vertices. Thus, the physically relevant Hamiltonians always have finite range, {\it i.e.} there exists $R \in \NM$ such that $w_{g,g'}=0$ if the graph distance between $g$ and $g'$ exceeds $R$.
}$\Diamond$
\end{remark}

If the coupling coefficients are adjusted arbitrarily, then the dynamics on the Cayley graph is no different from the dynamics over any other infinite graph. Indeed, separable infinite Hilbert spaces are all isomorphic, hence their algebras of bounded operators are isomorphic too. If we want the dynamics to reflect the symmetry of the Cayley graph, as stressed out in our Remark~\ref{Re:Uniform}, the coupling coefficients must by insensitive to the translations of the lattice:
\begin{equation}\label{Eq:Inv1}
w_{g'h,gh} = w_{g',g}, \quad \forall \ h,g,g' \in G.
\end{equation}
As we shall see, if this is indeed the case, then the dynamics of the resonators can be analyzed in a pure algebraic framework, canonically built from the group $G$ alone.

The right action of the group $G$ on itself induces the right regular representation $\pi_R$ of $G$ on $\ell^2(G)$:
\begin{equation}
U_g |g'\rangle = |g'g^{-1} \rangle, \quad g,g' \in G.
\end{equation}
One can verify that
\begin{equation}\label{Eq:Inv2}
U_hHU_h^\ast = \sum_{g,g' \in G} w_{g',g} \, |g'h^{-1} \rangle \langle g h^{-1}|= \sum_{g,g' \in G} w_{g'h,gh} \, |g' \rangle \langle g |=H,
\end{equation}
for any $h \in G$. Thus, the enforcement of the equivariant relation~\eqref{Eq:Inv1} produces Hamiltonians that are symmetric w.r.t. the natural group action. Furthermore, Eq.~\eqref{Eq:Inv1} also implies that the coefficients only depend on the relative index $q=g'g^{-1}$, which reduces the Hamiltonian to a very particular form, as the following formal manipulation shows:
\begin{equation}\label{Eq:GenD}
H = \sum_{g,g' \in G} w_{g',g} \, |g' \rangle \langle g |  = \sum_{g,g' \in G} w_{g'g^{-1},1} \, |g'g^{-1}g \rangle \langle g |,
\end{equation}
or, if we change the notation $w_{q'q^{-1},1}$ to $w_q$,
\begin{equation}\label{Eq:HFinal1}
H = \sum_{q \in G}\sum_{g \in G} w_q \, |q g \rangle \langle g | .
\end{equation}
In the above expression, we can quickly identify the left regular representation $\pi_L$ of the group on $\ell^2(G)$, induced by the left action of $G$ on itself. The conclusion is that 
\begin{equation}\label{Eq:HFinal2}
H = \sum_{q \in G} w_q \, \pi_L(q).
\end{equation}
In the light of Remark~\ref{Re:FiniteR}, the above sum involves a finite number of terms. Note however that the functional calculus with $H$ will generally produce sums with infinite terms. This is certainly the case for the resolvent $(z -H)^{-1}$.

Let us conclude with an observation about the relation Eq.~\eqref{Eq:Inv2} stating the invariance of the Hamiltonian w.r.t. the right action of the group: This relation has its origin in the associativity property of the group multiplication. Indeed,
\begin{equation}
\pi_L(q)\pi_R(h)|g\rangle = |q(gh^{-1})\rangle, \quad \pi_R(h)\pi_L(q)|g\rangle = |(qg)h^{-1}\rangle,
\end{equation}
and the associativity ensures that the two results coincide.

\subsection{Operator algebraic frameworks of analyses}
\label{Sec:GCAlg}

Given a discrete group $G$, its group algebra $\CM G$ consists of formal series
\begin{equation}\label{Eq:CG}
a = \sum_{g \in G} \alpha_g \, g, \quad \alpha_g \in \CM,
\end{equation}
where all but a finite number of terms are zero. Addition and multiplication of such formal series work in the obvious way, using the group and algebraic structures of $G$ and $\CM$, respectively. In addition, there exists a natural $\ast$-operation
\begin{equation}
a^\ast = \sum_{g \in G} \alpha_g^\ast \, g^{-1}, \quad (a^\ast)^\ast = a, \quad (\alpha a)^\ast = \alpha^\ast a^\ast, \ \alpha \in \CM.
\end{equation}
Hence, $\CM G$ is naturally a $\ast$-algebra.

\begin{remark}{\rm The group algebra has a unit $1:= e$, the neutral element of the group. Also, note that the elements of the group become unitary elements in this $\ast$-algebra, $g^\ast g = g g^\ast = 1$.
}$\Diamond$
\end{remark}

The linear map
\begin{equation}\label{Eq:Trace}
\Tt : \CM G \to \CM, \quad \Tt\Big (\sum_{g \in G} \alpha_g \, g \Big) = \alpha_e,
\end{equation}
defines a positive faithful trace on $\CM G$ and a pre-Hilbert structure on $\CM G$ via
\begin{equation}
\langle a| b \rangle : = \Tt(a^\ast b), \quad a,b \in \CM G.
\end{equation}
The completion of the linear space $\CM G$ under this pre-Hilbert structure supplies the Hilbert space $\ell^2(G)$, already encountered in the previous subsection. Indeed, one can verify that $\{|g\rangle \}_{g\in G}$ supplies an orthonormal basis for the so defined Hilbert space:
\begin{equation}
\langle g | g' \rangle = \Tt(g^{-1}g') = \delta_{g,g'}, \quad g,g' \in G.
\end{equation} 
The action of $\CM G$ on itself can be extended to the action of a bounded operator on $\ell^2(G)$, and this supplies the left regular representation $\pi_L$ of $\CM G$ inside the algebra $\BM\big(\ell^2(G)\big)$ of bounded operators over $\ell^2(G)$. Specifically,
\begin{equation}
\pi_L\Big (\sum_{g \in G} \alpha_g \, g \Big) |g'\rangle = \sum_g \alpha_g |g g' \rangle, \quad g' \in G.
\end{equation} 
This is simply the canonical extension of the left regular representation $\pi_L$ of the group to the algebra structure. By inspecting Eq.~\eqref{Eq:HFinal2}, we see that any Hamiltonian with finite coupling range can be generated as the left regular representation of an element from $\CM G$, specifically,
\begin{equation}
H = \pi_L(h), \quad h = \sum_{g \in G} w_g \, g.
\end{equation}

\begin{example}\label{Ex:Adjacency}{\rm The adjacency operator $\Delta$ of the standard Cayley graph of a finitely generated group $\langle \Gg | \Rr \rangle$ is defined as
\begin{equation}\label{Eq:Adjacency}
\Delta = \pi_L(\delta), \quad \delta = \sum_{g \in \Gg \cup \Gg^{-1}} g.
\end{equation}
It will be used extensively in our computer experiments.
}$\Diamond$
\end{example}

The group algebra $\CM G$, however, is not sufficient for analyzing the quantum dynamics. For example, the resolvent $(H -z)^{-1}$ or the projections onto the spectral bands of $H$ cannot be generated from $\CM G$ via the left regular representation. For this, we need to complete $\CM G$ to a $C^\ast$-algebra. Since the groups we are dealing with are not amenable, their group algebras accept two distinct $C^\ast$-completions \cite[Ch.~VII]{DavidsonBook}. The one that is relevant for the experiments we described in the previous subsection is the reduced group $C^\ast$-algebra $C^\ast_r(G)$ (see Remark~\ref{Re:RvsF}), generated by the completion of $\CM G$ with respect to the norm
\begin{equation}
\|a\|_r = \|\pi_L(a)\|_{\BM(\ell^2(G))},
\end{equation}
induced by the left-regular representation. The latter extends to a representation of $C^\ast_r(G)$ in $\BM(\ell^2(G))$, which coincides with the GNS representation associated to the trace $\Tt$ \cite{MurphyBook}. On the other hand, the full group $C^\ast$-algebra $C^\ast(G)$ is generated by the completion of $\CM G$ with respect to the norm
\begin{equation}
\|a\|_f = \sup \{ \|\pi(a) \|_{\BM(\Hh_\pi)} \, , \ \pi \ \mbox{is a }\ast\mbox{-representation of }\ell^1(G)\} 
\end{equation}
where $\|\cdot \|_{\BM(\Hh_\pi)}$ is the operator norm on the Hilbert space of the representation and $\ell^1(G)$ is the Banach algebra of the formal series of the type~\eqref{Eq:CG} corresponding to absolutely summable sequences$\{\alpha_g\}_{g \in G}$ (see \cite[Chs.~VII and VIII]{DavidsonBook} for more details):
\begin{equation}
\ell^1(G) = \Big \{a=\sum_{g \in G} \alpha_g \, g, \ \|a\|_1:=\sum_{g \in G}|\alpha_g| < \infty \Big \}. 
\end{equation}

As we shall see, a lot of what we are going to say in this work depends on the interplay between the algebras we just mentioned. Therefore, we elaborate on this aspect. First, by construction, $\ell^1(G)$ embeds into $C^\ast(G)$, that is, there exists an injective homomorphism $\mathfrak i : \ell^1(G) \to C^\ast(G)$. In the same time, the identity map on $\CM G$ can be extended to a $C^\ast$-algebra homomorphism 
\begin{equation}\label{Eq:Lambda}
\lambda : C^\ast(G) \to C^\ast_r(G),
\end{equation} 
which is a $C^\ast$-algebra isomporphism if and only if the group $G$ is amenable \cite[Th.~VII.2.5]{DavidsonBook}. The composition $\lambda \circ \mathfrak i$ is always injective, hence $\ell^1(G)$ also embeds into $C^\ast_r(G)$. For the free and Fuchsian groups, for example, $\lambda$ fails to be injective, which means the common elements from $\CM G$ will have different spectra when the computations are performed in $C^\ast(G)$ or $C^\ast_r(G)$. The reader can consult \cite{ArvesonBook2002} for a definition of spectrum in a pure algebraic setting and how spectra behave under algebra morphisms. Here, we mention the following phenomenon that is specific to $C^\ast$-algebras:

\begin{proposition}\label{Pro:SpecSubAlg} Let $\mathcal A$ and $\mathcal B$ be $C^\ast$-algebras such that $\Aa$ embeds into $\Bb$. Then ${\rm Spec}_\Aa(a) = {\rm Spec}_\Bb(a)$ for any $a \in \Aa$. In other words, while $a$ can be considered either as an element of $\Aa$ or an element of $\Bb$, when it comes to the spectrum, this distinction makes no difference.
\end{proposition}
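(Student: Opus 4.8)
The plan is to exploit the $C^\ast$-algebraic phenomenon of \emph{spectral permanence}, reducing the general statement to self-adjoint elements where the argument closes cleanly. First I would fix the setup: ``embeds'' means there is an injective $\ast$-homomorphism $\iota : \Aa \to \Bb$, and since an injective $\ast$-homomorphism between $C^\ast$-algebras is automatically isometric, $\iota$ identifies $\Aa$ with a closed $\ast$-subalgebra of $\Bb$. I would also assume the embedding is unital (or pass to unitizations), so that $\Aa$ and $\Bb$ share a unit and $\mathrm{Spec}(a)$ can be computed in either. One inclusion is then immediate: if $a-\lambda$ is invertible in $\Aa$, its inverse already lies in $\Aa \subseteq \Bb$, so $a-\lambda$ is invertible in $\Bb$; thus $\mathrm{Spec}_\Bb(a) \subseteq \mathrm{Spec}_\Aa(a)$. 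Everything therefore reduces to the reverse inclusion, i.e. to showing that an element of $\Aa$ invertible in the larger algebra is already invertible in $\Aa$.

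Next I would reduce to the self-adjoint case. In any unital algebra, $x$ is invertible if and only if both $x^\ast x$ and $x x^\ast$ are invertible, a left inverse being furnished by $(x^\ast x)^{-1} x^\ast$ and a right inverse by $x^\ast (x x^\ast)^{-1}$. Applying this to $x = a - \lambda$ and noting that $x^\ast x$ and $x x^\ast$ are self-adjoint elements of $\Aa$, it suffices to prove that a self-adjoint $b \in \Aa$ which is invertible in $\Bb$ is already invertible in $\Aa$, equivalently that $\mathrm{Spec}_\Aa(b) = \mathrm{Spec}_\Bb(b)$ for self-adjoint $b$.

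The heart of the argument is the classical Banach-algebra fact that boundary points of the spectrum are topological divisors of zero, and hence survive under enlargement of the algebra. Concretely, if $\lambda \in \partial\,\mathrm{Spec}_\Aa(b)$ then $\lambda$ is a limit of resolvent points $\lambda_n$ with $\|(b-\lambda_n)^{-1}\| \to \infty$; normalizing the resolvents produces unit vectors $y_n$ with $(b-\lambda)y_n \to 0$, so $b-\lambda$ is a topological divisor of zero and cannot be invertible in any unital Banach algebra containing it. This gives $\partial\,\mathrm{Spec}_\Aa(b) \subseteq \mathrm{Spec}_\Bb(b)$. Now I invoke the defining $C^\ast$-feature: a self-adjoint element of a $C^\ast$-algebra has real spectrum, so $\mathrm{Spec}_\Aa(b) \subseteq \RM$. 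A subset of $\RM$ has empty interior in $\CM$, whence every one of its points is a boundary point, so $\mathrm{Spec}_\Aa(b) = \partial\,\mathrm{Spec}_\Aa(b) \subseteq \mathrm{Spec}_\Bb(b) \subseteq \mathrm{Spec}_\Aa(b)$, forcing equality. Tracing back through the reduction yields the claim for arbitrary $a$.

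I expect the step needing the most care to be the passage from the Banach-algebra boundary inclusion to genuine equality. For a general Banach subalgebra one only obtains that $\mathrm{Spec}_\Aa(a)$ arises from $\mathrm{Spec}_\Bb(a)$ by ``filling in holes'' of the resolvent set, and equality can genuinely fail (the disc algebra sitting inside $C(\SM^1)$, with the coordinate function, is the standard counterexample). It is precisely the self-adjoint reduction combined with the reality of the spectrum of self-adjoint elements --- both special to the $C^\ast$-setting --- that removes the holes and upgrades the inclusion to an equality. I would also double-check the unit conventions, since the clean statement needs $\Aa$ and $\Bb$ to share a unit; in the non-unital case the same argument runs in the unitizations, where $0$ lies in both spectra automatically.
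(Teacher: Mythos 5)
The paper offers no proof of this proposition: it is stated as a known phenomenon specific to $C^\ast$-algebras (spectral permanence), with the reader pointed to \cite{ArvesonBook2002} for how spectra behave in the abstract setting, so there is nothing in the paper to compare your argument against line by line. Your proof is the standard textbook argument (it is essentially the one in Arveson's and Murphy's books) and it is correct and complete: the easy inclusion ${\rm Spec}_\Bb(a) \subseteq {\rm Spec}_\Aa(a)$; the reduction of invertibility of $x = a-\lambda$ to invertibility of the self-adjoint elements $x^\ast x$ and $x x^\ast$, with the explicit one-sided inverses $(x^\ast x)^{-1}x^\ast$ and $x^\ast(xx^\ast)^{-1}$; the Banach-algebra fact that boundary points of the spectrum are topological divisors of zero and hence remain singular in any larger unital Banach algebra; and the $C^\ast$-specific input that self-adjoint elements have real spectrum, so the spectrum coincides with its own boundary and the ``holes'' that plague general Banach subalgebras (your disc-algebra example) cannot occur. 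Two cosmetic remarks: the normalized resolvents $y_n$ are norm-one \emph{elements of $\Aa$}, not ``unit vectors'' (no Hilbert space is in play at that point); and your caveat about units is the right one --- the clean statement needs the embedding to preserve the unit, since a non-unital embedding such as the corner inclusion $\CM \hookrightarrow M_2(\CM)$ adjoins the point $0$ to the spectrum. That caveat is harmless for the paper's applications, because the embeddings actually used there, e.g. $\pi_L : C^\ast_r(G) \to \BM(\ell^2(G))$ and $C^\ast_r(H) \hookrightarrow C^\ast_r(G)$ for $H$ a subgroup of $G$, are all unit-preserving.
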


We now come to some of the most powerful tools we have for group $C^\ast$-algebras, namely the functorial relation between the group and its $C^\ast$-algebra:

\begin{proposition}[\cite{NassopoulosIJCMS2008}]\label{Pro:LiftMor} Let $G$ and $H$ be discrete groups and $\rho : G \to H$ be a group homomorphism. Then $\rho$ lifts to an algebra homomorphism between $C^\ast(G)$ and $C^\ast(H)$, which on $\ell^1(G)$ acts as
\begin{equation}
\ell^1(G) \ni \sum_{g \in G} \alpha_g \, g \mapsto \sum_{g \in G} \alpha_g \, \rho(g) \in \ell^1(H).
\end{equation}
\end{proposition}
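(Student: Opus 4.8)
The plan is to exploit the universal property built into the definition of the full group $C^\ast$-algebra as a completion of $\ell^1(G)$ with respect to the supremum of operator norms over all $\ast$-representations. The argument has two stages: first realize the homomorphism at the purely algebraic level on $\ell^1$, and then extend it to the $C^\ast$-completions through a single contractivity estimate comparing the two full norms.

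First I would define $\tilde\rho : \ell^1(G) \to \ell^1(H)$ by the stated formula $\sum_g \alpha_g\, g \mapsto \sum_g \alpha_g\, \rho(g)$ and check that it is a contractive $\ast$-homomorphism. Multiplicativity is immediate because $\rho$ is a group homomorphism, so $g \mapsto \rho(g)$ respects products; compatibility with the $\ast$-operation follows from $\rho(g^{-1}) = \rho(g)^{-1}$. Contractivity $\|\tilde\rho(a)\|_1 \le \|a\|_1$ comes from grouping coefficients along the fibers of $\rho$ and applying the triangle inequality, $\sum_{h \in H}\big|\sum_{g \in \rho^{-1}(h)}\alpha_g\big| \le \sum_{g \in G}|\alpha_g|$, which simultaneously shows that $\tilde\rho$ lands in $\ell^1(H)$ and is well-defined even when $\rho$ is not injective.

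The extension to $C^\ast(G)$ rests on the observation that $\tilde\rho$ pulls representations back: if $\pi$ is any $\ast$-representation of $\ell^1(H)$, then $\pi \circ \tilde\rho$ is a $\ast$-representation of $\ell^1(G)$, since it is a composite of $\ast$-homomorphisms. Consequently, for every $a \in \ell^1(G)$,
\[
\|\tilde\rho(a)\|_f = \sup_{\pi} \|\pi(\tilde\rho(a))\| = \sup_{\pi} \|(\pi \circ \tilde\rho)(a)\| \le \sup_{\sigma} \|\sigma(a)\| = \|a\|_f,
\]
where $\pi$ ranges over $\ast$-representations of $\ell^1(H)$ and $\sigma$ over those of $\ell^1(G)$, the inequality holding because the family $\{\pi \circ \tilde\rho\}$ is contained in the family of all $\sigma$. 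Composing $\tilde\rho$ with the canonical embedding $\ell^1(H) \hookrightarrow C^\ast(H)$ thus yields a map that is contractive for the full norm on $\ell^1(G)$; since $\ell^1(G)$ is dense in $C^\ast(G)$ by construction, this extends uniquely and continuously to a $\ast$-homomorphism $C^\ast(G) \to C^\ast(H)$ restricting to the announced formula on $\ell^1(G)$.

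The only genuinely delicate point, and the one I would state carefully, is the norm comparison: one must recognize that the supremum defining $\|\cdot\|_f$ on $\ell^1(H)$ is taken over exactly the class of representations that pull back along $\tilde\rho$ to legitimate representations of $\ell^1(G)$, and that every such supremum is finite because each $\ast$-representation is automatically $\|\cdot\|_1$-contractive. It is worth emphasizing that this functoriality is special to the \emph{full} completion and fails for the reduced one --- the very dichotomy stressed earlier in the excerpt --- because the reduced norm is pinned to the single left-regular representation, which a homomorphism $\rho$ need not intertwine with the left-regular representation of $G$.
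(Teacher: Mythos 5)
Your proof is correct: the contractive $\ast$-homomorphism on $\ell^1(G)$, the pull-back of representations of $\ell^1(H)$ along $\tilde\rho$ to dominate the full norm, and the extension by density are exactly the standard universal-property argument, and your closing caveat about why this fails for the reduced completion is also accurate. The paper itself offers no proof of this proposition---it is imported by citation to Nassopoulos---so there is nothing internal to compare against; your argument is the one the cited reference (and any textbook treatment of functoriality of full group $C^\ast$-algebras) employs.
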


\begin{remark}{\rm According to the above statement, any morphism from $G$ to a finite group produces a finite representation of the group $C^\ast$-algebra. In fact, all finite representations of $C^\ast(G)$ can be generated in this way.
}$\Diamond$
\end{remark} 

If we replace the full by the reduced group $C^\ast$-algebra, this functorial relation continue to exist for group  morphisms with amenable kernels, in particular, for the injective ones. More precisely:

\begin{proposition}[\cite{BrownOzawaBook}, Prop.~2.5.9 \& Cor. 2.5.12]\label{Pro:CRInjective} If $\rho : H \to G$ is injective, then $\rho$ extends to an injective morphism between the corresponding reduced group $C^\ast$-algebras. Furthermore, if $C^\ast_r(H)$ is identified with its image in $C^\ast_r(G)$, then there exists a conditional expectation $E_G^H: C^\ast_r(G) \to C^\ast_r(H)$ which acts like $E_G^H(g)=0$ if $g\notin H$ and $E_G^H(g)=g$ otherwise.
\end{proposition}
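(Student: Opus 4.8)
The statement records two facts that I would establish separately: the injectivity of the induced map on reduced $C^\ast$-algebras, and the existence of the conditional expectation. Since $\rho$ is injective, throughout I identify $H$ with the subgroup $\rho(H) \subseteq G$, so the claimed extension is the map sending the left regular representation $\pi_L^H(h)$ of $H$ on $\ell^2(H)$ to $\pi_L(h)$ for $h \in H$.

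For injectivity, the plan is to exhibit the restricted representation $\pi_L|_H$ as a multiple of the left regular representation of $H$. First I would fix a set $\{t_i\}_i$ of representatives for the right cosets, so that $G = \bigsqcup_i H t_i$. Each subspace $\ell^2(H t_i) \subset \ell^2(G)$ is invariant under the left action of $H$, and the unitary $\delta_{h t_i} \mapsto \delta_h$ intertwines $\pi_L(h)\big|_{\ell^2(Ht_i)}$ with $\pi_L^H(h)$. Hence $\pi_L|_H$ is unitarily equivalent to the (possibly infinite) amplification of $\pi_L^H$. Since a multiple of a faithful representation is faithful and preserves the $C^\ast$-norm, this yields $\|\pi_L(a)\| = \|\pi_L^H(a)\|$ for every $a \in \CM H$, and the identity map on $\CM H$ then extends by continuity to an injective $\ast$-homomorphism $C^\ast_r(H) \to C^\ast_r(G)$.

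For the conditional expectation, I would let $P : \ell^2(G) \to \ell^2(H)$ be the orthogonal projection onto the closed subspace spanned by $\{\delta_h : h \in H\}$ and define
\begin{equation}
E_G^H(a) = P\, \pi_L(a)\, P\big|_{\ell^2(H)}, \qquad a \in C^\ast_r(G).
\end{equation}
Computing on basis vectors gives $P\pi_L(g)P\,\delta_h = P\delta_{gh}$, which equals $\delta_{gh}$ when $g \in H$ and vanishes otherwise, because $gh \in H$ if and only if $g \in H$. Thus $E_G^H(g) = \pi_L^H(g)$ for $g \in H$ and $E_G^H(g) = 0$ for $g \notin H$, so $E_G^H$ has the advertised action on group elements and carries $\CM G$ into $\CM H$. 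Being a compression by a projection, $E_G^H$ is completely positive and contractive, and it restricts to the identity on $\CM H$, hence on $C^\ast_r(H)$ by continuity. Once I know the image lands in $C^\ast_r(H)$, it is a contractive idempotent onto a $C^\ast$-subalgebra, and Tomiyama's theorem upgrades it automatically to a $C^\ast_r(H)$-bimodular conditional expectation.

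The step I expect to be the main obstacle is precisely the containment $E_G^H(C^\ast_r(G)) \subseteq C^\ast_r(H)$: the compression a priori lands only in $\BM(\ell^2(H))$, and one must rule out escaping the reduced algebra of $H$. I would close this gap by a density argument in the reduced norm: given $a \in C^\ast_r(G)$, choose $a_n \in \CM G$ with $a_n \to a$, so that $E_G^H(a_n) \in \CM H \subseteq C^\ast_r(H)$ and $E_G^H(a_n) \to E_G^H(a)$ by contractivity; since $C^\ast_r(H)$ is norm-closed, the limit $E_G^H(a)$ lies in $C^\ast_r(H)$. It is the interplay between this continuity argument and the identification of $\pi_L|_H$ with a multiple of $\pi_L^H$ that pins the image inside the reduced $C^\ast$-algebra of $H$ and makes both assertions fit together.
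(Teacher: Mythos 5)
Your proposal is correct, and it reproduces the standard argument of Brown--Ozawa (Prop.~2.5.9 and its corollaries), which is exactly what the paper cites for this proposition without supplying a proof of its own: the right-coset decomposition $\ell^2(G)=\bigoplus_i \ell^2(Ht_i)$ exhibiting $\pi_L|_{\CM H}$ as an amplification of $\pi_L^H$ (hence isometric for the reduced norms), and the compression $a\mapsto P\pi_L(a)P$ by the projection onto $\ell^2(H)$, with the norm-density argument pinning the image inside $C^\ast_r(H)$. Nothing further is needed; your appeal to Tomiyama's theorem for bimodularity is fine (and could even be replaced by a direct check of $E_G^H(h_1 x h_2)=h_1 E_G^H(x) h_2$ on group elements).
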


\begin{corollary}\label{Cor:SpecAB} The above statement can be paired with Proposition~\ref{Pro:SpecSubAlg} to conclude that  ${\rm Spec}_{C^\ast_r(H)}(h) = {\rm Spec}_{C^\ast_r(G)}(h)$ for $h \in C^\ast_r(H)$ and $H$ a subgroup of $G$.
\end{corollary}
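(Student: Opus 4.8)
The plan is to observe that the statement is an immediate consequence of chaining the two preceding propositions, once one notes that the inclusion of a subgroup into the ambient group is an injective group homomorphism.

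First I would take $\iota : H \hookrightarrow G$ to be the canonical inclusion of $H$ as a subgroup of $G$; this is by definition an injective group homomorphism. Applying Proposition~\ref{Pro:CRInjective} with $\rho = \iota$, I obtain an injective $C^\ast$-algebra morphism $\iota_\ast : C^\ast_r(H) \to C^\ast_r(G)$ that extends $\iota$ at the level of the underlying group algebras. Since any injective $\ast$-homomorphism between $C^\ast$-algebras is automatically isometric, $\iota_\ast$ realizes $C^\ast_r(H)$ as a sub-$C^\ast$-algebra of $C^\ast_r(G)$. Identifying $C^\ast_r(H)$ with its image $\iota_\ast\big(C^\ast_r(H)\big)$, we are then precisely in the setting of Proposition~\ref{Pro:SpecSubAlg} with $\Aa = C^\ast_r(H)$ and $\Bb = C^\ast_r(G)$. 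Invoking that proposition directly, for any $h \in C^\ast_r(H) \subset C^\ast_r(G)$ the spectrum computed in the smaller algebra coincides with the spectrum computed in the larger one, which is exactly the asserted equality ${\rm Spec}_{C^\ast_r(H)}(h) = {\rm Spec}_{C^\ast_r(G)}(h)$.

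There is no substantive obstacle here: all of the analytic content — both the existence of the injective extension to the reduced $C^\ast$-algebras (which rests on the fact that the trivial kernel of $\iota$ is amenable, as flagged in the remark preceding Proposition~\ref{Pro:CRInjective}) and the spectral-permanence property of $C^\ast$-subalgebras — has already been packaged into the two cited propositions. The only point deserving a moment's attention is that the conditional expectation $E_G^H$ furnished by Proposition~\ref{Pro:CRInjective} is not needed for the spectral conclusion; what is actually used is merely the injectivity of $\iota_\ast$, which suffices to place us within the hypotheses of Proposition~\ref{Pro:SpecSubAlg}.
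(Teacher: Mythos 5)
Your proposal is correct and is essentially the paper's own argument: the corollary is obtained exactly by applying Proposition~\ref{Pro:CRInjective} to the subgroup inclusion $H \hookrightarrow G$ to realize $C^\ast_r(H)$ as an embedded sub-$C^\ast$-algebra of $C^\ast_r(G)$, and then invoking the spectral permanence of Proposition~\ref{Pro:SpecSubAlg}. Your added remarks---that injectivity (via the amenable, indeed trivial, kernel) is the only ingredient needed and that the conditional expectation $E_G^H$ plays no role here---are accurate and consistent with how the paper uses these results.
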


\begin{remark}\label{Re:RvsF}{\rm  We are now in a position from where we can explain why the physical experiments engage the reduced group $C^\ast$-algebra. Indeed, note that the trace $\Tt$ is faithful, hence the left regular representation of $C_r^\ast(G)$ is faithful. Thus $\pi_L$ is an embedding of $C^\ast_r(G)$ into $\BM(\ell^2(G))$, the $C^\ast$-algebra of bounded operators over the Hilbert space $\ell^2(G)$ of physical observations. Now, when we compute or measure the spectrum of a Hamiltonian $H = \pi_L(h)$, with $h \in \CM G \subset C^\ast_r(G)$, we do so inside $\BM(\ell^2(G))$, because  in a laboratory we manipulate and observe wavefunctions from $\ell^2(G)$. By the above proposition, however, same results will be obtained if the computations are carried inside $C^\ast_r(G)$ and this is why we claim that the physical experiments engage $C^\ast_r(G)$ and not $C^\ast(G)$.
}$\Diamond$
\end{remark}

\begin{remark}{\rm We recall that the resolvent set of an element $a$ of a $C^\ast$-algebra $\Aa$ consists of all those $\xi \in \CM$ for which $h-\xi$ is invertible in that algebra. The spectrum of $a$ is the complement of its resolvent set. To decide if $\xi$ is inside or not in this spectrum, one has to search all algebra $\Aa$ and see if there is an element that can serve as the inverse of $h-\xi$. Now, since $h$ belongs to $\CM G$ and the latter is part of both $C^\ast(G)$ and $C^\ast_r(G)$, there is the dilema in which of them to search for the inverse of $h-\xi$. The previous remark answer this question: This inverse should be searched for in $C^\ast_r(G)$. But what happens if we search for it in $C^\ast(G)$? Proposition~\ref{Pro:SpecSubAlg} requires an injective homomorphism from $\Aa$ to $\Bb$ in order to hold. As we already mentioned, the homomorphism $\lambda$ from Eq.~\eqref{Eq:Lambda} fails to be injective for the groups we study here, hence $C^\ast(G)$ does not embed into $\BM(\ell^2(G))$. This means that the search for the inverse of $h-\xi$ inside $C^\ast(G)$ will return different results and the spectrum of $H = \pi_L(h)$ will not be correctly computed.  
}$\Diamond$
\end{remark}

\begin{example}\label{Ex:SpurSpec}{\rm Consider the adjacency operator~\eqref{Eq:Adjacency} on the standard Cayley graph of $\FM_2$. It is generated by an element $\delta$ from $\CM \FM_2$, which can be seen either as an element of $C^\ast(G)$ or $C^\ast_r(G)$. Let us consider the first alternative. Since $\delta$ is a linear combination of four unitary operators, its spectrum must be contained inside the interval $[-4,4]$. Now the subgroup $[G,G]$ of commutators sits as a normal subgroup inside $G$. The quotient group $G/[G,G]$ is called the abelianization of $G$. Every 1-dimensional representation of $G$ factors through it, in the sense that it can be obtained as a composition of the projection $G \to G/[G,G]$ and a character of $G/[G,G]$. According to Proposition~\ref{Pro:LiftMor}, the morphism $G \to G/[G,G]$ lifts to a morphism between the corresponding full group $C^\ast$-algebras (but not between the reduced $C^\ast$-algebras!). Thus, we can talk about the abelianization of the adjacency operator, which is just the discrete Laplace operator on the regular lattice $\ZM^2$, whose spectrum can be easily computed as $[-4,4]$. From the behavior of the spectra under morphisms \cite{ArvesonBook2002}, we conclude that the spectrum of the adjacency operator inside $C^\ast(G)$ is the full interval $[-4,4]$. On the other hand, the spectrum of the adjacency operator inside $\BM(\ell^2(\FM_2))$, {\it i.e.} the real physical one, is the interval $[-2\sqrt{3},2\sqrt{3}]$ \cite{McKayLG1981}.
}$\Diamond$
\end{example}

\begin{remark}{\rm The above example also demonstrates that the 1-dimensional representations generate spurious spectrum. We can describe what is going on without appealing to any algebraic arguments. Indeed, as a general rule, $\xi$ belongs to spectrum of $H$ if there is a sequence $\{\psi_k\}$ from $\ell^2(G)$ such that $\|\psi_k\| = 1$ and $\lim_{k \to \infty}\|H\psi_k - \xi \, \psi_k\| =0$. Now, the abelianization of $\FM_2$ is just $\ZM^2$ and suppose that, through the 1-dimensional representations of this abelian model, we find $\xi$ and a sequence $\{\psi_k^{(1)}\}$ in $\ell^2(\ZM^2)$ which check the spectral criterion for the abelianization of $H$. Unfortunately, there is now way to pull this sequence back on $\ell^2(\FM_2)$ and check the spectral criterion for $H$ itself.
}$\Diamond$
\end{remark}

Brisk yet quite complete characterizations of the full and reduced $C^\ast$-algebras of the free group $\FM_2$ can be found in \cite[Sec.~VII.6]{DavidsonBook}. For reader's convenience, we reproduce some of these results, together with some implications relevant for the present context:
\begin{enumerate}[\noindent a)]
\item $C^\ast(\FM_2)$ accepts a discrete family $\pi_n$, $n\geq 1$, of finite dimensional representations such that $\oplus_{n\geq 1} \pi_n$ is a faithful representation. They can be produced by a standard algorithm described in \cite[pg.~204]{DavidsonBook}. On the other hand, $C^\ast_r(\FM_2)$ is simple, hence it does not accept any finite dimensional representation.
\item Both $C^\ast(\FM_2)$ and $C^\ast_r(\FM_2)$ accept faithful traces. There are many such traces on $C^\ast(\FM_2)$ but only one on  $C^\ast_r(\FM_2)$ (the one appearing in Eq.~\eqref{Eq:Trace}).
\item Both $C^\ast(\FM_2)$ and $C^\ast_r(\FM_2)$ do not not contain any non-trivial projections. As such, any Hamiltonian $h \in C^\ast_r(\FM_2)$ displays a single spectral band, hence no spectral gaps. Of course, gaps in the Hamiltonian spectra can be open by tensoring with algebras of finite matrices, which amounts to layering the lattices of metamaterials.
\end{enumerate}

\begin{remark}{\rm All a-c) points apply also to $C^\ast(\Ff_2)$ and $C^\ast_r(\Ff_2)$ algebras \cite{AkemannIUM,AkemannHJM1981}.
}$\Diamond$
\end{remark}

The following statements give a coarse characterization of the spectral properties of the elements of $C^\ast_r(G)$, for a generic finitely generated group $G$:

\begin{proposition}\label{Pro:SpecChar} Let $G$ be a finitely generated group. Then:
\begin{enumerate}[\noindent \ \rm i)]
\item If $w$ is an element of infinite order, then the spectrum of $\pi_L(w)$, which is a unitary operator by default, is the full circle and it is absolutely continuous. Same for $\pi_R(w)$.
\item If the group contains elements of infinite order, then the spectrum of any $\pi_L(h)$, $h \in C^\ast_r(G)$ is void of discrete spectrum. Consequently, the image in $\BM(\ell^2(G))$ of $C^\ast_r(G)$ through the left regular representation does not contain any compact operator.
\item If the group does not contain any element of finite order and obeys the Baum-Connes conjecture \cite{BaumContMath1994}, then the spectrum of any $\pi_L(h)$, $h=h^\ast \in C^\ast_r(G)$, is connected ({\it i.e.} there are no gaps in the spectrum).
\end{enumerate}
\end{proposition}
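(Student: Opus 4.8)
The plan is to treat the three items in order, reducing (ii) to the structural information extracted in (i) and reducing (iii) to one deep external input. For (i), I would exploit that left translation by $w$ merely permutes the orthonormal basis $\{|g\rangle\}$ according to the orbits of $\langle w\rangle$ acting on $G$ by left multiplication. Since $w$ has infinite order, $\langle w\rangle \cong \ZM$, and each orbit $\{w^n g : n\in\ZM\}$ is a faithful copy of $\ZM$; picking one representative per class in $\langle w\rangle\backslash G$ yields an orthogonal decomposition $\ell^2(G) = \bigoplus_{\langle w\rangle\backslash G}\ell^2(\ZM)$, on each summand of which $\pi_L(w)$ acts as the bilateral shift $|n\rangle\mapsto|n+1\rangle$. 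The bilateral shift is unitarily equivalent, via the discrete Fourier transform $\ell^2(\ZM)\cong L^2(\TM)$, to multiplication by the coordinate function, whose spectrum is all of $\TM$ and whose spectral measure is Lebesgue measure, hence purely absolutely continuous. A direct sum of copies of one and the same operator keeps both features, which proves the claim for $\pi_L(w)$; the argument for $\pi_R(w)$ is identical, decomposing instead along the orbits $G/\langle w\rangle$.

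For (ii), the key observation I would use is that $\pi_L$ and $\pi_R$ commute, so every element of $\pi_L(C^\ast_r(G))$ commutes with the unitary $U:=\pi_R(w)$ for a fixed infinite-order $w$. By (i), $U$ has no eigenvectors; consequently no nonzero finite-dimensional subspace $V$ can be $U$-invariant, since $U|_V$ would be a linear operator on a nonzero finite-dimensional complex space and would therefore possess an eigenvector. Now if $\pi_L(h)$ had a point $\lambda$ of discrete spectrum (an isolated eigenvalue of finite algebraic multiplicity), the associated Riesz projection $P = \frac{1}{2\pi\I}\oint (z-\pi_L(h))^{-1}\,dz$ would be a nonzero finite-rank element of $C^\ast_r(G)$ --- using that resolvents, and hence their contour integral, remain in the unital $C^\ast$-algebra --- and its range would be a nonzero finite-dimensional subspace invariant under everything commuting with $\pi_L(h)$, in particular under $U$, a contradiction. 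This rules out discrete spectrum. The ``consequently'' clause then follows: a nonzero compact $K=\pi_L(a)$ would make $\pi_L(a^\ast a)=K^\ast K$ a nonzero positive compact operator, whose largest eigenvalue is an isolated point of finite multiplicity, i.e.\ discrete spectrum, which we have just excluded.

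For (iii), I would invoke the deep input directly: for a torsion-free group obeying the Baum--Connes conjecture, the reduced group $C^\ast$-algebra has the Kadison--Kaplansky property, namely it contains no projections other than $0$ and $1$ (the normalized faithful trace $\Tt$ induces an integer-valued homomorphism on $K_0(C^\ast_r(G))$ by Baum--Connes, so every projection has trace in $\ZM\cap[0,1]=\{0,1\}$, and faithfulness forces it to be $0$ or $1$). Granting this, suppose $\sigma(\pi_L(h))$ were disconnected for some $h=h^\ast$; then it splits into two nonempty compact pieces separated by a gap, and a continuous function equal to $0$ on one piece and $1$ on the other produces, by functional calculus, a projection in $C^\ast_r(G)$ that is neither $0$ nor $1$, contradicting Kadison--Kaplansky. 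Hence the spectrum has no gaps.

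The main obstacle I anticipate is not in (i), which is elementary, but in making the commutation argument of (ii) fully airtight at the level of ``discrete spectrum'': one must fix the precise notion, justify that the Riesz projection genuinely lands inside $\pi_L(C^\ast_r(G))$, and confirm that its range is finite-dimensional so that the absence of eigenvectors for $U$ actually bites. Part (iii) is short once the Baum--Connes/Kadison--Kaplansky input is cited, the real content there being external and assumed.
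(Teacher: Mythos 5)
Your proposal is correct and follows essentially the same route as the paper's proof: the coset/orbit decomposition of $\ell^2(G)$ into copies of the bilateral shift for (i), the commutation of $\pi_L$ with $\pi_R(w)$ together with the impossibility of finite-dimensional $\pi_R(w)$-invariant subspaces for (ii), and the absence of nontrivial projections in $C^\ast_r(G)$ (Kadison--Kaplansky via Baum--Connes) combined with continuous functional calculus for (iii). Your write-up is somewhat more detailed than the paper's---notably the Riesz-projection formulation and the explicit proof of the compact-operator corollary, which the paper states without argument---but the underlying ideas coincide.
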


\begin{proof} i) We will make use of the right cosets of the subgroup $W=\langle w \rangle$ of $G$ generated by $w$. Every such right coset is invariant w.r.t. the left action of $W$ and it can be presented as $\{w^n g\}_{n\in \ZM}$, for some $g \in G$. If $C_w$ denotes the set of the right cosets of $W$, then we have the decomposition of the Hilbert space $\ell^2(G)$ into invariant sub-spaces for the action of $\pi_L(w)$, $\ell^2(G) = \bigoplus_{C_w} \ell^2(\ZM)$. Furthermore, $\pi_L(w)$ acts as the shift operator on each of the invariant subspaces $\ell^2(\ZM)$. The first statement then follows from the spectral property of this particular operator. The case involving the right regular representation is proved similarly by considering the left cosets.

ii) According to i), the right regular representation of an element of infinite order, say $w$, has continuous spectrum. Hence, there are no invariant finite dimensional subspaces of $\ell^2(G)$ for the action of $\pi_R(w)$. Therefore, $\pi_L(a)$ cannot have discrete spectrum because the projection onto its corresponding Hilbert subspace will be finite dimensional and invariant to the action of $\pi_R(w)$.

iii) In the stated conditions, $C^\ast_r(G)$ contains no proper projections \cite{BaumContMath1994}. Thus, all self-adjoint elements have gapless spectrum.
\end{proof}

\begin{remark}{\rm H. Kesten was the first to consider random walks on non-commutative groups and to investigate spectra of Markovian operators \cite{KestenTAMS1959,KestenMS1959}. He computed the spectrum of a simple random walk on a free group and the corresponding spectral measure was found to be absolutely continuous. This complements the information we already supplied in Proposition~\ref{Pro:SpecChar}, for the case of free groups. It seems that the continuity of the spectrum is a common feature of the adjacency operators on Cayley graphs of finitely generated groups \cite{MoharBLMS1989}. For infinitely generated groups, however, we already know that this is not always the case \cite{GrigorchukGD2001}. 
}$\Diamond$
\end{remark}

\begin{remark}{\rm $C^\ast$-algebras are only stable w.r.t. functional calculus involving continuous functions. When Borel calculus is mentioned, we automatically place the discussion in the context of the group von Neumann algebra, which is the bicommutant of $C^\ast_r(G)$, identified as the weak closure of the image of $C^\ast_r(G)$ in $\BM(\ell^2(G))$. We refrained from discussing this von Neumann closure for two reasons: Firstly, we are mostly interested in the continuous functional calculus ({\it e.g.} the gap projections of a Hamiltonian can be produced by this calculus) and, secondly, because $K$-theories generally become irrelevant once one passes from $C^\ast$-algebras to their von Neumann closures.
}$\Diamond$
\end{remark}

\subsection{Exact results} Here we show that there exists a large class of Hamiltonians whose spectral properties can be deduced from combinatorial exercises on appropriate Cayley graphs. Some of the results reported in this subsection are already known for the adjacency operator (see \cite{MoharBLMS1989,WoessBLMS1994} for surveys of the topic).

Now, let $S$ be a finite symmetric subset of $G$ ({\it i.e.} $S^{-1}=S$), not containing the neutral element, and consider the self-adjoint element
\begin{equation}\label{Eq:HamS}
h=\sum_{s \in S} s \in \CM G.
\end{equation}
The left regular representation of elements of this type supply the class we are interested in. The Hamiltonian~\eqref{Eq:HamS} and its associated functional calculus can be naturally studied on the Cayley graph $\Cc(G,S)$. For this particular graph, we denote by $N_{g}(n)$ the number of graph paths of length $n$ from the site labeled by $g$ to the origin labeled by the neutral element $e$.

\begin{proposition} $N_g(n) \leq |S|^n$.
\end{proposition}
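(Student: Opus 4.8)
The plan is to reduce the claim to a direct enumeration of walks in the Cayley graph by encoding each walk as a sequence of generators drawn from $S$. Recall that in $\Cc(G,S)$ the neighbors of a vertex $g'$ are precisely the elements $s g'$ with $s \in S$; since $s_1 g' = s_2 g'$ forces $s_1 = s_2$, and since $e \notin S$ rules out self-loops while $S = S^{-1}$ makes the graph genuinely undirected, the graph is $|S|$-regular. Thus every vertex has exactly $|S|$ edges emanating from it, each one naturally labeled by the generator $s \in S$ producing it.

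First I would make precise the bijection between walks and generator sequences. A path (walk) of length $n$ starting at $g$ is a sequence of vertices $g = g_0, g_1, \ldots, g_n$ with $g_k$ adjacent to $g_{k-1}$ for each $k$. By the regularity just noted, each step $g_{k-1} \mapsto g_k$ is recorded by a unique $s_k \in S$ with $g_k = s_k g_{k-1}$, so the walk is completely determined by the tuple $(s_1, \ldots, s_n) \in S^n$, and conversely every such tuple defines a legitimate walk via $g_k := s_k \cdots s_1 \, g$. Hence the set of walks of length $n$ issuing from $g$ is in bijection with $S^n$, and there are exactly $|S|^n$ of them.

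Next I would observe that $N_g(n)$ counts only those walks that additionally terminate at the origin, i.e. those tuples satisfying the endpoint constraint $s_n \cdots s_1 \, g = e$, equivalently $s_n \cdots s_1 = g^{-1}$. This distinguished collection is a subset of the full set of walks of length $n$ from $g$, so
\begin{equation}
N_g(n) \leq |S|^n,
\end{equation}
which is the asserted bound.

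I do not anticipate a genuine obstacle here: the statement is a clean counting inequality, and the only point requiring a moment of care is the identification of walks with sequences in $S^n$, which rests on the $|S|$-regularity of $\Cc(G,S)$ and on the fact that imposing the terminal condition $g_n = e$ can only shrink the count. One could even note that the total walk count is \emph{exactly} $|S|^n$, so the inequality is strict precisely when some length-$n$ walks from $g$ fail to return to $e$, which is the generic situation.
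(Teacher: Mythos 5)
Your proof is correct, but it takes a genuinely different route from the paper. You argue purely combinatorially: by right cancellation each step of a walk is labeled by a unique element of $S$, so walks of length $n$ issuing from $g$ are in bijection with $S^n$, and imposing the terminal condition $g_n = e$ selects a subset, giving $N_g(n) \leq |S|^n$ immediately (with the nice extra observation that the inequality is strict exactly when some walk fails to return). The paper instead routes the bound through the group algebra: it identifies $N_g(n) = \Tt(h^n g) = \langle e | \pi_L(h)^n | g \rangle$, where $h = \sum_{s \in S} s$, by matching unreduced words of length $n$ over $S$ with walks from $g$ to $e$, and then bounds this matrix element by $\|h\|_r^n \leq |S|^n$, using that each $s$ is represented by a unitary. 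Your argument is more elementary — it needs no trace, no reduced norm, no representation theory — and is fully self-contained. What the paper's detour buys is the operator identity $N_g(n) = \langle e | \pi_L(h)^n | g \rangle$ itself (their Eq.~\eqref{Eq:Ng0}), which is the real workhorse: it is reused in the subsequent proposition on the generating function $F_g$ and in the numerical estimates of $N_g(n)$ later in the paper, whereas the norm bound is almost an afterthought. So yours is the cleaner proof of the stated inequality, while theirs is structured to set up the analytic machinery that follows.
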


\begin{proof} We observe that $h^n$, after expansion, produces the sum of all un-reduced  words of length $n$ that can be formed with letters from $S$. We also recall that $\Tt$ returns the coefficient of the neutral element, hence $\Tt(w g)$, with $w$ a word as we just described, returns one if $wg=e$ and zero otherwise. The conclusion is that $\Tt(h^n g)$ counts all the un-reduced words of length $n$ that can be generated from $S$ such that $w g =e$. These words are in one to one correspondence with the graph paths of length $n$ starting at $g$ and ending at $e$. Then
\begin{equation}\label{Eq:Ng0}
N_g(n) = \Tt(h^n \, g) = \langle e | \pi_L(h)^n |g\rangle
\end{equation}
and the seen matrix element can be bounded by $\|h\|_r^n$. The statement follows because each $s \in S$ is represented by a unitary element, hence $\|h\|_r \leq |S|$.
\end{proof}

\begin{proposition} The sequence $\{N_g(n)\}_{n\in \NM}$ accepts a generating function, that is, a function $F_g(x) : \RM \to \RM$ given by a finite linear combination of right-continuous monotone non-decreasing functions, each of them constant outside the interval $J_S=\big[-|S|,|S|\big]$, and
\begin{equation}
N_g(n) = \int_\RM x^n \, {\rm d} F_g(x), \quad \forall n \in \NM.
\end{equation}
\end{proposition}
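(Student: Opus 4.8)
The plan is to obtain $F_g$ as a Lebesgue--Stieltjes distribution function built from the spectral measure of the self-adjoint operator $\pi_L(h)$. Since $S$ is symmetric, $h=\sum_{s\in S}s$ satisfies $h=h^\ast$, so $\pi_L(h)$ is a bounded self-adjoint operator on $\ell^2(G)$ with $\|h\|_r\le |S|$, as recorded in the preceding proof. By the spectral theorem it admits a projection-valued measure $E$ whose support lies in its spectrum, hence inside $J_S=[-|S|,|S|]$, and $\pi_L(h)^n=\int_{J_S}x^n\,\mathrm dE(x)$. Combining this with the already-established identity $N_g(n)=\langle e|\pi_L(h)^n|g\rangle$ gives
\[
N_g(n)=\int_{J_S} x^n\,\mathrm d\mu_{e,g}(x),\qquad \mu_{e,g}(A):=\langle e|E(A)|g\rangle ,
\]
where $\mu_{e,g}$ is a finite complex Borel measure supported on $J_S$. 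The entire task is thus to turn $\mu_{e,g}$ into a distribution function of the required shape.

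The difficulty --- and the reason the statement allows a linear combination rather than a single monotone function --- is that $\mu_{e,g}$ is in general neither positive nor even real, whereas a monotone non-decreasing $F$ corresponds to a \emph{positive} measure. First I would show $\mu_{e,g}$ is a real signed measure. Each $N_g(n)$ is a count of graph paths, hence a non-negative integer, in particular real; therefore the real signed measure $\mathrm{Im}\,\mu_{e,g}$ has all of its moments equal to zero. Since $J_S$ is compact, polynomials are uniformly dense in $C(J_S)$, so a finite signed measure on $J_S$ with vanishing moments must vanish; hence $\mathrm{Im}\,\mu_{e,g}=0$. (Alternatively, reversal of paths, legitimate because $S=S^{-1}$, identifies $\langle e|\pi_L(h)^n|g\rangle$ with $\langle g|\pi_L(h)^n|e\rangle=\overline{\langle e|\pi_L(h)^n|g\rangle}$, reproducing the same reality by the same determinacy argument.)

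With reality in hand, I would apply the Jordan decomposition $\mu_{e,g}=\mu^+-\mu^-$ into two positive Borel measures, both supported on $J_S$. Their cumulative distribution functions $F^\pm(x):=\mu^\pm\big((-\infty,x]\big)$ are right-continuous, monotone non-decreasing, and constant on $(-\infty,-|S|)$ and on $[\,|S|,\infty)$; setting $F_g:=F^+-F^-$ realizes $F_g$ as the required finite $\RM$-linear combination, with $\int_\RM x^n\,\mathrm dF_g=\int x^n\,\mathrm d\mu^+-\int x^n\,\mathrm d\mu^-=N_g(n)$. As an explicit alternative to invoking Jordan decomposition abstractly, one can exhibit the positive measures directly by polarization,
\[
\langle e|E(A)|g\rangle=\tfrac14\sum_{k=0}^{3} i^{-k}\,\bigl\|E(A)\bigl(|e\rangle+i^k|g\rangle\bigr)\bigr\|^2 ,
\]
each summand $\bigl\|E(\cdot)\bigl(|e\rangle+i^k|g\rangle\bigr)\bigr\|^2$ being a genuine positive measure on $J_S$, so that $F_g$ is the corresponding combination of four monotone distribution functions (the reality argument above guaranteeing the combination is real-valued).

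The only genuinely delicate point is the passage from the complex measure $\mu_{e,g}$ to a combination of positive measures, together with the verification that the resulting $F_g$ is real-valued and constant off $J_S$; everything else is a direct application of the spectral theorem and the moment identity established just above. I expect the reality/determinacy step to be the main conceptual hinge, although it is short once the compactness of $J_S$ is exploited.
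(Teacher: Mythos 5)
Your proof is correct, and it takes a genuinely different route from the paper's. The paper reaches the spectral measure through complex analysis: it expands $\langle e|\pi_L(z-h)^{-1}|g\rangle$ in powers of $1/z$ for $|z|>|S|$ (so that the path counts appear as the coefficients of the generating series), recovers $N_g(n)$ via the residue theorem, deforms the contour onto the real axis, and invokes Stone's formula to identify the boundary values of the Green's function with ${\rm d}\langle e|\bm E(x)|g\rangle$; it then finishes with the polarization identity, writing the off-diagonal spectral matrix element as a combination of diagonal ones, each monotone and right-continuous. You bypass the resolvent/contour machinery entirely by feeding the moment identity $N_g(n)=\langle e|\pi_L(h)^n|g\rangle$ of Eq.~\eqref{Eq:Ng0} directly into the spectral theorem, and the genuinely new ingredient in your argument is the reality step: moments of ${\rm Im}\,\mu_{e,g}$ vanish, and determinacy on the compact set $J_S$ (Stone--Weierstrass plus Riesz uniqueness) forces ${\rm Im}\,\mu_{e,g}=0$, after which the Jordan decomposition gives $F_g$ as a difference of two distribution functions. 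This actually repairs a point the paper leaves loose: the paper's polarization formula \eqref{Eq:EV} is written with four equal coefficients $\tfrac14$, which is not the correct polarization identity (the correct one carries signs and factors of $\pm\imath$), and with the correct coefficients the combination is a priori complex, so the claim that $F_g$ maps $\RM$ to $\RM$ requires exactly the reality argument you supply --- your own polarization variant handles this correctly, since reality makes the two imaginary-part terms cancel. What the paper's longer route buys is that it exhibits the generating function of the path counts as the Green's function $\langle e|\pi_L(z-h)^{-1}|g\rangle$ and controls its boundary values, an analytic picture that resonates with the subsequent corollary on resolvent matrix elements; your measure-theoretic argument proves the proposition itself more economically but does not display that intermediate object.
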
 

\begin{proof} Since $\|h\|_r \leq |S|$, $(z-h)^{-1}$ accepts the norm convergent series in $C^\ast_r(G)$ for $|z| > |S|$,
\begin{equation}
(z - h)^{-1} = \frac{1}{z} \sum_{n=0}^\infty (h/z)^n ,
\end{equation}
and same applies to its matrix elements. Therefore,
\begin{equation}
\langle e |\pi_L(z - h)^{-1} |g \rangle = \frac{1}{z} \sum_{n=0}^\infty z^{-n}\, \Tt\big (h^n \, g\big )= \frac{1}{z} \sum_{n=0}^\infty z^{-n} N_g(n).
\end{equation}
From spectral theory of self-adjoint operators, we know that the left hand side is an analytic function on the domain $\CM \setminus {\rm Spec}_{C^\ast_r(G)}(h)$, hence the right side can be analytically continued over the same domain.  Furthermore, using the residue theorem,
\begin{equation}
N_g(n) = \frac{1}{2 \pi \imath}\int_\gamma {\rm d}z \; z^n \, \langle e |\pi_L(z - h)^{-1} |g \rangle,
\end{equation}
where $\gamma$ is a curve in the complex plane encircling the domain $|z| < |S|$. But this curve can be deformed to surround the interval $J_S$ infinitely close, and the result is
\begin{equation}
N_g(n) = \int_{J_S} {\rm d}x \; x^n \, \tfrac{1}{\pi}{\rm Im}\langle e |\pi_L(x+\imath 0^+ - h)^{-1} |g \rangle.
\end{equation}
The right hand side can be expressed in terms of the family of spectral projections $\{\bm E(x)|x \in \RM\}$ of $\pi_L(h)$. Indeed, if we recall Stone's formula,
\begin{equation} 
\langle e |\bm E (x) |g\rangle = \int_{-\infty}^{x+0^+} {\rm d} t \; \tfrac{1}{\pi}{\rm Im}\langle e |\pi_L(t+\imath 0^+ - h)^{-1} |g \rangle, 
\end{equation}
then we have
\begin{equation}
N_g(n) = \int_{J_S}  x^n \, {\rm d} \langle e |\bm E(x) |g \rangle.
\end{equation}
Lastly, we can use the polarization identity to re-write
\begin{equation}\label{Eq:EV}
\langle e |\bm E(x) |g \rangle =\tfrac{1}{4}\sum_{v \in X_g} \langle v |\bm E(x) |v \rangle, \quad X_g=\{e \pm g, e \pm \imath g\} \subset \ell^2(G),
\end{equation}
hence as a finite linear combination of monotone non-decreasing functions that are right-continuous.
\end{proof}

\begin{remark}{\rm For a free group, $N_g(n)$ and $F_g(x)$ were computed explicitly in \cite{McKayLG1981} using combinatorial techniques, for $g=e$ and $S$ the generating set defined in Eq.~\eqref{Eq:GenSet1}, in which case $h$ coincides with the adjacency operator from Eq.~\eqref{Eq:Adjacency} for the standard Cayley graph of $\FM_n$. They take the form $N_e(n)=0$ for $n ={\rm odd}$,
\begin{equation}\label{Eq:Ng1}
N_e(n) =v \sum_{k=0}^{n/2} \begin{pmatrix} n \\ k \end{pmatrix} \frac{n-2k}{n}(v-1)^k, \quad n = {\rm even},
\end{equation}
and
\begin{equation}\label{Eq:ExactF}
F_e(x) = \frac{1}{2} + \frac{v}{2\pi} \left [ \arcsin \frac{x}{2 \sqrt{v-1}}-\frac{v-2}{v} \arctan \frac{(v-2)x}{v\sqrt{4(v-1)-x^2}}\right ]
\end{equation}
inside the interval $[-2\sqrt{v-1},2\sqrt{v-1}]$, $F(x)=0$ if $x <-2\sqrt{v-1}$ and $F(x)=1$ if $x>2\sqrt{v-1}$. Here, $v$ is the degree of the corresponding Cayley graphs. In particular, for the free group with two generators, $v=4$. Furthermore, $N_g(n)$ can be easily derived from $N_e(n)$ when $g$ is drawn from the set of generators:
\begin{equation}\label{Eq:Ng2}
N_{X_1^{\pm 1}}(n) = N_{X_2^{\pm 1}}(n)=N_e(n+1)/v, \quad n = {\rm odd},
\end{equation}
and $N_{X_1^{\pm 1}}(n) = N_{X_2^{\pm 1}}(n)=0$ for $n={\rm even}$. These exact results will be used to validate our numerical algorithms in subsection~\ref{Sec:ConvF2} .
}$\Diamond$
\end{remark}

\begin{corollary} Let $H = \pi_L(h)$, with $h$ as in Eq.~\eqref{Eq:HamS}, and  $z \notin {\rm Spec}(H)$. Then the matrix elements of the Green's function $(z - H)^{-1}$ can be computed as: 
\begin{equation}
\langle g | (z - H)^{-1}|g'\rangle = \int \frac{{\rm d}F_{g'g^{-1}}(x)}{z-x}.
\end{equation}
Furthermore, if $\varphi$ is a Borel function over the real line, then
\begin{equation}
\langle g|\varphi(H) |g'\rangle= \int \varphi(x) {\rm d} F_{g'g^{-1}}(x).
\end{equation}
In particular, the spectral density function can be computed as
\begin{equation}
F(x) : = \langle e|\bm E(x)|e \rangle = F_e(x).
\end{equation}
\end{corollary}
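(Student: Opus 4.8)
The plan is to reduce all three identities to the spectral family $\{\bm E(x)\}$ of $H=\pi_L(h)$ already analyzed in the preceding proposition, using translation invariance to collapse the two indices $g,g'$ into the single relative index $q=g'g^{-1}$. First I would exploit the fact that $H=\pi_L(h)$ commutes with the entire right regular representation: since $\pi_L(q)\pi_R(h)=\pi_R(h)\pi_L(q)$ by associativity of the group law, every bounded Borel function $\varphi(H)$ also commutes with each $\pi_R(k)$. Using that $\pi_R(g)$ is unitary with $\pi_R(g)|g\rangle=|e\rangle$ and $\pi_R(g)|g'\rangle=|g'g^{-1}\rangle$, one then obtains
\begin{equation}
\langle g|\varphi(H)|g'\rangle=\langle g|\pi_R(g)^\ast\,\varphi(H)\,\pi_R(g)|g'\rangle=\langle e|\varphi(H)|g'g^{-1}\rangle,
\end{equation}
so that every matrix element depends only on $q=g'g^{-1}$. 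This is the structural input that makes a single one-variable distribution function suffice.

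Next I would invoke the spectral theorem for the self-adjoint operator $H$ on $\ell^2(G)$, which provides the scalar spectral measure $\mu_q:=d\langle e|\bm E(x)|q\rangle$. The crucial point, already established inside the proof of the generating-function proposition, is that this spectral measure is precisely $dF_q$; indeed, the generating function is \emph{constructed} there as $F_q(x)=\langle e|\bm E(x)|q\rangle$, and, as a cross-check, both sides are supported on the compact interval $J_S$ and share the moment sequence $N_q(n)=\int x^n\,d\langle e|\bm E(x)|q\rangle$, while a compactly supported measure is determined by its moments. Granting this identification, the two functional-calculus formulas are immediate: applying the spectral theorem to the continuous function $\varphi(x)=(z-x)^{-1}$, which is well defined on $\mathrm{Spec}(H)\subseteq J_S$ because $z\notin\mathrm{Spec}(H)$, yields
\begin{equation}
\langle g|(z-H)^{-1}|g'\rangle=\int\frac{d\langle e|\bm E(x)|q\rangle}{z-x}=\int\frac{dF_q(x)}{z-x},
\end{equation}
and for a general Borel $\varphi$ the same computation gives $\langle g|\varphi(H)|g'\rangle=\int\varphi(x)\,dF_q(x)$. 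Finally, specializing $g=g'=e$ (so $q=e$) and taking $\varphi=\mathbf 1_{(-\infty,x]}$ reproduces $F(x)=\langle e|\bm E(x)|e\rangle=F_e(x)$.

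The routine parts are the algebraic manipulation of the right action and the invocation of the spectral theorem. The one point requiring genuine care---and the main obstacle---is the passage from continuous to Borel functional calculus, since $C^\ast_r(G)$ is not closed under Borel calculus and the projections $\bm E(x)$ live only in its von Neumann (bicommutant) closure in $\BM(\ell^2(G))$. I would handle this by working throughout inside $\BM(\ell^2(G))$ via the faithful representation $\pi_L$, where $\varphi(H)$ is unambiguously defined for every bounded Borel $\varphi$ and its matrix elements are integrals against the well-defined scalar measures $\mu_q$; the identification $\mu_q=dF_q$ then transfers the combinatorial content of $F_q$ to the full Borel calculus without asserting that $\varphi(H)$ itself lies in $C^\ast_r(G)$.
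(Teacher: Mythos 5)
Your proposal is correct and coincides with the argument the paper intends: the corollary is stated there without a separate proof precisely because it follows immediately from the preceding proposition, whose proof constructs $F_g$ as the matrix element $\langle e|\bm E(x)|g\rangle$, combined with the right-invariance $\langle g|\varphi(H)|g'\rangle=\langle e|\varphi(H)|g'g^{-1}\rangle$ (from $[\pi_L(h),\pi_R(k)]=0$) and the spectral theorem in $\BM(\ell^2(G))$ --- exactly the steps you spell out. Your closing care about Borel versus continuous functional calculus, i.e.\ working in the von Neumann closure inside $\BM(\ell^2(G))$ rather than in $C^\ast_r(G)$ itself, matches the paper's own remark on that point.
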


The above statements, together with the known expression of $F_e$ from Eq.~\eqref{Eq:ExactF}, supply the means to validate our numerical simulations of $\FM_2$-crystals, reported in subsection~\ref{Sec:ConvF2}. Note, however, that the results of this section can be also used in the opposite direction. Indeed, if one is in possession of rapidly convergent simulations, then the combinatorial  numbers $N_g(n)$ can be computed either via Eq.~\eqref{Eq:Ng0} or as
\begin{equation}\label{Eq:Ng01}
N_g(n)=\lim_{N \to \infty} \int x^n \, {\rm d} \langle e|\bm E_N(x) | \phi_N(g) \rangle,
\end{equation}
where $\bm E_N(x)$ are the spectral families of projections for the finitely approximated Hamiltonian and the limit is over the size of the approximation (see section~\ref{Sec:Numerics}). We will see in section~\ref{Sec:Numerics} that this is possible to some extent with the computer algorithms supplied in this work. When $N_g$ is known, Eq.~\eqref{Eq:Ng01} supplies a test on the quality of the approximated spectral measure.

\section{Converging finite approximations}\label{Sec:Lueck}

\subsection{Coherent sequences of normal subgroups}\label{Sec:CNS}

\begin{definition} A group $G$ is called residually finite if, for all $g \in G$ such that $g \neq 1$, there
exists a normal subgroup $N \triangleleft G$ such that $g  \neq N$ and $G/N$ is finite. Equivalently,
we could characterize this property by saying that for all $g \in G$ such that $g \neq 1$,
there exists a finite group $H$ and a homomorphism $\theta : G \to H$ such that $\theta(g) \neq 1$ in H.
\end{definition}

\begin{example}{\rm As stated in our introductory remarks, any finitely generated group possessing a faithful representation into $GL(n, F)$ for $F$ a field is residually finite \cite[Th.~4.2]{WehrfritzBook}. In particular, the free and Fuchsian groups are residually finite.
}$\Diamond$
\end{example}

Any residually finite group $G$ accepts a  coherent tower of finite index normal subgroups
\begin{equation}\label{Eq:NSeq}
\begin{tikzcd}
	G=G_0 & G_1 &  G_{2} & \cdots
		\arrow[rightarrowtail,from=1-2, to=1-1,"\ \mathfrak i_{1}"']
		\arrow[rightarrowtail ,from=1-3, to=1-2,"\ \mathfrak i_{2}"']
	\arrow[rightarrowtail ,from=1-4, to=1-3,"\ \mathfrak i_{3}"']
\end{tikzcd}, \quad \bigcap G_N =\{e\}.
\end{equation}
A coherent tower of objects (in a category) is a diagram as above, with the arrows representing injective morphisms. In turn, this supplies an inverse coherent sequence (i.e. with arrows representing surjective morphisms) of finite groups, 
\begin{equation}\label{Eq:NSeq}
H_N : =  G/G_N, \quad \begin{tikzcd}
	H_1 & H_2 & H_{3} & \cdots
		\arrow[twoheadrightarrow,from=1-2, to=1-1,"\ \mathfrak j_{1}"']
		\arrow[twoheadrightarrow ,from=1-3, to=1-2,"\ \mathfrak j_{2}"']
	\arrow[twoheadrightarrow ,from=1-4, to=1-3,"\ \mathfrak j_{3}"']
\end{tikzcd}
\end{equation}
with the epimorphisms given by
\begin{equation}\label{Eq:ProjT}
\mathfrak j_N : H_{N+1} \twoheadrightarrow H_{N}, \quad \mathfrak j_N(g\cdot G_{N+1}) = g \cdot G_{N}, \quad \forall \ g \in G. 
\end{equation}
They are well defined because, if $g\cdot G_{N+1} = g'\cdot G_{N+1}$, then there exists $f \in G_{N+1}$ such that $g'=gf$. Since $G_{N+1}$ embeds in $G_{N}$, $f$ belongs to $G_N$ and $g\cdot G_{N} = g'\cdot G_{N}$. Additionally, there exist the quotient morphisms $\phi_N : G \to H_N$, $\phi_N(g) = g \cdot G_N$, and the whole emergent algebraic structure can be summarized by the commutative diagram
\begin{equation}
\begin{tikzcd}
	\cdots & H_{N-1} & H_N & H_{N+1} & \cdots \\
	 &  & G &  &
	\arrow[twoheadrightarrow,from=1-2, to=1-1]
	\arrow[twoheadrightarrow,from=1-3, to=1-2,"\ \mathfrak j_{N-1}"']
	\arrow[twoheadrightarrow,from=1-4, to=1-3,"\mathfrak j_{N}"']
	\arrow[twoheadrightarrow,from=1-5, to=1-4]
	\arrow[from=2-3, to=1-2,"\phi_{N-1}" description]
	\arrow[from=2-3, to=1-3,"\phi_{N}" description]
	\arrow[from=2-3, to=1-4,"\phi_{N+1}" description]
\end{tikzcd}
\end{equation}

The category of groups is complete, hence the projective tower~\eqref{Eq:ProjT} has an inverse limit $\bar G = \varprojlim H_N$, which in general is much larger than the original group $G$. More precisely, $\bar G$ is a profinite group and, from the universal property of this limit, we can be sure that there exists a group morphism $\phi : G \to \bar G$, mapping $G$ onto a dense sub-space of $\bar G$ \cite[Lemma~1.1.7]{RibesZalesskiiBook2000}. Furthermore, the condition $\bigcap G_N = \{e\}$ implies that $\phi$ is injective. This important observation will be paired with the statements from Proposition~\ref{Pro:CRInjective} and Corollary~\ref{Cor:SpecAB}, as well as another result by L\"uck, in order to give an algebraic interpretation of the finite approximations discussed next.

\subsection{Approximation results of L\"uck} The group morphisms $\phi_N$ induce group algebra morphisms $\phi_N : \CM G \to \CM H_N$, for which we will use the same symbols. Below, we reproduce the original result by L\"uck \cite{LuekGFA1994}, in a slightly updated formulation supplied in the monograph \cite[Ch.~13]{LuekBook2002}. Throughout this subsection, we work in the setting described in the previous subsection.

\begin{proposition}[\cite{LuekGFA1994,LuekBook2002}]\label{Pro:Lueck1} Let $f \in \CM G$ be a self-adjoint and positive element and let $\{\bm E(x)|x \in [0,\infty)\}$ be the right continuous family of spectral projections of $\pi_L(f)$ in $\BM(\ell^2(G))$. Consider the associated spectral density function
\begin{equation}
F:[0,\infty) \to [0,1], \quad x \mapsto \langle e |\bm E(x) |e \rangle 
\end{equation}
and define $F_N$ in the same way for $\phi_N(f) \in \CM G_N$. Then
\begin{equation}
F(x) = \liminf F_N(x+\imath 0^+ ) = \limsup F_N(x+\imath 0^+).
\end{equation}
\end{proposition}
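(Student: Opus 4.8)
The plan is to recast the whole statement in terms of spectral (density-of-states) measures and to reduce it to the convergence of their moments. For the group von Neumann algebra $\pi_L(\CM G)'' \subset \BM(\ell^2(G))$, the canonical trace acts as $\Tt(a)=\langle e|a|e\rangle$, and the spectral projections $\bm E(x)$ of the self-adjoint element $\pi_L(f)$ lie in this algebra; hence $F(x)=\langle e|\bm E(x)|e\rangle = \Tt(\bm E(x))$. I would therefore introduce the probability measure $\mu$ on $[0,\infty)$ determined by $\Tt(\varphi(\pi_L(f)))=\int \varphi\,\mathrm{d}\mu$, so that $F(x)=\mu\big((-\infty,x]\big)$ and $\int x^k\,\mathrm{d}\mu = \Tt(f^k)$. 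On the finite side, the canonical trace of $\CM H_N$, with $H_N=G/G_N$, coincides with the normalized matrix trace $\tfrac{1}{|H_N|}\TR\circ\pi_L$; thus $F_N$ is the cumulative distribution function of the normalized eigenvalue-counting measure $\mu_N$ of the finite matrix $\pi_L(\phi_N(f))$, with $\int x^k\,\mathrm{d}\mu_N=\Tt_N(\phi_N(f)^k)$.

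The crux is the convergence of moments $\int x^k\,\mathrm{d}\mu_N \to \int x^k\,\mathrm{d}\mu$ for every $k\in\NM$. First I would use that $\phi_N$ is a $\ast$-algebra homomorphism, so $\phi_N(f)^k=\phi_N(f^k)$, and that $\Tt_N$ extracts the coefficient of the identity, i.e. $\Tt_N(\phi_N(f^k))=\sum_{g\in G_N}(f^k)_g$, where $(f^k)_g$ are the finitely many nonzero coefficients of $f^k\in\CM G$. Since $f^k$ has finite support and the tower is coherent with $\bigcap_N G_N=\{e\}$, for all sufficiently large $N$ the only support element of $f^k$ lying in $G_N$ is $e$ itself; hence $\Tt_N(\phi_N(f)^k)=(f^k)_e=\Tt(f^k)$ for $N\gg 1$, giving moment convergence (in fact eventual equality of each individual moment). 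Note this step is elementary and does not invoke the profinite machinery of subsection~\ref{Sec:CNS}.

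Next I would upgrade moment convergence to weak convergence. All the measures are supported in the fixed compact interval $[0,\|f\|_1]$, because $f\ge 0$ and $\|\phi_N(f)\|\le\|f\|_1$ uniformly in $N$ (each $\phi_N(g)$ is unitary). Since polynomials are dense in the continuous functions on this interval, convergence of all moments of these probability measures yields $\int \varphi\,\mathrm{d}\mu_N\to\int\varphi\,\mathrm{d}\mu$ for every continuous $\varphi$, i.e. $\mu_N\Rightarrow\mu$ weakly.

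Finally, I would extract the pointwise statement from weak convergence via the Portmanteau inequalities: for each $x$, applying them to the closed set $(-\infty,x]$ and the open set $(-\infty,x)$ gives $F(x^-)\le\liminf_N F_N(x)\le\limsup_N F_N(x)\le F(x)$. At continuity points of $F$ this already forces $F_N(x)\to F(x)$. The only obstruction is at the at most countably many jump points of $F$, i.e. the atoms of $\mu$, where a gap of size $\mu(\{x\})$ survives — and this is precisely why the statement carries the right-regularization $x+\imath 0^+$. Choosing a sequence $\delta\to 0^+$ that avoids the countable set of atoms of $\mu$ and using the right-continuity of $F$, the bounds at $x+\delta$ collapse to $\lim_{\delta\to 0^+}\liminf_N F_N(x+\delta)=\lim_{\delta\to 0^+}\limsup_N F_N(x+\delta)=F(x)$, the asserted identity. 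I expect the behavior at the jumps of $F$ to be the one genuinely delicate point; it is disarmed here by the regularization, whereas the sharper statement that the atoms are captured exactly (without regularization) is the deeper part of L\"uck's analysis and is not needed for the density functions evaluated here, which for the operators of interest are moreover atomless by Proposition~\ref{Pro:SpecChar}.
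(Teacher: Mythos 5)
Your proof is correct, but note that the paper itself contains no proof of Proposition~\ref{Pro:Lueck1}: it is imported verbatim from L\"uck \cite{LuekGFA1994,LuekBook2002}, so the only meaningful comparison is with L\"uck's own argument. Your route --- identify $F$ and $F_N$ with the distribution functions of the tracial spectral measures $\mu$, $\mu_N$ (using that spectral projections of $\pi_L(\phi_N(f))$ lie in $\pi_L(\CM H_N)$, so their diagonal is constant and $F_N$ is indeed the normalized eigenvalue count); prove eventual equality of moments $\Tt_N(\phi_N(f)^k)=(f^k)_e=\Tt(f^k)$ for large $N$ from the finite support of $f^k$, the nestedness of the tower, and $\bigcap_N G_N=\{e\}$; upgrade to weak convergence via the uniform compact support $[0,\|f\|_1]$; and finish with Portmanteau plus right-continuity --- is in substance the same as the elementary half of L\"uck's proof, repackaged in the probabilistic language of the method of moments: L\"uck proves the same trace convergence for polynomials and then sandwiches $\chi_{(-\infty,x]}$ between polynomials rather than invoking Portmanteau. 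What your argument deliberately does not reproduce --- and, as you correctly observe, what the regularized formulation $F_N(x+\imath 0^+)$ does not require --- is the deep half of L\"uck's theorem: the uniform logarithmic control of $F_N$ near jumps of $F$, obtained from determinant bounds for integral/rational coefficients, which yields unregularized convergence at atoms (in particular of normalized kernel dimensions, i.e. $L^2$-Betti numbers). So your proof is a complete, essentially self-contained derivation of the statement exactly as formulated.

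Two small repairs. First, in the last step you pass to a sequence $\delta\to 0^+$ avoiding the atoms of $\mu$; this computes the limit only because $\delta\mapsto\liminf_N F_N(x+\delta)$ is non-decreasing, which you should say --- or avoid atoms altogether: Portmanteau gives $F(x)\le F\big((x+\delta)^-\big)\le\liminf_N F_N(x+\delta)\le\limsup_N F_N(x+\delta)\le F(x+\delta)$ for every $\delta>0$, and right-continuity of $F$ finishes the squeeze. Second, your closing parenthetical overreaches: Proposition~\ref{Pro:SpecChar}~ii) excludes \emph{discrete} spectrum, i.e. finite-dimensional spectral subspaces, but it does not make the spectral measure atomless --- eigenvalues of infinite multiplicity, as occur for the lamplighter group \cite{GrigorchukGD2001}, are not excluded by that argument. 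This does not affect your proof, since you never actually use atomlessness.
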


The above statement assures us that the diagonal matrix elements 
\begin{equation}
\langle g|\varphi(\pi_L(f))|g\rangle = \int \varphi(x) {\rm d} F(x)
\end{equation} 
can be computed with arbitrary precision using the finite approximations, for any Borel function $\varphi$. In fact, one can says so much more:

\begin{corollary}\label{Cor:FuncCalc} Consider the settings of Proposition~\ref{Pro:Lueck1}. Then the off-diagonal matrix elements $\langle e | \pi_L(\varphi(f)) |g\rangle$ can be also computed with arbitrary precision from the finite approximations, for any complex valued continuous function $\varphi$.
\end{corollary}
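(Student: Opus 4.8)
The plan is to reduce the statement for a general continuous $\varphi$ to the case of polynomials, where the off-diagonal matrix elements become traces of finitely supported elements of $\CM G$ that the finite models reproduce \emph{exactly}. I will not need the full strength of Proposition~\ref{Pro:Lueck1}; indeed, the restriction to continuous (rather than Borel) $\varphi$ is precisely what makes the polynomial reduction available, which is why this statement is softer than the spectral-density convergence behind the diagonal result.

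First I would record the elementary identity $\langle e |\pi_L(a)| g\rangle = \Tt(a g)$, valid for any $a \in C^\ast_r(G)$ and any $g \in G$, which follows from $|g\rangle = \pi_L(g)|e\rangle$ and the fact that $\Tt$ is the matrix element at the neutral element. Writing $\pi_L^{(N)}$ for the regular representation of $\CM H_N$ on $\ell^2(H_N)$ and $\Tt_N$ for its canonical trace, the same computation gives $\langle \phi_N(e)|\pi_L^{(N)}(\phi_N(a))|\phi_N(g)\rangle = \Tt_N(\phi_N(ag))$. The key point is that for $a \in \CM G$ the product $ag$ again lies in $\CM G$ and so has finite support; since $\ker\phi_N = G_N$, the value $\Tt_N(\phi_N(ag))$ is the sum of the coefficients of $ag$ attached to those group elements lying in $G_N$, and because the $G_N$ are decreasing with $\bigcap_N G_N = \{e\}$, for all $N$ beyond a threshold fixed by the finite support of $ag$ the only surviving term is the coefficient of $e$. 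Hence $\Tt_N(\phi_N(ag)) = \Tt(ag)$ exactly for large $N$; in particular, for every polynomial $p$, the finite models compute $\langle e|\pi_L(p(f))|g\rangle$ exactly once $N$ is large enough.

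To pass to a continuous $\varphi$ I would control all the approximants by one fixed compact set. Since $\phi_N$ is a unital $\ast$-homomorphism it extends to a contraction of the full group $C^\ast$-algebra, whence $\phi_N(f) \geq 0$ and $\|\phi_N(f)\| \leq \|f\|_f \leq \|f\|_1$; therefore ${\rm Spec}(\phi_N(f)) \subseteq K := [0,\|f\|_1]$ for every $N$, and likewise ${\rm Spec}_{C^\ast_r(G)}(f) \subseteq K$. Given $\epsilon > 0$, Stone--Weierstrass supplies a polynomial $p$ with $\|\varphi - p\|_{\infty,K} < \epsilon$. As the continuous functional calculus is isometric and $\pi_L^{(N)}$ is faithful, this yields $\|\varphi(\phi_N(f)) - p(\phi_N(f))\| < \epsilon$ \emph{uniformly in} $N$, together with $\|\varphi(f) - p(f)\|_r < \epsilon$; since matrix elements are bounded by the operator norm, replacing $\varphi$ by $p$ disturbs both the exact and the approximate off-diagonal matrix elements by less than $\epsilon$.

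Finally I would assemble a three-term triangle inequality: for $N$ large the polynomial matrix elements agree exactly, so the middle term vanishes, while the two functional-calculus errors are each below $\epsilon$, giving $|\langle\phi_N(e)|\varphi(\pi_L^{(N)}(\phi_N(f)))|\phi_N(g)\rangle - \langle e|\pi_L(\varphi(f))|g\rangle| < 2\epsilon$ for all sufficiently large $N$; since $\epsilon$ was arbitrary, this is the asserted convergence. I expect the main obstacle to be the uniform spectral bound ${\rm Spec}(\phi_N(f)) \subseteq K$ independent of $N$: it is exactly this uniformity that lets a single polynomial $p$ serve every approximant at once, and thereby legitimizes interchanging the limit in $N$ with the polynomial approximation of $\varphi$.
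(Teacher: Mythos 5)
Your proof is correct, and it takes a genuinely different route from the paper's. The paper keeps Proposition~\ref{Pro:Lueck1} as the engine: after reducing to $\varphi \geq 0$, it approximates $\sqrt{\varphi}$ by a polynomial $p$, uses the polarization identity to write $\langle e|\pi_L(p^2(f))|g\rangle$ as a combination of \emph{diagonal} matrix elements $\langle e|\pi_L(v^\ast p^2(f)v)|e\rangle$ with $v \in \{e\pm g,\, e\pm\imath g\}$, observes that each $v^\ast p^2(f)v = (p(f)v)^\ast(p(f)v)$ is a positive element of $\CM G$, and then invokes Proposition~\ref{Pro:Lueck1} for those diagonal pieces. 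You bypass Proposition~\ref{Pro:Lueck1} entirely: the identity $\langle e|\pi_L(a)|g\rangle = \Tt(ag)$ together with the fact that the finite support of $ag$ eventually misses $G_N\setminus\{e\}$ (this is exactly where the nesting and $\bigcap_N G_N = \{e\}$ enter) shows that the finite models reproduce every polynomial matrix element \emph{exactly} for large $N$, and Stone--Weierstrass on one compact interval containing all the spectra finishes the job. Your route buys two things: a self-contained argument showing that for \emph{continuous} $\varphi$ the corollary is much softer than L\"uck's theorem (it is just moment convergence plus an equicompact spectral window), and an explicit treatment of the uniformity in $N$. That second point is not cosmetic: the paper's proof, as written, produces approximation of $\langle e|\pi_L(\varphi(f))|g\rangle$ by the \emph{polynomial} functional calculus $\langle e|p^2(\phi_N(f))|\phi_N(g)\rangle$ of the finite models, where $p$ depends on the accuracy sought; to upgrade this to the form actually advertised in Eq.~\eqref{Eq:S0} --- apply $\varphi$ itself (e.g.\ the resolvent) to the finite models and take $N\to\infty$ --- one must compare $\varphi(\phi_N(f))$ with $p^2(\phi_N(f))$ uniformly in $N$, which requires precisely the bound ${\rm Spec}(\phi_N(f)) \subseteq K$ with $K$ independent of $N$ that you isolate. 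What the paper's route buys in exchange is conceptual economy (a two-step reduction to the already-proved diagonal statement), and the polarization trick is the natural bridge if one wants off-diagonal statements for the wider class of functions covered by the diagonal result. One small caveat in your write-up: your claim $\phi_N(f)\geq 0$, hence $K=[0,\|f\|_1]$, uses positivity of $f$ in the full algebra $C^\ast(G)$ (or the form $f=b^\ast b$ with $b \in \CM G$); if ``positive'' in Proposition~\ref{Pro:Lueck1} is read as positivity of $\pi_L(f)$, i.e.\ in $C^\ast_r(G)$, then $\phi_N(f)$ may acquire negative spectrum, since full and reduced spectra genuinely differ here (cf.\ Example~\ref{Ex:SpurSpec}). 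This is harmless for your argument --- take $K = [-\|f\|_1, \|f\|_1]$ and nothing else changes --- but the window should be stated symmetrically to be safe.
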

\begin{proof} Since any complex valued function can be written as a linear combination of real positively valued continuous functions, it is enough to assume that $\varphi$ is such a function. Furthermore, it is known that the matrix element we want to calculate can be approximated with arbitrary precision if we replace $\varphi$ by polynomials. Now, let $p(x)$ be a polynomial approximating $\sqrt{\varphi}$.  We have
\begin{equation}
\langle e |\pi_L(p^2(f))|g\rangle =  \langle \pi_L(p(f)) e |\pi_L(p(f)) g\rangle
\end{equation}
and, by using the polarization identity, the above matrix element can be written as a sum involving the terms
\begin{equation}
\langle \pi_L(p(f)) e \pm \pi_L(p(f))g|\pi_L(p(f)) e \pm \pi_L(p(f))g\rangle = \langle e\pm g |\pi_L(p^2(f)) | e \pm g\rangle
\end{equation}
and
\begin{equation}
\langle \pi_L(p(f)) e \pm \imath \pi_L(p(f)) g|\pi_L(p(f)) e \pm \imath \pi_L(p(f))g\rangle = \langle e\pm \imath g |\pi_L(p^2(f)) | e \pm \imath g\rangle.
\end{equation}
The conclusion is that $\langle e |\pi_L(p^2(f))|g\rangle$ can be expressed in terms of
\begin{equation}
\langle e | \pi_L(v^\ast p^2(f) v)|e\rangle, \quad v \in \{e \pm g, e \pm \imath g\},
\end{equation}
and each $v^\ast p^2(f) v$ is a self-adjoint and positive element of $\CM G$. As such, the diagonal elements of their left regular representations can be computed with arbitrary precision using their finite approximations 
\begin{equation}
\phi_N(v^\ast p^2(f) v) = \phi_N(v^\ast) \, p^2(\phi_N(f)) \, \phi_N(v).
\end{equation} 
Using the polarization identity in reverse, we conclude that $\langle e |p^2(\phi_N(f)) | \phi_N(g)\rangle$ can approximate the original matrix element we started with arbitrary precision.
\end{proof}

Given the above statements, it is now justified to say that the entire continuous functional calculus in $C^\ast_r(G)$ with self-adjoint elements from $\CM G$ can be reproduced with arbitrary precision using the finite approximations supplied by the coherent sequence of subgroups $\{G_N\}$. In particular, the resolvent $(z -h)^{-1}$ can be numerically computed with arbitrary precision for any $z \notin {\rm Spec}(h)$.

\subsection{An operator algebraic viewpoint} Looking back at the statement of Proposition~\ref{Pro:Lueck1}, it is impossible not to notice its pure algebraic nature, even though its proof was achieved by hard analysis. From the applied point of view, this is an extremely useful feature because, once we discover a coherent sequence of normal subgroups, there is no need for any other checks, like specific bounds on operators, etc.. Here, we describe a possible mechanism behind this feature, which can be of some guidance when dealing with finite approximations. For this, we reproduce a very general result concerning inverse limits:  

\begin{proposition}[\cite{LuekBook2002},~Th.~13.31]\label{Pro:Lueck2} Let $\{H_N\}$ be any projective tower of groups and let $\bar G$ be its limit and $\bar \phi_N : \bar G \to H_N$ be its structure functions. Let $\bar f$ be a positive element from $\CM \bar G$ and $\{\bar{\bm E}(x)|x \in [0,\infty)\}$ be the right-continuous family of spectral projections of $\pi_L(\bar f)$ in $\BM(\ell^2(\bar G))$. Consider the associated spectral density function
\begin{equation}
\bar F:[0,\infty) \to [0,1], \quad x \mapsto \langle e |\bar{\bm E}(x) |e \rangle
\end{equation}
and define $\bar F_N$ in the same way for $\bar \phi_N(\bar f) \in \CM H_N$. Then
\begin{equation}
\bar F(x) = \liminf \bar F_N(x+\imath 0^+ ) = \limsup \bar F_N(x+\imath 0^+).
\end{equation}
\end{proposition}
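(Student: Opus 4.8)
The plan is to reduce the assertion to a statement about moments and then upgrade it to the claimed control of the distribution functions by a weak-convergence argument. First I would record the basic analytic bookkeeping. Since $\pi_L(\bar f)$ and each $\pi_L(\bar\phi_N(\bar f))$ are positive operators and $|e\rangle$ is a unit vector, $\bar F$ and every $\bar F_N$ are the right-continuous distribution functions of the spectral measures of $|e\rangle$, hence probability measures of total mass $1$. Moreover, because $\bar\phi_N$ is a $\ast$-homomorphism of group algebras it preserves positivity and cannot increase the $\ell^1$-norm, so $\|\pi_L(\bar\phi_N(\bar f))\| \leq \|\bar f\|_1 =: C$ uniformly in $N$; thus all the measures $\mathrm{d}\bar F_N$ and $\mathrm{d}\bar F$ are supported on the common compact interval $[0,C]$.

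The algebraic heart is the convergence of moments. Using $\langle e|\pi_L(a)|e\rangle = \Tt(a)$ (the coefficient of the neutral element) together with $\bar\phi_N(\bar f)^n = \bar\phi_N(\bar f^{\,n})$, one computes
\begin{equation}
\int \lambda^{n}\,\mathrm{d}\bar F_N(\lambda) = \Tt_N\big(\bar\phi_N(\bar f^{\,n})\big), \qquad \int \lambda^{n}\,\mathrm{d}\bar F(\lambda) = \Tt(\bar f^{\,n}),
\end{equation}
where $\Tt_N$ extracts the coefficient of $e$ in $\CM H_N$. Writing $\bar f^{\,n} = \sum_{\bar g} c_{\bar g}\,\bar g$ as a finite combination of elements of $\bar G$, one gets $\Tt_N(\bar\phi_N(\bar f^{\,n})) = \sum_{\bar g:\,\bar\phi_N(\bar g)=e} c_{\bar g}$. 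The decisive structural input is that the structure maps of the inverse limit separate points: if $\bar g \neq e$ in $\bar G = \varprojlim H_N$, some component $\bar\phi_M(\bar g) \neq e$, and since each $\mathfrak j_N$ fixes the identity, $\bar\phi_N(\bar g) \neq e$ for all $N \geq M$. As $\bar f^{\,n}$ has finite support, there is an index beyond which only the term $\bar g = e$ survives, so $\Tt_N(\bar\phi_N(\bar f^{\,n})) = c_e = \Tt(\bar f^{\,n})$; the moments therefore not only converge but are eventually exact for each fixed $n$.

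It remains to convert moment convergence into the pointwise statement. Since all the measures are probability measures supported on the fixed compact $[0,C]$, moment convergence forces weak-$\ast$ convergence $\mathrm{d}\bar F_N \rightharpoonup \mathrm{d}\bar F$: for a fixed continuous $f$ on $[0,C]$ one approximates $f$ uniformly by a polynomial and uses that the integrals of monomials converge while the total masses stay equal to $1$. The portmanteau inequalities, applied to the closed set $(-\infty,\lambda+\delta]$ and the open set $(-\infty,\lambda+\delta)$, then give for every $\lambda$ and every $\delta > 0$
\begin{equation}
\limsup_N \bar F_N(\lambda+\delta) \leq \bar F(\lambda+\delta), \qquad \liminf_N \bar F_N(\lambda+\delta) \geq \bar F\big((\lambda+\delta)^{-}\big) \geq \bar F(\lambda).
\end{equation}
Letting $\delta \downarrow 0$ and invoking the right-continuity of $\bar F$ collapses both bounds to $\bar F(\lambda)$, which is exactly the asserted identity, the symbol $\imath 0^{+}$ encoding the shift-and-limit procedure just performed.

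I expect the genuine obstacle to sit in this last step, and precisely at the jump points (atoms) of $\bar F$. Weak convergence alone yields $\bar F_N(\lambda) \to \bar F(\lambda)$ only at continuity points of $\bar F$; at an atom the unshifted $\liminf_N \bar F_N(\lambda)$ can fall all the way to $\bar F(\lambda^{-})$ while $\limsup_N \bar F_N(\lambda)$ stays at $\bar F(\lambda)$, so the two need not agree, and it is the $\delta$-shift together with right-continuity that forces them back together; the care required is to apply the one-sided portmanteau bounds in the correct direction. By contrast, the algebraic moment-matching is essentially automatic once the separation property of the inverse limit is in hand---which is precisely why this inverse-limit formulation is so much more transparent than the residually finite statement of Proposition~\ref{Pro:Lueck1}, whose difficulty instead lies in comparing spectral data on $\ell^2(G)$ with those on $\ell^2(\bar G)$.
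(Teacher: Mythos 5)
The paper offers no proof of this proposition: it is reproduced verbatim from L\"uck's monograph \cite{LuekBook2002} (Th.~13.31) as a citation, so there is no in-paper argument to compare yours against. Judged on its own merits, your proof is correct for the statement as rendered here, i.e.\ with the $+\imath 0^+$ regularization applied \emph{after} the $\liminf$/$\limsup$ over $N$. Each of your three steps is sound: the uniform bound $\|\pi_L(\bar\phi_N(\bar f))\|\leq\|\bar f\|_1$ confines all the spectral measures to one compact interval; the defining property of the inverse limit (a nontrivial $\bar g$ has $\bar\phi_M(\bar g)\neq e$ for some $M$, hence for all $N\geq M$ because the bonding maps are homomorphisms) makes each fixed moment eventually exact, $\bar f^{\,n}$ having finite support; and the one-sided portmanteau bounds together with right-continuity of $\bar F$ collapse the shifted $\liminf$ and $\limsup$ onto $\bar F(x)$. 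Your closing diagnosis is also the right one: the regularization is precisely what makes this statement ``soft''. The unregularized equality $\bar F(x)=\lim_N \bar F_N(x)$ at every $x$ --- in particular at the atom $x=0$, where it computes $L^2$-Betti numbers --- is the genuinely hard part of L\"uck's approximation theory and requires determinant/integrality input that your argument does not (and, for this statement, need not) supply.

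One point you should state explicitly rather than leave implicit: ``positive element of $\CM\bar G$'' must be read algebraically, $\bar f=\sum_i \bar a_i^\ast \bar a_i$, which is L\"uck's setting; only under that reading is your assertion that $\bar\phi_N$ ``preserves positivity'' automatic. If positivity were instead taken to mean $\pi_L(\bar f)\geq 0$ in $\BM(\ell^2(\bar G))$, preservation would genuinely fail: for the quotient $\phi:\FM_2\to\ZM_2$ sending both generators to the nontrivial element $g$, the element $\delta+2\sqrt{3}\,e$ is a positive operator on $\ell^2(\FM_2)$, while $\phi(\delta)+2\sqrt{3}\,e=4g+2\sqrt{3}\,e$ has the eigenvalue $-4+2\sqrt{3}<0$. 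This is the same phenomenon as the paper's Example~\ref{Ex:SpurSpec}: quotients see spectrum outside the reduced-algebra spectrum, so operator positivity is not functorial. Your proof needs the images $\bar\phi_N(\bar f)$ to be positive (both for the moment identities over $[0,\infty)$ and for $\bar F_N$ to capture the full mass), so pinning down the algebraic reading of the hypothesis is what closes the argument.
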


Definitely, Corollary~\ref{Cor:FuncCalc} can be straightforwardly adapted to the settings of Proposition~\ref{Pro:Lueck2}. Thus, the continuous functional calculus in $C^\ast_r(\bar G)$ can be reproduced with arbitrary precision from the continuous functional calculus in $C^\ast_r(H_N)$. Now, let $\{H_N\}$ be the projective tower of normal subgroups of the residually finite group $G$, introduced in subsection~\ref{Sec:CNS}, and $f$ be a positive element of $\CM G$. Then the coherent sequence $\{ \phi_N(f) \}$ defines an element of $\CM \bar G$. We recall that $G$ embeds in $\bar G$, which implies that $C^\ast_r(G)$ embeds in $C^\ast_r(\bar G)$ via the canonical map $\phi$ (cf. Proposition~\ref{Pro:CRInjective}). Obviously, the coherent sequence $\{ \phi_N(f) \}$ coincides with $\phi(f)$, hence $\{ \phi_N(f) \}$ lands in the image of $C^\ast_r(G)$ inside $C^\ast_r(\bar G)$. Therefore, the continuous functional calculus of $f$ in $C^\ast_r(G)$ coincides with the continuous functional calculus of $\{ \phi_N(f) \}$ in $C^\ast_r(\bar G)$. But the latter can be reproduced with arbitrary precision from the functional calculus with the finite approximations $\bar \phi_N(\{\phi_N(f)\})$, which coincide with $\phi_N(f)$. This is precisely the statement of Corollary~\ref{Cor:FuncCalc}.

In the light of the above arguments, we see the engines behind the approximation results as being: 1) The general fact that they hold for any inverse system of groups, and 2) The functorial properties of the reduced group $C^\ast$-algebras.

\section{Numerical simulations}\label{Sec:Numerics}

In this section we describe the principles of our computer algorithms and present results without entering too much into the actual details of the codes. The latter can be downloaded at \cite{FabianGH}, together with details on how to operate it.

\begin{figure}[t]
\center
\includegraphics[width=\textwidth]{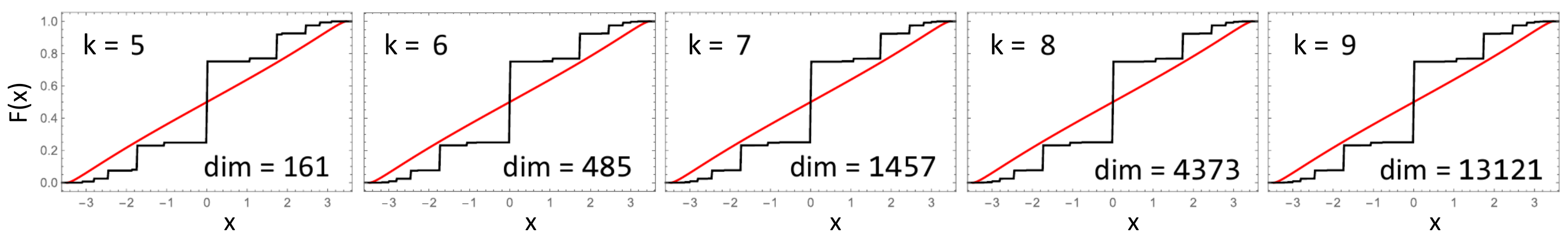}\\
  \caption{\small Eigenvalue counting (black line) for the finite size truncations with open boundary conditions, compared with the exact spectral density (red line) of the adjacency matrix $\Delta$ (see Eqs.~\eqref{Eq:Adjacency} and \eqref{Eq:ExactF}). The number $k$ indicates the word-length at which the truncation was made.
}
 \label{Fig:IDS0}
\end{figure}

\subsection{Misleading boundary conditions}\label{Sec:Misleading} Before we discuss the converging finite approximations, we exemplify what happens when non-specialized boundary conditions are used. In Fig.~\ref{Fig:IDS0}, we report the spectral densities of the finite approximations of the adjacency operator $\Delta$ introduced in Example~\ref{Ex:Adjacency}, generated with Dirichlet (or open) boundary conditions after generation $k=5,\ldots,9$ of the Cayley graph is completed. We recall that, for finite models, the spectral density of a Hamiltonian $H$ evaluated at $x \in \RM \ {\rm Spec}(H)$ amounts to counting the eigenvalues of $H$ below $x$ and dividing the result by the dimension of the Hilbert space. As seen in Fig.~\ref{Fig:IDS0}, the spectral density converges as $k$ is increased, but to a wrong limit. The explanation is that the open boundary of the graph introduces spurious spectrum, in particular, a large number of zero modes. As the size of the truncated graph grows, the spectral density becomes entirely dominated by the spurious spectrum introduced by the open boundary and this is why we see a convergence in Fig.~\ref{Fig:IDS0}. 

We can also try some ad-hoc ``periodic boundary conditions'' by folding the graph into itself. The simplest way to fold the graph, is to close the orbits discussed in subsection~\ref{Sec:CGraphs} and seen in Fig.~\ref{Fig:CayleyDiGraph}. For example, the blue arrow coming out of the node $aaa$ will end in the node $AAA$, the green arow coming out from the node $bba$ will end in the node $BBa$, while the blue arrow coming out of the node $baa$ end in the same node $baa$, and so on. Note that these orbit foldings can be performed for any truncated lattice and, as such, we can investigate what happens as the size of the truncated lattice is increased. At the first sight, this seems like a correct choice because now the generators $X_1$ and $X_2$ act as permutations of the truncated Cayley graph and, due to the universality property of $\FM_2$, the whole $\FM_2$ group is being mapped into a subgroup $P$ of the permutation group of the truncated lattice. From the functorial properties, this lifts to a morphism $\rho$ between the full group $C^\ast$-algebras and, for example, we can be sure that ${\rm Spec}_{C^\ast(P)}(\rho(\delta)) \subset {\rm Spec}_{C^\ast(\FM_2)}(\delta)$, where $\delta$ was introduced in Example~\ref{Ex:Adjacency}. The problem is, however, that 
\begin{equation}
{\rm Spec}_{C^\ast(\FM_2)}(\delta) \neq {\rm Spec}_{C^\ast_r(\FM_2)}(\delta) = {\rm Spec}(\Delta).
\end{equation} 
Indeed, the former consists of the whole interval $[-4,4]$, while the latter consists of the interval $[-2\sqrt{3},2\sqrt{3}]$ (see Example~\ref{Ex:SpurSpec}). As a result, the ad-hoc periodic boundary condition is of little use and, in fact, it can lead to misleading results. This is confirmed by the simulations reported in Fig.~\ref{Fig:IDS00}, where we can see clearly that the spectrum computed with the ad-hoc periodic boundary condition spills over the expected spectral interval. This could have been even anticipated beforehand, because the ad-hoc periodic boundary conditions produce a growing number of closed loops with the increase in the size of the truncated graph and, according to \cite{McKayLG1981}, this is exactly the scenario where the eigenvalue counting does not converges to the correct result. One can also think of the spectra seen in Fig.~\ref{Fig:IDS00} as corresponding to sequences of (not faithful) finite representations of $\delta$, with the latter seen as an element of the full group $C^\ast$-algebra. 

\begin{figure}[t]
\center
\includegraphics[width=\textwidth]{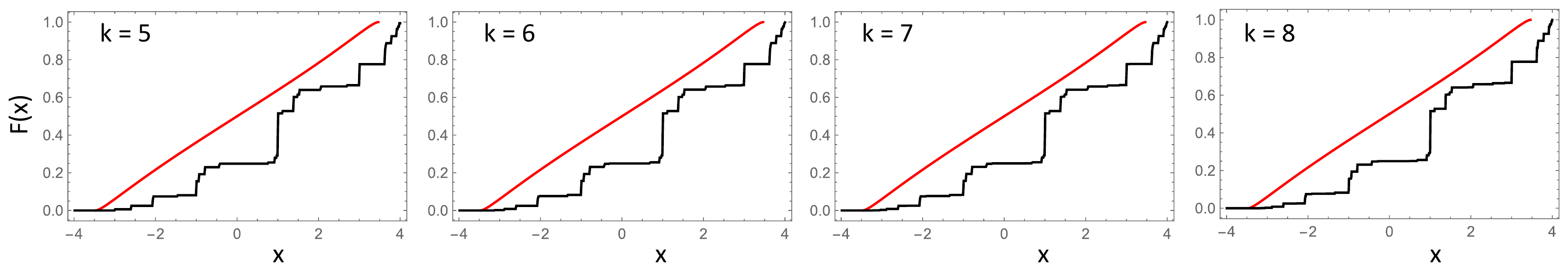}\\
  \caption{\small Same as Fig.~\ref{Fig:IDS0} but for the ad-hoc periodic conditions described in the text. 
}
 \label{Fig:IDS00}
\end{figure}

\begin{figure}[b]
\center
\includegraphics[width=\textwidth]{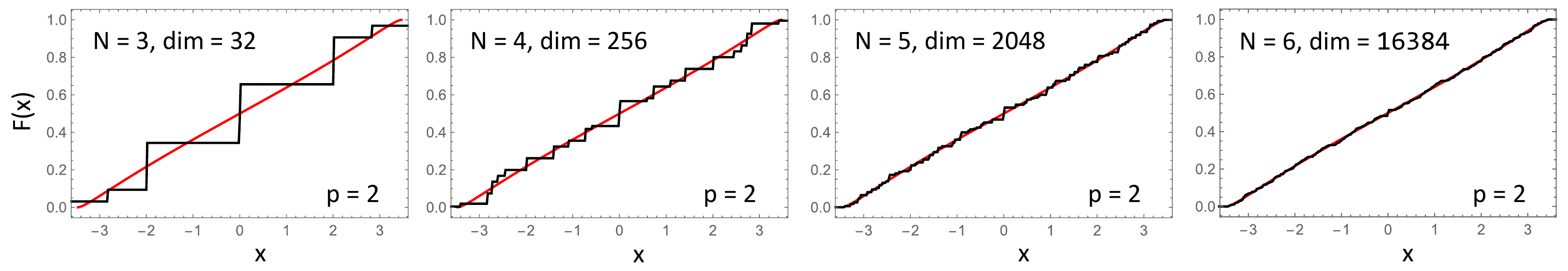}\\
  \caption{\small Spectral densities of the finite approximations $\pi_L(\phi_N(\delta))$ obtained with the converging periodic boundary conditions \eqref{Eq:ConvApp1} with $p=2$, compared with the exact spectral density (red line) of the adjacency matrix. Next to the values of $N$, we show the dimensions of the quotient groups, which determine the dimensions of the Hilbert spaces of the finite approximations.
}
 \label{Fig:IDS1}
\end{figure}

\subsection{Converging finite approximations for a free group}\label{Sec:ConvF2}  

\begin{figure}[t]
\center
\includegraphics[width=\textwidth]{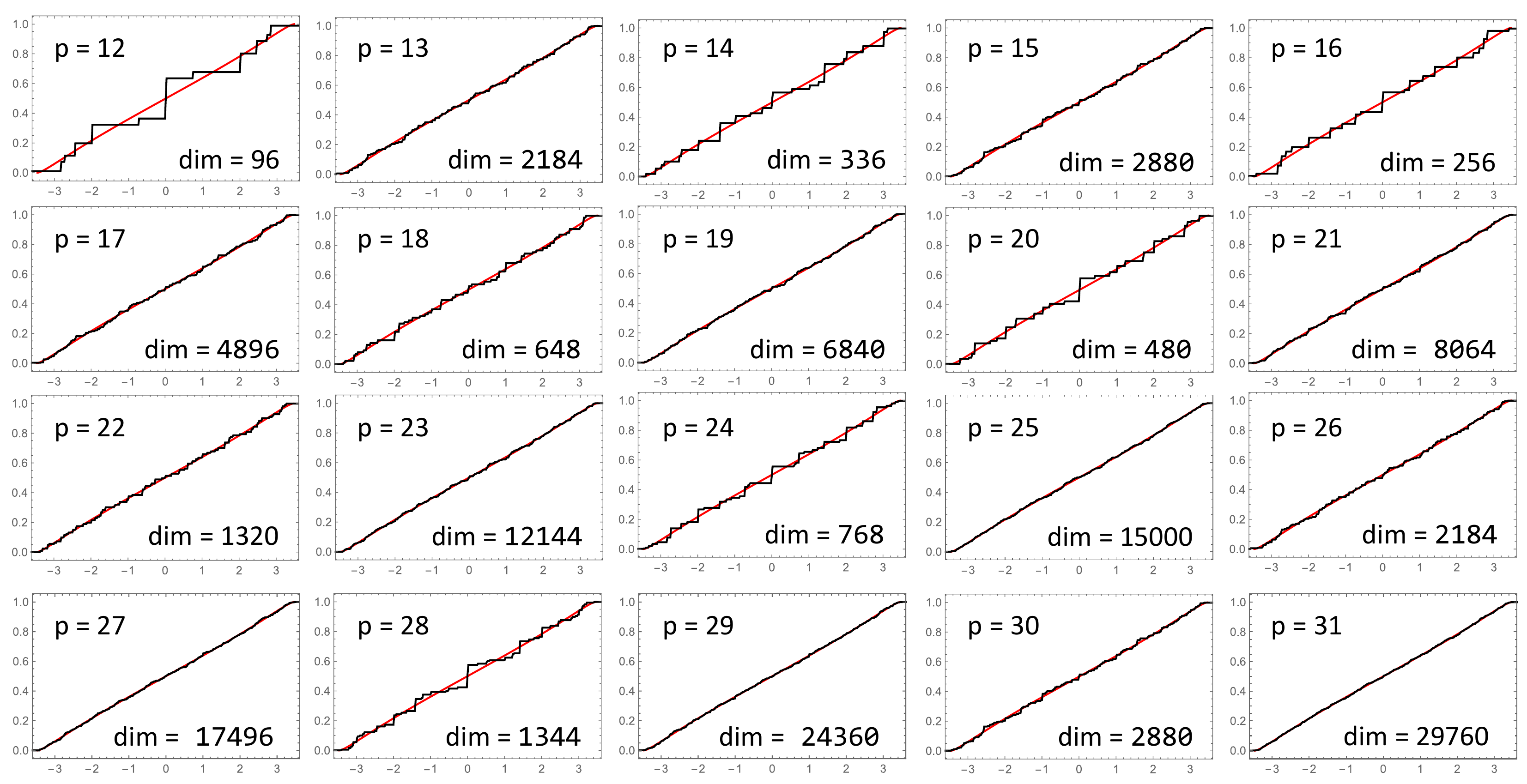}\\
  \caption{\small Spectral densities of the finite approximations $\pi_L(\phi_1(\delta))$ obtained with the converging periodic boundary conditions \eqref{Eq:ConvApp1} for different values of $p$, compared with the exact spectral density (red line) of the adjacency matrix. At the bottom of each panel, we show the dimensions of the quotient groups, which determine the dimensions of the Hilbert spaces of the finite approximations.
}
 \label{Fig:IDS3}
\end{figure}

The general principle that we are relying on is as follows. Suppose we are dealing with a subgroup of ${\rm GL}(n,R)$, where $R$ is a ring with a unit. Then, if $J \subset R$ is an ideal of $R$, then
\begin{equation}\label{Eq:TildeGL}
\widetilde{\rm GL}(n,J) := \big \{ [\delta_{ij}+a_{ij}] \in {\rm GL}(n,R), \ a_{ij} \in J \big \}
\end{equation}
is a normal subgroup of  ${\rm GL}(n,R)$. Indeed, the elements of $\widetilde{\rm GL}(n,J)$ are of the form $I + Q$, with $I$ the identity matrix and $Q$ from the ideal $M_n(J)$ of the ring of $n\times n$ matrices $M_n(R)$. Then, we can easily see that the set~\eqref{Eq:TildeGL} is stable against matrix multiplication and, furthermore, for any $M \in {\rm GL}(n,R)$,
\begin{equation}
M \cdot (I + Q) \cdot M^{-1} = I + M \cdot Q \cdot M^{-1},
\end{equation}
which clearly belongs to $\widetilde{\rm GL}(n,J)$. Also, the class in ${\rm GL}(n,R)/\widetilde{\rm GL}(n,J)$ of an element of ${\rm GL}(n,R)$ can be computed as 
\begin{equation}
[a_{ij}] \cdot \widetilde{\rm GL}(n,J) = [a_{ij} \, {\rm mod}\, J].
\end{equation}
For example, in the case of $n=2$, we have
\begin{equation}
\begin{pmatrix} a & b \\ c & d \end{pmatrix} \cdot \begin{pmatrix} 1 + J & J \\ J & 1 + J \end{pmatrix} = \begin{pmatrix} a + J & b + J \\ c + J & d +J \end{pmatrix} = \begin{pmatrix} a \, {\rm mod}\, J & b \, {\rm mod}\, J \\ c \, {\rm mod}\, J & d \, {\rm mod}\, J \end{pmatrix}.
\end{equation}

We now invoke the matrix presentation of $\FM_2$ supplied in Eq.~\eqref{Eq:F2Mat} and observe that, for $p \in \NM$, $p>1$, 
\begin{equation}\label{Eq:ConvApp1}
\FM_2=G_0 \triangleright G_1=\FM_2 \cap \widetilde{\rm GL}(2,p\ZM) \triangleright G_2=\FM_2 \cap \widetilde{\rm GL}(2,p^2\ZM) \triangleright \cdots
\end{equation}
is a coherent sequence of normal subgroups such that $\cap \, G_N = \{e\}$.  The quotient group $H_N=\FM_2/G_N$ can be generated by producing free products of the generating matrices~\eqref{Eq:F2Mat} and their inverses, applying ${\rm mod}\, p^N$ on all the entries of these free products and, finally, deleting the duplicates. The outcome of this process is a finite set $H_N$ of matrices. Generating the multiplication tables for $H_N$'s amounts to taking pairwise matrix products from $H_N$, taking the ${\rm mod}\, p^N$ on the result and identifying this final result with an existing element from the set $H_N$. Once this information is gathered on a computer, the left regular representations of the elements of $\CM H_N$ can be obtained by standard procedures, in particular, that of the adjacency operators $\phi_N(\delta)$. The algorithm we just outlined is fully documented in \cite{FabianGH}. Fig.~\ref{Fig:IDS1} reports the spectral densities of $\pi_L(\phi_N(\delta))$ for different $N$'s and $p=2$. One can clearly see a fast convergence of the numerical results to the exact spectral density, as the theory predicted. In fact, the mean squared deviations $MSD=0.0053, 0.00076,0.000096,0.000025$ for $N=3,4,5,6$, respectively, of the numerical estimates from the exact spectral density indicate that the convergence happens exponentially fast in $N$. Lastly, for completion, we report in Fig.~\ref{Fig:IDS3} the spectral densities of $\pi_L(\phi_1(\delta)$ for different values of $p$ in Eq.~\eqref{Eq:ConvApp1}. As opposed to the monotone trend seen in Fig.~\ref{Fig:IDS1}, the quality of the numerical outputs seen in Fig.~\ref{Fig:IDS3} fluctuates wildly with $p$. This is because the subgroups generated by this procedure do not form a coherent tower.

The simulations demonstrate that our proposed periodic boundary conditions lead to fast converging algorithms and, in the following, we demonstrate that in fact we can reach a point where we can use the spectral simulations to derive combinatorial informations of the Cayley graph. For example, the number of closed loops of length $n$, starting and ending at the origin, can be estimated as 
\begin{equation}\label{Eq:NeApp1}
N_e(n) = \int x^n {\rm d} F(x) \approx |H_N|^{-1} \sum_{\lambda \in {\rm Spec}(\pi_L(\phi_N(\delta)))} \lambda^n,
\end{equation}
where $|H_N|$ is the cardinal of $H_N$. The right side is simply the spectral function of $\phi_N(\delta)$, computed as explained at the beginning of subsection~\ref{Sec:Misleading}. The outputs of this equation for increasing values of $N$, with $p=2$, are reported and compared with the exact result from Eq.~\eqref{Eq:Ng1}  in Fig.~\ref{Fig:NrPathsFreeGroup}(a). According to these results, we can indeed determined $N_e(n)$ exactly for $n$ as high as $20$, if we work with $N \geq 6$.

\begin{figure}[t]
\center
\includegraphics[width=0.8\textwidth]{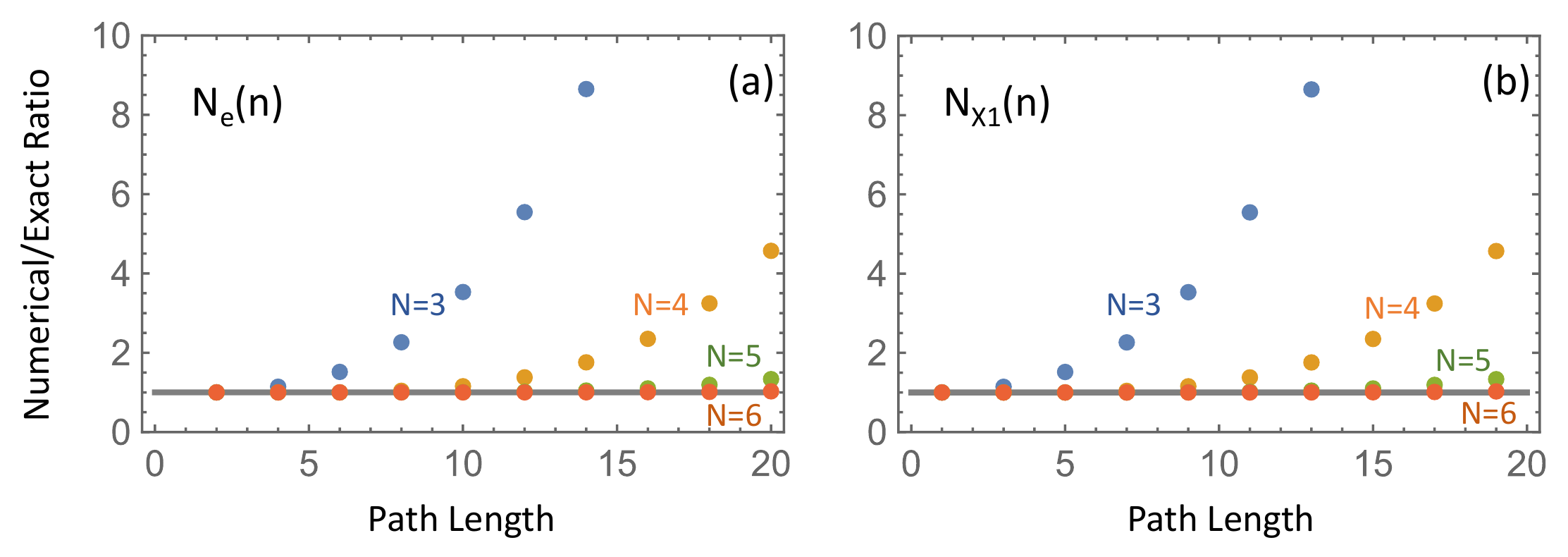}\\
  \caption{\small (a) The ratio between the output of Eq.~\eqref{Eq:NeApp1} and the exact value from Eq.~\eqref{Eq:Ng1} (gray line). (b) The ratio between the output of Eq.~\eqref{Eq:NeApp2} with $g=X_1$ and the exact value from Eq.~\eqref{Eq:Ng2} (gray line). The inputs for both numerical calculations consist of the spectral densities reported in Fig.~\ref{Fig:IDS1}. Eq.~\eqref{Eq:NgApp3} also leads to the same results, for both cases. 
}
 \label{Fig:NrPathsFreeGroup}
\end{figure}

Likewise, the number of paths starting at the node labeled by $g$ and ending at the node labeled by $e$ can be estimated as follows, 
\begin{equation}\label{Eq:NeApp2}
N_g(n) = \int  x^n \, {\rm d} \langle e |\bm E(x) |g \rangle \approx \sum_{\lambda \in {\rm Spec}(\pi_L(\phi_N(\delta)))} \lambda^n \,  \psi_\lambda(e)^\ast \psi_\lambda(\phi_N(g)),
\end{equation}
where $\psi_\lambda$ is the eigenvector of $\phi_N(\delta)$ corresponding to the eigenvalue $\lambda$. Of course, the above expression reduces to the one from Eq.~\eqref{Eq:NeApp1} when $g = e$. Additionally and more straightforwardly, $N_g(n)$ can be estimated using Eq.~\eqref{Eq:Ng0} and its approximations:
\begin{equation}\label{Eq:NgApp3}
N_g(n) = \langle e | \pi_L(\delta)^n |g\rangle \approx \langle e | \pi_L(\phi_N(\delta))^n |\phi_N(g)\rangle
\end{equation}
In Fig.~\ref{Fig:NrPathsFreeGroup}(b), the output of Eq.~\ref{Eq:NeApp2} to the input reported in Fig.~\ref{Fig:IDS1} and its corresponding egenvectors is compared with the exact result from Eq.~\eqref{Eq:Ng2}. Eq.~\eqref{Eq:NgApp3} leads to exactly the same results. The matching between the two, seen there, is a validation that the algorithm, indeed, correctly reproduces the off-diagonal matrix elements of the resolvent.

\begin{figure}[t]
\center
\includegraphics[width=0.8\textwidth]{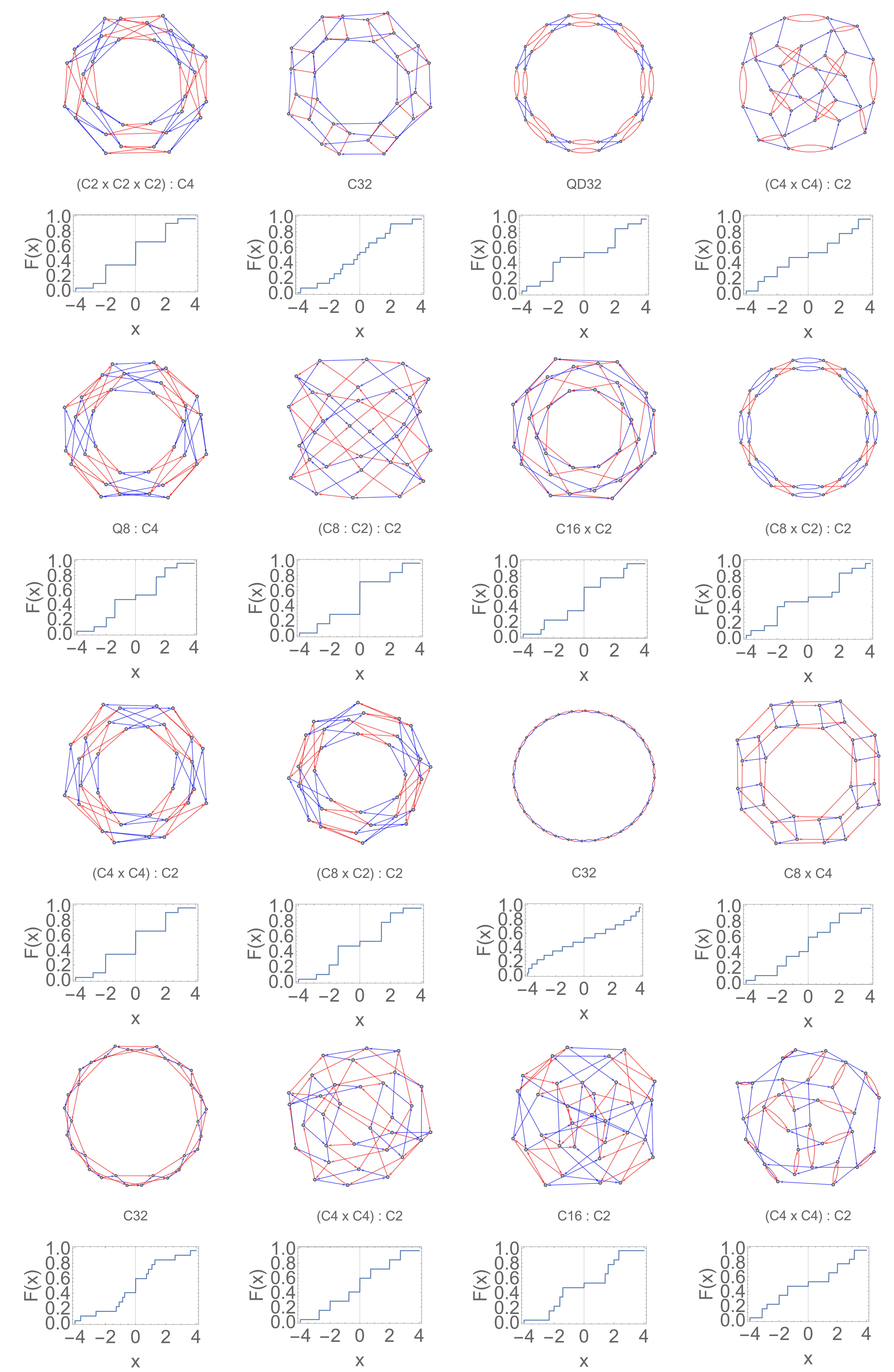}\\
  \caption{\small A kaleidoscope of finite size approximations generated with the software \cite{RoberLINS}.
}
 \label{Fig:FGroups}
\end{figure}

Lastly, we followed \cite{MaciejkoPNAS2022} in using the existing algebra software \cite{RoberLINS}, to produce normal subgroups of $\FM_2$. Our initial interest for doing so was to explore if there are normal subgroups out there that supply a faster convergence than the ones generated by our particular subgroups. So far we have not found any. In fact, our coherent sequences of normal subgroups by far outperform the many other subgroups we explored with the code \cite{RoberLINS}. In Fig.~\ref{Fig:FGroups}, we present a kaleidoscope of small finite Cayley graphs resulted from this exercise, together with the associated spectral densities of the adjacency operator. Let us make clear that the normal groups with larger indices seen in Fig.~\ref{Fig:IDS3} are completely out of the range of \cite{RoberLINS}. Another observation that is relevant for experiments is that these graphs quickly become very complicated when increasing the size of the approximation. This is, unfortunately inevitable because folding a truncated graph into itself produces many connections that cross each other. For this reason, the only way to communicate the finite periodic approximations to the experimental laboratories is through a table akin to a multiplication table of a group, this time containing the strength of the coupling coefficients of the reduced Hamiltonian.

\subsection{Converging finite approximations for a Fuchsian group}\label{Sec:ConvFf2}

\begin{figure}[t]
\center
\includegraphics[width=\linewidth]{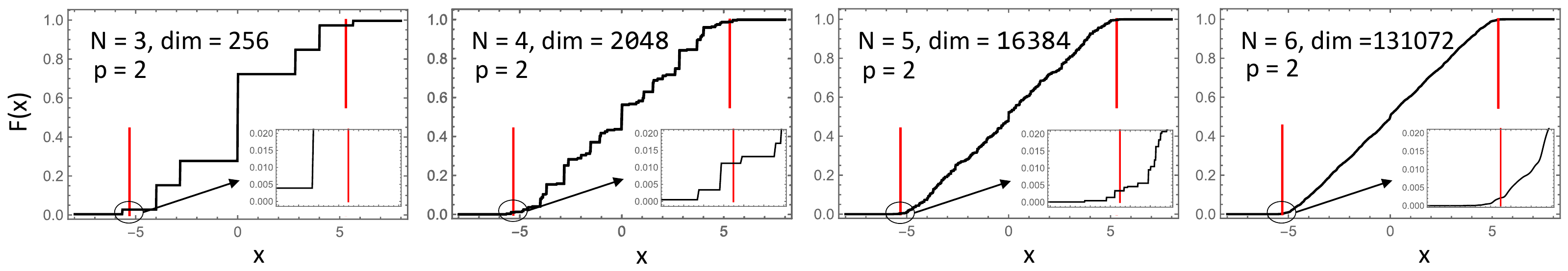}\\
  \caption{\small Spectral densities of the finite approximations $\pi_L(\phi_N(\delta))$ obtained with the converging periodic boundary conditions \eqref{Eq:ConvApp2} with $p=2$. Next to the values of $N$, we show the dimensions of the quotient groups, which determine the dimensions of the Hilbert spaces of the finite approximations. The vertical red bars indicate the existing rigorous estimates of the edges of the spectrum of $\Delta$ on the standard Cayley graph of $\Ff_2$.
}
 \label{Fig:IDS1F}
\end{figure}

We start with the matrix presentation of the Fuchsian group $\Ff_2$ supplied in Eq.~\eqref{Eq:Ff2Mat} and observe again that, for $p \in \NM$, $p>1$, 
\begin{equation}\label{Eq:ConvApp2}
\Ff_2=G_0 \triangleright G_1=\Ff_2 \cap \widetilde{\rm GL}(2,p\ZM+\sqrt{3}\, p\ZM) \triangleright G_2=\Ff_2 \cap \widetilde{\rm GL}(2,p^2\ZM+\sqrt{3}\, p^2\ZM) \triangleright \cdots
\end{equation}
is a coherent sequence of normal subgroups such that $\cap \, G_N = \{e\}$.  The quotient groups $H_N=\Ff_2/G_N$ and their multiplication tables, as well as the left regular representations, can be generated via the procedures very similar to the ones already described in the previous subsection. Hence, we will jump directly to the numerical results. Fig.~\ref{Fig:IDS1F} reports the spectral densities of $\pi_L(\phi_N(\delta))$ for different $N$'s and $p=2$. Unfortunately, we are not aware of combinatorial results similar to what was presented for the free groups and, as such, we cannot directly compare the numerical results against an exact ones. However, we do have indirect checks we can perform. First, it is known that the spectral radius $\mu$ of the adjacency operator satisfies the rigorous bounds \cite{Bartholdi2004,GouezelCPC2015} (see also \cite{KestenTAMS1959,KestenMS1959,PaschkeMZ1993,ZukCM1997,BartholdiCM1997}):
\begin{equation}
8 \times 0.6627 \leq \mu \leq 8 \times 0.6629.
\end{equation}
Thus, we know for sure that the spectrum of $\pi_L(\delta)$ is contained in the interval $[-5.3032,5.3032]$ and contains the interval $[-5.3016,5.3016]$. These are very similar intervals and, in Fig.~\ref{Fig:IDS1F}, they are marked by the red vertical lines. Our numerically computed spectra seem to rapidly converge to this mentioned interval. 

Secondly, it is not difficult to see that the counting of closed walks from Eq.~\eqref{Eq:Ng1} still apply for paths of length less than eight, which explains why the spectral densities in Fig.~\ref{Fig:IDS1F} are similar to the ones reported in Fig.~\ref{Fig:IDS1}. Using that formula, we find $N_e(2)=8$, $N_e(4)=120$ and $N_e(6)=2192$. We also predict $N_e(8)=44264$, a value that equals the output of Eq.~\eqref{Eq:Ng1} plus 16 additional loops (the eight elementary close loops seen in Fig.~\ref{Fig:CayleyGraphs}(b) walked in two different directions). On the other hand, the outputs of Eq.~\eqref{Eq:NeApp1} to the inputs from Fig.~\ref{Fig:IDS1F}, or equivalently the outputs of Eq.~\eqref{Eq:NgApp3}, are 
\begin{equation*}
 \begin{matrix} \ & n=2 & n=4 & n=6 & n=8 &  \ \\
\ & \downarrow &\downarrow & \downarrow & \downarrow &  \ \\
N_e(n): & 8 & 160 & 4736 & 197632  & (N=3)\\
N_e(n): & 8 & 120 & 2384 & 61680  & (N=4) \\
N_e(n): & 8 & 120 & 2192 & 45544  & (N=5) \\
N_e(n): & 8 & 120 & 2192 & 44536  & (N=6) \\
N_e(n): & 8 & 120 & 2192 & 44296  & (N=7)
\end{matrix}
\end{equation*}
These results reproduce $N_e(n)$ exactly, for $n \leq 6$, and also $N_e(8)$ with up to 0.072\% error. We are confident that $N_e(8)$ will eventually converge to its exact value.\footnote{The dimension of the Hilbert space for $N=8$ finite approximation is 1,048,576 and going beyond $N=7$ will require significant computational resources.}

To conclude, our numerical simulations display a rapid convergence and they pass a number of quantitative tests, which emboldens us to assert that the data reported in Fig.~\ref{Fig:IDS1F} is the most accurate representation to date of the exact spectral density of the adjacency operator on the Cayley graph of $\Ff_2$.

\end{document}